\newcommand\abs[1]{\lvert #1\rvert}
\newtheorem{theorem}{Theorem}[section]
\newtheorem{lemma}[theorem]{Lemma}
\newtheorem{PROP}[theorem]{Proposition}
\newtheorem{observation}[theorem]{Observation}
\newtheorem{claim}{Claim}
\newenvironment{clproof}{\begin{list}{}{%
              \setlength{\leftmargin}{5mm}%
              } \item {\it Proof.} }{\hfill$\lozenge$\end{list}\medskip}
\newtheorem{definition}{Definition}
\newtheorem{RRULE}{Reduction Rule}
\newtheorem{BRULE}{Branching Rule}
\newcommand{\YES}{\textsc{Yes}}
\newcommand{\NO}{\textsc{No}}
\newcommand{\cc}{\operatorname{cc}}
\newcommand{\twinclass}{\textbf{tc}(G-S)}
\newcommand{\DHVD}{\textsc{Distance-Hereditary Vertex Deletion}}
\newcommand{\disjointDHVD}{\textsc{Disjoint Distance-Hereditary Vertex Deletion}}
\newcommand{\comp}{\operatorname{comp}}
\newcommand{\btwn}{\operatorname{path}}
\newcommand{\bigoh}{\mathcal{O}}
\begin{document}
\title{A Single-Exponential Fixed-Parameter Algorithm for Distance-Hereditary Vertex Deletion}
\author[Vienna]{Eduard Eiben}
\ead{eduard.eiben@gmail.com}
\author[Vienna]{Robert Ganian}
\ead{rganian@gmail.com}
\author[Kwon]{O-joung Kwon}
\ead{ojoungkwon@gmail.com}
\address[Vienna]{Algorithms and Complexity Group, TU Wien, Vienna, Austria}
\address[Kwon]{Logic and Semantics, Technische Universit\"at Berlin, Germany}
\date{\today}
\begin{abstract}
Vertex deletion problems ask whether it is possible to delete at most $k$ vertices from a graph so that the resulting graph belongs to a specified graph class. Over the past years, the parameterized complexity of vertex deletion to a plethora of graph classes has been systematically researched. Here we present the first single-exponential fixed-parameter tractable algorithm for vertex deletion to distance-hereditary graphs, a well-studied graph class which is particularly important in the context of vertex deletion due to its connection to the graph parameter rank-width. We complement our result with matching asymptotic lower bounds based on the exponential time hypothesis. As an application of our algorithm, we show that a vertex deletion set to distance-hereditary graphs can be used as a parameter which allows single-exponential fixed-parameter tractable algorithms for classical NP-hard problems.
\end{abstract}
\begin{keyword}
 distance-hereditary graphs, fixed-parameter algorithms, rank-width
\end{keyword}

\maketitle

\section{Introduction}\label{sec:introduction}
Vertex deletion problems include some of the best studied NP-hard problems in theoretical computer science, including \textsc{Vertex Cover} or \textsc{Feedback Vertex Set}. In general, the problem asks whether it is possible to delete at most $k$ vertices from a graph so that the resulting graph belongs to a specified graph class. While these problems are studied in a variety of contexts, they are of special importance for the parameterized complexity paradigm~\cite{DowneyF13,CyganFKLMPPS15}, which measures the performance of algorithms not only with respect to the input size but also with respect to an additional numerical parameter. The notion of vertex deletion allows a highly natural choice of the parameter (specifically, $k$), especially for problems where the solution size is not defined or cannot be used. Many vertex deletion problems are known to admit so-called \emph{single-exponential fixed-parameter tractable (FPT) algorithms}, which are algorithms running in time $\bigoh(c^k\cdot n^{\bigoh(1)})$ for input size $n$ and some constant $c$.

Over the past years, the parameterized complexity of vertex deletion to a plethora of graph classes has been systematically researched, 
and in particular, if the target class admits efficient algorithms for many NP-hard problems, then such a class get more attention.
For this reason, classes of graphs of constant treewidth have been studied in detail, 
and Fomin et al.~\cite{FominLMS12} and Kim et al.~\cite{KLPRRSS13} showed that the corresponding \textsc{Treewidth-$t$ Vertex Deletion}\footnote{\textsc{Treewidth-$t$ Vertex Deletion} asks whether it is possible to delete $k$ vertices so that the resulting graph has treewidth at most $t$.}
problem is solvable in single-exponential FPT time.
Interestingly, this problem is a special case of general \textsc{Planar $\mathcal{F}$-Deletion} problems, which ask whether one can hit all of minor models of graphs in $\mathcal{F}$ by at most $k$ vertices, when $\mathcal{F}$ contains at least one planar graph. 
The condition that $\mathcal{F}$ contains a planar graph is essential because it tells that the outside of any solution should have bounded treewidth, by the grid-minor theorem~\cite{RobertsonS1986}.
Several authors~\cite{FominLMS12,KLPRRSS13} have used this fact to design single-exponential FPT algorithms.

The successful development of single-exponential FPT algorithms for \textsc{Treewidth-$t$ Vertex Deletion} motivates us to study \textsc{Rank-width $t$-Vertex Deletion}, which is analogous to \textsc{Treewidth-$t$ Vertex Deletion} but replaces treewidth with rank-width.
Rank-width~\cite{OS2004, Oum05} is a graph parameter introduced for generalizing graph classes of bounded treewidth into dense graph classes; for example,
complete graphs have unbounded treewidth but rank-width $1$.
Generally, classes of graphs of bounded rank-width capture the graphs that can be recursively decomposable along vertex bipartitions $(A,B)$ where the number of distinct neighborhood types from one side to the other is bounded. 
Courcelle, Makowski, and Rotics~\cite{CourcelleMR2000} proved that every MSO$_1$-expressible problem can be solved in
polynomial time on graphs of bounded rank-width (see also the work of Ganian and Hlin\v en\' y~\cite{GanianH10}).

Kant\'e et al.~\cite{KanteKKP2015} observed that \textsc{Rank-width $t$-Vertex Deletion} is fixed parameter tractable using the general framework of Courcelle, Makowski, and Rotics.
However, this algorithm does not provide any reasonable function for $k$. Thus Kant\'e et al. naturally asked whether it is solvable in reasonably better running time.
For instance, it is actually open whether \textsc{Rank-width $t$-Vertex Deletion} can even be solved in time $2^{2^{\mathcal{O}(k)}}n^{\mathcal{O}(1)}$, where $k$ is the size of the deletion set.

In this paper, we focus on graphs of rank-width at most $1$, which are \emph{distance-hereditary graphs}.
Distance-hereditary graphs were introduced by Howorka~\cite{howorka77} in 1977, long before the discovery of rank-width~\cite{OS2004} and the observation by Oum~\cite{Oum05} that the class of graphs of rank-width at most $1$ are precisely distance-hereditary graphs.
Bandelt and Mulder~\cite{BM1986} found all the minimal induced subgraph obstructions for distance-hereditary graphs. 
Distance-hereditary graphs are naturally related to split decompositions, where they are exactly the graphs that are completely decomposable into stars and complete graphs~\cite{Bouchet1988a}.
We explain these structural properties in more detail in Section~\ref{sec:prelim}.
This structure has led to the development of a number of algorithms for distance-hereditary graphs~\cite{CogisT2005, HungC2005, HsiehHH2006, RuoM2007, MarcD2007, NakanoUU2007, GassnerH2008}.
Given the above, we view the vertex deletion problem for distance-hereditary graphs as a first step towards handling \textsc{Rank-width-$t$ Vertex Deletion}.

\subsection*{Our Contribution.}
A graph $G$ is called \emph{distance-hereditary} if for every connected induced subgraph $H$ of $G$ and every $v,w\in V(H)$, 
the distance between $v$ and $w$ in $H$ is the same as the distance between $v$ and $w$ in $G$. 
We study the following problem.

\medskip
\noindent
\fbox{\parbox{0.97\textwidth}{
\DHVD \\
\emph{Instance:} A graph $G$ and an integer $k$. \\
\emph{Parameter:} $k$. \\
\emph{Task:} Is there a vertex set $Q\subseteq V(G)$ with $|Q|\le k$ such that $G-Q$ is distance-hereditary? }}
\vskip 0.2cm

The main result of this paper is a single-exponential FPT algorithm for \DHVD.

\begin{theorem}
\DHVD\ can be solved in time $\bigoh(37^k\cdot |V(G)|^7(|V(G)|+|E(G)|))$.
\end{theorem}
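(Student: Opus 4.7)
The plan is to combine iterative compression with a structural branching scheme guided by the Bandelt--Mulder characterization of distance-hereditary graphs (those having no induced $C_\ell$ for $\ell \geq 5$ and no induced house, gem, or domino) and by the split decomposition of distance-hereditary graphs into stars and cliques.

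First, I would reduce \DHVD\ to its disjoint variant \disjointDHVD\ via standard iterative compression: process the vertices one at a time, maintaining a deletion set of size at most $k+1$; at each compression step, guess which vertices of the old set belong to the new solution, and search for a disjoint solution of the corresponding reduced budget. This contributes a factor of roughly $2^{k+1}$ and reduces the original problem to polynomially many instances of \disjointDHVD, where we are given $(G, S, k)$ with $G - S$ distance-hereditary and must find $Q \subseteq V(G) \setminus S$ of size at most $k$ such that $G - Q$ is distance-hereditary.

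For \disjointDHVD, the strategy is exhaustive branching. Constant-size obstructions (house, gem, domino, $C_5$, $C_6$) can be located in polynomial time; whenever one is present, branch on deleting one of its non-$S$ vertices, which must exist since $G - S$ is itself distance-hereditary. The hard case is long induced cycles $C_\ell$ for $\ell \geq 7$, which are necessarily forced to cross $S$. The plan for this case is to analyze, for each vertex $v \in S$ that the algorithm tentatively retains, how $N(v) \cap (V(G) \setminus S)$ projects onto the split decomposition of $G - S$. If this projection is incompatible with a natural twin-class pattern of the decomposition, then $v$ is \bad\ for some induced long hole, and the structural theory of split decompositions should guarantee a bounded-size witness from which a constant-size candidate set for deletion can be extracted; branching on this set yields a rule of bounded factor. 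Suitable reduction rules (for example, collapsing true twins and removing ``safely irrelevant'' neighbors of $S$) should keep the instance sufficiently clean between branchings.

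The main obstacle will be twofold: first, to show that every long-hole obstruction can be localized to a bounded-size witness in the split decomposition, so that branching on a bounded candidate set is justified; and second, to tune the combined branching vectors, together with the $2^{k+1}$ iterative-compression overhead, so as to match the stated $\bigoh(37^k)$ dependence on $k$. The polynomial overhead of order $|V(G)|^7(|V(G)|+|E(G)|)$ is expected to come from repeatedly (re)computing and querying the split decomposition of $G-S$, enumerating small obstructions over tuples of vertices, and verifying distance-heredity when applying branching and reduction rules.
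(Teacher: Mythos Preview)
Your high-level framework (iterative compression, then branching on small obstructions, then structural work on the split decomposition of $G-S$) matches the paper's architecture, but two key technical ideas are missing, and without them the argument does not go through.

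First, you are missing the second branching parameter. The paper does \emph{not} solve \disjointDHVD\ in time $c^k$ for some constant $c$; it solves it in time $\bigoh(6^{k+\cc(G[S])}\cdot n^6(n+m))$, using the auxiliary measure $\mu=k+\cc(G[S])$. Branching Rule~\ref{brule:reducecomponent} is essential: whenever a small connected set $X\subseteq V(G-S)$ touches two components of $G[S]$, one either deletes a vertex of $X$ (reducing $k$) or merges $X$ into $S$ (reducing $\cc(G[S])$). This rule is what forces the neighbourhoods in $S$ of nearby vertices of $G-S$ to lie in a single $S$-component, which in turn is what makes all of the later structural lemmas (e.g., the bypassing-vertex arguments) work. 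The $37^k$ bound then arises precisely from the compression arithmetic: after guessing $I\subseteq S$ of size $i$, one has $|S\setminus I|\le k+1-i$, hence $\cc(G[S\setminus I])\le k+1-i$, so the disjoint call costs $6^{(k-i)+(k+1-i)}=\bigoh(36^{k-i})$, and $\sum_i\binom{k+1}{i}36^{k-i}=\bigoh(37^k)$. Your ``$2^{k+1}$ overhead'' account does not produce this constant and, more importantly, without the $\cc(G[S])$ rule you have no mechanism to bound the branching depth.

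Second, your plan for long holes is the wrong shape. You propose to locate, for each retained $v\in S$, a bounded-size witness in the split decomposition and \emph{branch} on it. The paper never branches on long holes. After the two branching rules are exhausted (so $G$ has no small obstruction and the connectivity conditions on $S$ hold), the algorithm applies only polynomial-time \emph{reduction} rules on the split decomposition of $G-S$: removing irrelevant vertices via the bypassing-vertex criterion, swapping twin adjacencies to merge bags, and a dedicated leaf-bag contraction. The technical core (Lemmas~\ref{lem:twinclassreduction}--\ref{lem:simplifynearsattached2} and Propositions~\ref{prop:anticomplete1}--\ref{prop:finish}) shows that in any reduced instance some reduction rule still applies, until $G-S$ is empty. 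Thus long holes are eliminated purely by simplification, not by branching; your ``bounded candidate set'' for long holes is neither needed nor, as stated, substantiated.
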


We note that this solves an open problem of Kant\' e, Kim, Kwon, and Paul~\cite{KanteKKP2015}. The core of our approach exploits two distinct characterizations of distance-hereditary graphs: one by forbidden induced subgraphs (obstructions), and the other by admitting a special kind of split decomposition~\cite{Cunningham1982}.

The algorithm can be conceptually divided into three parts. 

\begin{enumerate}
\item \textbf{Iterative Compression}. This technique allows us to reduce the problem to the easier \textsc{Disjoint Distance-Hereditary Vertex Deletion}, where we assume that the instance additionally contains a certain form of advice to aid us in our computation. Specifically, this advice is a vertex deletion set $S$ to distance-hereditary graphs which is disjoint from and slightly larger than the desired solution.
  
  \item \textbf{Branching Rules}. We exhaustively apply two branching rules to simplify the given instance of \textsc{Disjoint Distance-Hereditary Vertex Deletion}. At a high level, these branching rules allow us to assume that the resulting instance contains no small obstructions and furthermore that certain connectivity conditions hold on $G[S]$.
  
  \item \textbf{Simplification of Split Decomposition}. We compute the split decomposition of $G-S$ and exploit the properties of our instance $G$ guaranteed by branching to prune the decomposition. In particular, we show that the connectivity conditions and non-existence of small obstructions mean that $S$ must interact with the split decomposition of $G-S$ in a special way, and this allows us to identify irrelevant vertices in $G-S$. This is by far the most technically challenging part of the algorithm.
\end{enumerate}

A more detailed explanation of our algorithm is provided in Section~\ref{sec:general}, after the definition of required notions. We complement this result with an algorithmic lower bound which rules out a subexponential FPT algorithm for \textsc{Distance-Hereditary Vertex Deletion} under well-established complexity assumptions. We also note that the naive approach of simply hitting all known ``obstructions'' (i.e., forbidden induced subgraphs) for distance-hereditary graphs does not lead to an FPT algorithm. Indeed, the set of induced subgraph obstructions for distance-hereditary graphs includes induced cycles of length at least $5$.
Heggernes et al.~\cite{Heggernes2013} showed that 
the problem asking whether it is possible to delete $k$ vertices so that the resulting graph has no induced cycles of length at least $5$ is W[2]-hard.
Therefore, unless the Exponential Time Hypothesis fails, one cannot obtain a single-exponential FPT algorithm for \DHVD\ by simply finding and hitting all forbidden induced subgraphs for the class.

The paper is organized as follows. Section~\ref{sec:prelim} contains the necessary preliminaries and notions required for our results. In Section~\ref{sec:general}, we set the stage for the process of simplifying the split decomposition, which entails the definition of \textsc{Disjoint Distance-Hereditary Vertex Deletion}, introduction of our branching rules, and a few technical lemmas which will be useful throughout the later sections.
Section~\ref{sec:rules} then introduces and proves the safeness of five polynomial-time reduction rules; crucially, the exhaustive application of these rules guarantees that the resulting instance will have a certain ``inseparability'' property. Using this structural result, we prove that one of reduction rules is applicable until the remaining instance is trivial. Finally, the proof of our main result as well as the corresponding lower bound are presented in Section~\ref{sec:completing}. Section~\ref{sec:completing} also illustrates one potential application of our result: we show that a vertex deletion set to distance-hereditary graphs can be used as a parameter which allows single-exponential FPT algorithms for classical NP-hard problems.

\section{Preliminaries}\label{sec:prelim}

All graphs in this paper are simple and undirected.
For a graph $G$, let $V(G)$ and $E(G)$ denote the vertex set and the edge set of $G$, respectively. 
For $S\subseteq V(G)$, let $G[S]$ denote the subgraph of $G$ induced by $S$. For $v\in V(G)$ and $S\subseteq V(G)$, let $G- v$ be the graph obtained from $G$ by removing $v$, and let $G-S$ be the graph obtained by removing all vertices in $S$. 
For $F\subseteq E(G)$, let $G-F$ denote the graph obtained from $G$ by removing all edges in $F$.
For $v\in V(G)$, the set of neighbors of $v$ in $G$ is denoted by $N_G(v)$.
For $A\subseteq V(G)$, let $N_G(A)$ denote the set of vertices in $G-A$ that have a neighbor in $A$.
We denote by $\cc(G)$ the number of connected components of $G$.
An edge $e$ of a connected graph $G$ is a \emph{cut edge} if the graph obtained from $G$ by removing $e$ is disconnected.

The \emph{length} of a path is the number of edges on the path.
For $v\in V(G)$ and a subgraph $H$ of $G-v$, we say $v$ is adjacent to $H$ if it has a neighbor in $H$.
A \emph{star} is a tree with a distinguished vertex, called the \emph{center}, adjacent to all other vertices. A \emph{complete graph} is a graph with all possible edges.

Two vertices $v$ and $w$ in a graph $G$ are called \emph{twins} if they have the same set of neighbors in $V(G)\setminus \{v,w\}$. 
For two vertex sets $A$ and $B$, we say that 
\begin{itemize}
\item $A$ is \emph{complete} to $B$ if for every $a\in A$, $b\in B$, $a$ is adjacent to $b$, 
\item $A$ is \emph{anti-complete} to $B$ if for every $a\in A$, $b\in B$, $a$ is not adjacent to $b$.
\end{itemize}

In parameterized complexity, an instance of a parameterized problem consists in a pair $(x,k)$, where $k$ is a secondary measurement, called the \emph{parameter}. 
A parameterized problem $Q\subseteq \Sigma^* \times N$ is \emph{fixed-parameter tractable} (\emph{FPT}) if there is an algorithm which decides whether $(x,k)$ belongs to $Q$ in time $f(k)\cdot \abs{x}^{\mathcal{O}(1)}$ for some computable function $f$.

\subsection{Distance-Hereditary Graphs}
A graph $G$ is called \emph{distance-hereditary} if for every connected induced subgraph $H$ of $G$ and every $v,w\in V(H)$, 
the distance between $v$ and $w$ in $H$ is the same as the distance between $v$ and $w$ in $G$. 
For instance, the induced cycle $c_1c_2c_3c_4c_5c_1$ is not distance-hereditary, because the distance from $c_1$ to $c_3$ is $2$, 
but if we take an induced subgraph on $\{c_1, c_3, c_4, c_5\}$, then the distance becomes $3$.
This graph class was first introduced by Howorka~\cite{howorka77}, and deeply studied by Bandelt and Mulder~\cite{BM1986}.
There are several other, equivalent characterizations of distance-hereditary graphs. One of the most prominent ones links it to the structural parameter \emph{rank-width}~\cite{Oum05}; specifically, distance-hereditary graphs are precisely the graphs of rank-width $1$~\cite{Oum05}.
However, in this paper we will exploit two other characterizations of the graph class: one by forbidden induced subgraphs (given below), and one via split decompositions (given in the following subsection).

The house, the gem, and the domino graphs are depicted in Figure~\ref{fig:obsdh}.
A graph isomorphic to one of the house, the gem, the domino, and induced cycles of length at least $5$ will be called a \emph{distance-hereditary obstruction} or shortly a \emph{DH obstruction}.
A DH obstruction with at most $6$ vertices will be called a \emph{small DH obstruction}.
Note that every DH obstruction does not contain any twins.

\begin{figure}[t]
\centerline{\includegraphics[scale=1]{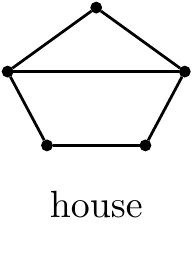} \quad\quad
\includegraphics[scale=1]{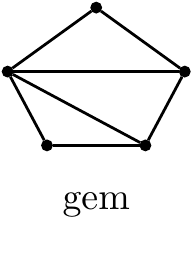} \quad\quad
\includegraphics[scale=1]{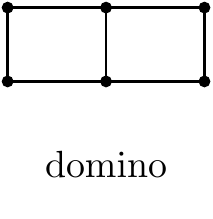} }
\caption{Small DH obstructions which are not cycles.}
\label{fig:obsdh}
\end{figure}

\begin{theorem}[Bandelt and Mulder~\cite{BM1986}]
A graph is distance-hereditary if and only if it contains no DH obstructions as induced subgraphs.
\end{theorem}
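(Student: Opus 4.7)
The plan is to prove both implications of the biconditional. The forward direction --- every distance-hereditary graph excludes each of the four obstruction families as an induced subgraph --- is a direct verification. For an induced cycle $c_1 c_2 \cdots c_k c_1$ with $k \geq 5$ the pair $c_1, c_3$ has distance $2$ in the cycle but distance $k-2 \geq 3$ in the induced subgraph on $V(C) \setminus \{c_2\}$; analogous easy witnesses work for the house, the gem, and the domino.

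For the converse I would proceed by contrapositive. Assume $G$ is not distance-hereditary, so there exists a connected induced subgraph $H$ of $G$ and vertices $u, v \in V(H)$ with $d_H(u, v) > d_G(u, v)$. Choose such a triple $(H, u, v)$ minimizing $\ell := d_H(u, v)$, and subject to that, $|V(H)|$. Let $P = x_0 x_1 \cdots x_\ell$ with $x_0 = u$ and $x_\ell = v$ be a shortest $u$-$v$ path in $H$. A standard minimality argument shows that $V(H) = V(P)$: otherwise $G[V(P)]$ would be a strictly smaller connected induced subgraph with the same distance gap. Hence $H$ is itself an induced (chordless) path in $G$, and since $uv \notin E(H) = E(G[V(H)])$ we obtain $\ell \geq 3$. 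Let $Q = y_0 y_1 \cdots y_k$ be a shortest $u$-$v$ path in $G$ of length $k = d_G(u, v) < \ell$; being a shortest path, $Q$ is chordless, and $k \geq 2$.

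The next step is to argue that $Q$ and $P$ share only their endpoints. If some internal vertex $y_i$ coincided with some $x_j$, then the $u$-$x_j$ sub-path of $P$ has length $j$ while $d_G(u, x_j) \leq i$, and the $x_j$-$v$ sub-path has length $\ell - j$ while $d_G(x_j, v) \leq k - i$; since $j + (\ell - j) > i + (k - i)$, at least one of these inequalities is strict, producing a witness with smaller $\ell$ and contradicting minimality. Consequently $C := P \cup Q$ is a cycle of $G$ on $\ell + k \geq 5$ vertices, chordless except possibly for edges joining an internal vertex of $P$ to one of $Q$. If $C$ has no such chord it is an induced cycle of length at least $5$ and we are done.

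The remainder of the argument is a structural analysis of the chord pattern. Each chord $x_i y_j$ creates a shorter side-cycle through $u$ or through $v$; one shows, by iterating the minimality argument on the resulting sub-witnesses, that the only configurations that survive have $\ell \in \{3, 4\}$ and $k \in \{2, 3\}$, so the ambient cycle $C$ has $5$, $6$, or $7$ vertices. A finite enumeration of chord patterns on these small cycles then exhibits an induced copy of the house, the gem, the domino, or $C_5$ or $C_6$; for instance, when $\ell = 3$ and $k = 2$ the $5$-cycle $u\,x_1\,x_2\,v\,y_1\,u$ yields an induced $C_5$ when chord-free, a house when exactly one of $x_1 y_1, x_2 y_1$ is present, and a gem when both are. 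The main obstacle will be making this base-case enumeration exhaustive: one must verify that \emph{every} admissible chord pattern on the inherited $5$-, $6$-, or $7$-cycle either forces one of the four listed obstructions as an induced subgraph or is ruled out by a further sharpening of the minimality argument.
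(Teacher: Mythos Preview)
The paper does not prove this theorem; it is quoted as a known result of Bandelt and Mulder and given only a citation, so there is no in-paper argument to compare against.

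On the substance of your proposal: the forward direction and the setup for the converse (minimality of $\ell$, reducing $H$ to the induced path $P$, internal disjointness of $P$ and $Q$, forming the cycle $C$) are correct, and from minimality one does obtain $\ell \le k+2$. The gap is the assertion that iterating the minimality argument forces $\ell \in \{3,4\}$ and $k \in \{2,3\}$. Minimality of $\ell$ does not bound $k$: for every $k \ge 2$ there are graphs whose smallest witness has $d_G(u,v)=k$ (the domino already gives $k=2$, $\ell=4$, and longer ``ladders'' give larger $k$). What minimality \emph{does} yield, with more care, is a rigid chord pattern rather than a small cycle. For $\ell = k+1$ and $k \ge 3$, whenever $x_i y_j$ is a chord with $1 \le j \le k-2$ one exhibits the induced path $y_j x_{j+1} \cdots x_{\ell}$ (or its mirror from $v$) as a strictly shorter witness; this leaves only the chords $x_1 y_1$ and $x_k y_{k-1}$, and then $x_1 x_2 \cdots x_k y_{k-1} \cdots y_1 x_1$ is an induced $C_{2k-1}$. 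For $\ell = k+2$ and $k \ge 3$, the same style of argument rules out every chord, so $C$ itself is an induced $C_{2k+2}$. Only for $k=2$ do these witness paths fail to be induced (because $y_1$ is adjacent to both $u$ and $v$), and there the small enumeration on the $5$- or $6$-cycle gives $C_5$, the house, the gem, $C_6$, or the domino. So the endgame is not a finite check on cycles of at most seven vertices but a structural argument for general $k$; you correctly flag this step as the main obstacle but misidentify its shape.
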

We state an observation which will be useful later on.

\begin{observation}
\label{obs:sub}
For any DH obstruction $H$ and any edge $e$ in $H$, it holds that the graph $H'$ obtained by subdividing $e$ also contains a DH obstruction as an induced subgraph.
\end{observation}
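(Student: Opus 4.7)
My plan is a direct case analysis on the form of the DH obstruction $H$: by the Bandelt--Mulder characterization quoted above, $H$ is either an induced cycle of length at least five, or one of the three fixed small graphs (the house, the gem, or the domino).

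If $H$ is an induced cycle $C_n$ with $n \geq 5$, then subdividing any edge produces $C_{n+1}$, which has length at least six and is itself a DH obstruction; thus $H'$ already \emph{is} a DH obstruction, and we are done.

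If $H$ is one of the three small obstructions, I would first list the orbits of $E(H)$ under the automorphism group of $H$; each of these graphs has at most four edge orbits, which reduces the problem to a small constant number of subcases. For each orbit I pick a representative edge $e$, write $x$ for the degree-two vertex inserted in subdividing $e$, and exhibit a specific vertex subset $U \subseteq V(H')$ of size five or six (consisting of $x$ together with a well-chosen subset of $V(H)$) such that $H'[U]$ is an induced DH obstruction. In most subcases the natural witness is an induced $C_5$, obtained by routing the relevant cycle of $H$ through $x$ in place of $e$. There are two families of exceptions: for the ``roof'' edges of the house, the witness is instead an induced domino consisting of $x$ together with all five vertices of the original house; and for every edge orbit of the gem, no induced $C_5$ arises through $x$, so one must instead verify that $x$ together with four suitably chosen vertices of the gem induces a house (explicitly, a $C_4$ through $x$ and a triangle sharing an edge with that $C_4$).

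The only real obstacle is the mechanical bookkeeping: for each representative $e$ and candidate $U$ one must check that the edges induced by $H'$ on $U$ are exactly those of the claimed obstruction, with no spurious chords coming from the remaining edges of $H$. Since each small obstruction has at most seven edges and at most four edge orbits, this amounts to a bounded, finite verification rather than a deeper structural argument.
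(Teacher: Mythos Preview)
Your case analysis is correct and constitutes a valid proof of the observation. The paper itself states Observation~\ref{obs:sub} without proof, treating it as a routine fact that the reader can verify, so there is no ``paper's proof'' to compare against beyond the implicit expectation of exactly the kind of finite check you outline.

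A couple of remarks on the details. Your identification of the exceptional subcases is accurate: subdividing a roof edge of the house does yield a domino on all six vertices, and every edge orbit of the gem gives a house rather than a $C_5$ (since, as you note, the universal vertex of the gem creates chords through any candidate $5$-cycle passing through the subdivision vertex). For the domino, all three edge orbits---the chord, the edges incident to a degree-$3$ vertex, and the edges between degree-$2$ vertices---produce an induced $C_5$ through the new vertex, so the $C_5$ witness suffices uniformly there. The bookkeeping you flag as the ``only real obstacle'' is indeed just that: a bounded verification with no hidden subtlety, and your orbit reduction keeps it to roughly ten subcases total.
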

The following lemma will be used to find DH obstructions later on.

\begin{lemma}[Kant\`e, Kim, Kwon, and Paul, Lemma 4.3 of \cite{KanteKKP2015}]\label{lem:dhobs}
Let $G$ be a graph obtained from an induced path of length at least $3$ by adding a vertex $v$ adjacent to its end vertices where
$v$ may be adjacent to some internal vertices of the path.
Then $G$ has a DH obstruction containing $v$.
In particular, if the given path has length at most $4$, then $G$ has a small DH obstruction containing $v$.
\end{lemma}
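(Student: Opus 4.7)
The plan is to analyze where the neighbors of $v$ sit on the path $P = p_0 p_1 \cdots p_\ell$ and split the argument according to the largest ``gap'' between consecutive neighbors. Let $A = \{i : v\,p_i \in E(G)\}$; by hypothesis $\{0,\ell\} \subseteq A$.

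The easy case is when two consecutive elements of $A$ are at least $3$ apart. Choose consecutive $i, j \in A$ with $j - i \geq 3$; since $P$ is induced and $v$ has no neighbor among $p_{i+1},\dots,p_{j-1}$, the cycle $v\,p_i\,p_{i+1}\cdots p_j\,v$ is induced of length $j - i + 2 \geq 5$, so it is a DH obstruction containing $v$. When $\ell \leq 4$ this cycle uses at most $\ell + 2 \leq 6$ vertices and is therefore small.

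In the main case every consecutive gap in $A$ is at most $2$. If every gap equals $1$, then $v$ is universal on $P$ and $\{v, p_0, p_1, p_2, p_3\}$ induces the \emph{gem} (using $\ell \geq 3$). Otherwise, pick $a$ with $a, a+2 \in A$ and $a+1 \notin A$, yielding an induced $C_4$ on $\{v, p_a, p_{a+1}, p_{a+2}\}$. Because $\ell \geq 3$ and $\{0,\ell\} \subseteq A$, $v$ has at least one further neighbor on $P$ lying strictly outside this $C_4$; after possibly reversing $P$, I may assume the closest such neighbor lies to the left. If it is $p_{a-1}$, then $\{v, p_{a-1}, p_a, p_{a+1}, p_{a+2}\}$ induces a \emph{house} (triangle $v p_{a-1} p_a$ glued to the $C_4$ along $v p_a$). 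If instead it is $p_{a-2}$ (forcing $p_{a-1} \notin A$ by the gap assumption), then $\{v, p_{a-2}, p_{a-1}, p_a, p_{a+1}, p_{a+2}\}$ induces a \emph{domino} (two $C_4$'s glued along $v p_a$). The non-adjacencies needed to make these induced subgraphs equal the named obstructions all follow from $P$ being induced and from the chosen gap conditions.

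The main technical obstacle is not depth but careful handling of boundary cases in the second case: when the chosen gap lies flush with an endpoint of $P$ (so one cannot find a further neighbor on that side), the other endpoint of $P$ combined with $\ell \geq 3$ still guarantees an extra neighbor on the opposite side, and the same analysis applies after reversal. Since every obstruction produced in the second case uses at most six path vertices together with $v$, it is small; together with the first case, this yields the ``in particular'' refinement whenever $\ell \leq 4$.
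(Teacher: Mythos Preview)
Your argument is correct. The case split on the maximum gap between consecutive indices in $A$ is clean: a gap of size $\geq 3$ gives an induced cycle of length $\geq 5$ through $v$, and in the remaining case your identification of the gem (all gaps $1$), the house (adjacent gap-$1$ and gap-$2$), and the domino (two adjacent gap-$2$'s) is accurate. The reversal trick to force an extra neighbor on the left of the chosen gap is sound, and the verification that the resulting subgraphs are induced follows from $P$ being induced together with your gap choices.

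The paper, however, does not prove this lemma at all: it simply cites Lemma~4.3 of Kant\'e, Kim, Kwon, and Paul for the first assertion, and for the ``in particular'' clause it just observes that when the path has length at most $4$ the whole graph $G$ has at most $6$ vertices, so any DH obstruction inside it is automatically small. Your write-up is therefore a self-contained proof of the cited external result rather than a match for the paper's argument. One minor wording issue: ``at most six path vertices together with $v$'' reads ambiguously; you mean at most six vertices in total (indeed at most five path vertices plus $v$), and in fact every obstruction produced in your second case is small for \emph{all} $\ell$, which is slightly stronger than needed.
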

\begin{proof}
The first statement was shown in Lemma 4.3 of \cite{KanteKKP2015}. If the given path has length at most $4$, then $G$ has at most $6$ vertices, and thus $G$ contains a small DH obstruction containing $v$.
\end{proof}

\subsection{Split decompositions}

We follow the notations used by Bouchet~\cite{Bouchet1988a}.
A \emph{split} of a connected graph $G$ is a vertex partition $(X,Y)$ of $G$ such that $\abs{X}\ge 2, \abs{Y}\ge 2$, and $N_G(Y)$ is complete to $N_G(X)$. 
See Figure~\ref{fig:examplesplit} for an example.
Splits are also called \emph{1-joins}, or simply \emph{joins}~\cite{GSH1989}. 
A connected graph $G$ is called a \emph{prime graph} if $\abs{V(G)}\ge 5$ and it has no split.

\begin{figure}[t]
\centerline{\includegraphics[scale=0.42]{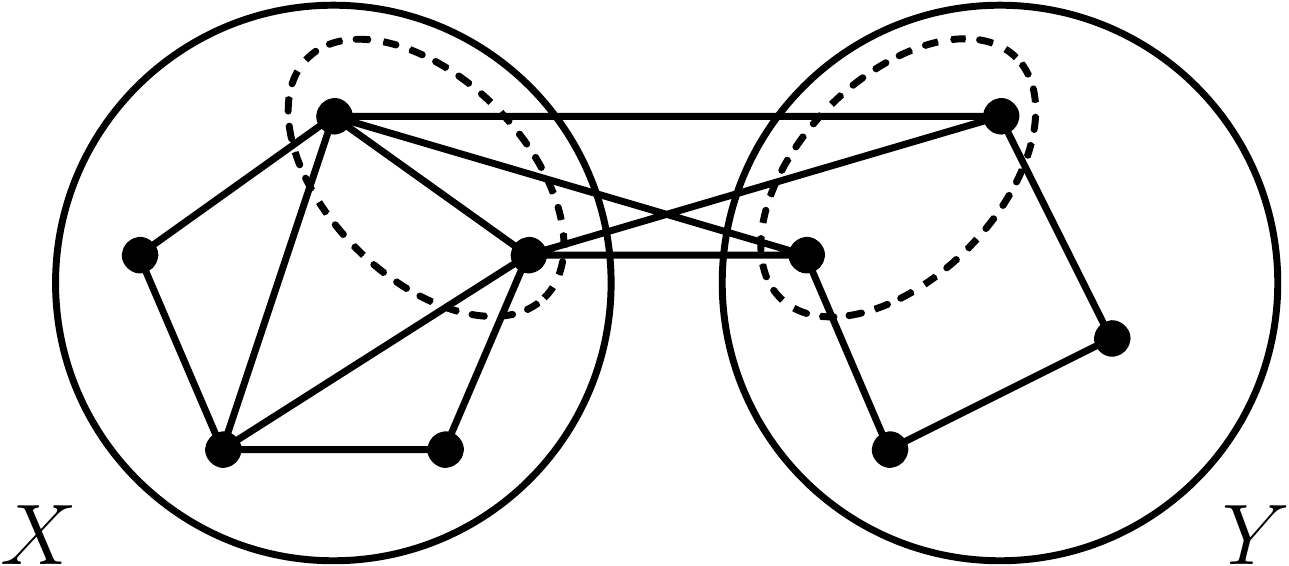}\qquad\qquad  \includegraphics[scale=0.42]{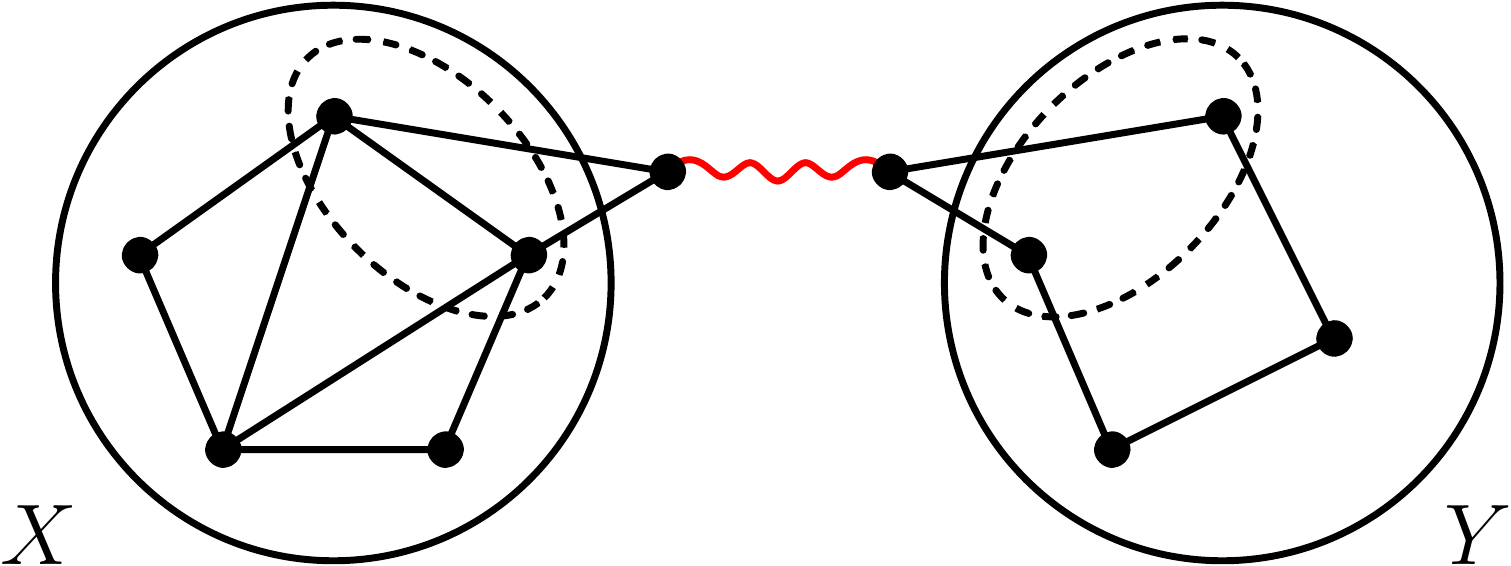}}
\caption{An example of a split $(X,Y)$ of a graph. Its simple decomposition is presented in the second picture, where the red edge is the newly introduced marked edge.}
\label{fig:examplesplit}
\end{figure}

A connected graph $D$ with a distinguished set of edges $M(D)$ is called a \emph{marked graph} if
the edges in $M(D)$ form a matching and each edge in $M(D)$ is a cut edge.
An edge in $M(D)$ is called a \emph{marked edge}, and every other edge is called an \emph{unmarked edge}.
A vertex incident with a marked edge is called a \emph{marked vertex},
and every other vertex is called an \emph{unmarked vertex}.
Each connected component of $D-M(D)$ is called a \emph{bag} of $D$.

When a connected marked graph $G$, which will be a bag of a marked graph, admits a split $(X,Y)$, we construct a marked graph $D$ on the vertex set $V(G) \cup \{x',y'\}$ such that
\begin{itemize}
\item for vertices $x,y$ with $\{x,y\}\subseteq X$ or $\{x,y\}\subseteq Y$, $xy\in E(G)$ if and only if $xy\in E(D)$,
\item $x'y'$ is a new marked edge,
\item $X$ is anti-complete to $Y$,
\item $\{x'\}$ is complete to $N_G(Y)$ and $\{y'\}$ is complete to $N_G(X)$ (with unmarked edges). 
\end{itemize}
The marked graph $D$ is called a \emph{simple decomposition of} $G$.
A \emph{split decomposition} of a connected graph $G$ is a marked graph $D$ defined inductively to be either $G$ or a marked graph defined from a split decomposition $D'$
of $G$ by replacing a connected component $H$ of $D'- M(D')$ with a simple decomposition of $H$.  See Figure~\ref{fig:example} for an example of a split decomposition.
The following lemma provides an important property. An example of an alternating path described in Lemma~\ref{lem:splitadj} is presented in Figure~\ref{fig:example}.

\begin{lemma}[See Adler, Kant\'e, and Kwon, Lemma 2.10 of \cite{AKK2014}]\label{lem:splitadj}
Let $D$ be a split decomposition of a connected graph $G$ and 
$u,v$ be two vertices in $G$.
Then $uv\in E(G)$ if and only if there is a path from $u$ to $v$ in $D$ where its first and last edges are unmarked, and
an unmarked edge and a marked edge alternatively appear in the path. 
\end{lemma}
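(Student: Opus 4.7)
The plan is to prove the lemma by induction on the number $r$ of marked edges of $D$. When $r = 0$ the split decomposition $D$ coincides with the graph $G$ itself, and any path of the described form is forced to be a single unmarked edge (there are no marked edges to alternate with), so the equivalence reduces to the tautology $uv \in E(G) \iff uv \in E(D)$.

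For the inductive step, I would express $D$ as the result of refining some split decomposition $D'$ with $r - 1$ marked edges by replacing a bag $H$ of $D'$ with its simple decomposition along a split $(X, Y)$. This introduces a new marked edge $x' y'$, with $x'$ joined by unmarked edges to $N_H(Y)$ and $y'$ joined by unmarked edges to $N_H(X)$. For the easier ``$\Leftarrow$'' direction, I take an alternating $u$-$v$ path $P$ in $D$: if $P$ does not use the edge $x' y'$, then $P$ already lives in $D'$; otherwise, since $P$ is simple, it contains a unique subpath of the form $a, x', y', b$ with $a \in N_H(Y)$ and $b \in N_H(X)$. The split condition that $N_H(Y)$ is complete to $N_H(X)$ yields $ab \in E(H) \subseteq E(D')$, and replacing this subpath by the edge $ab$ produces an alternating $u$-$v$ path in $D'$ to which the inductive hypothesis applies.

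For the harder ``$\Rightarrow$'' direction, I start from an alternating $u$-$v$ path $P'$ in $D'$ furnished by the inductive hypothesis, chosen to minimize the number of unmarked edges of $H$ crossing the split. The key claim is that this minimum is at most one: if $P'$ had two crossing edges $e_1, e_2$ appearing in that order, the complete-bipartite structure between $N_H(Y)$ and $N_H(X)$ in $H$ would allow us to shortcut the segment of $P'$ between $e_1$ and $e_2$ using a single crossing edge, yielding an alternating path in $D'$ with strictly fewer crossings and contradicting minimality. Once the claim is granted, if $P'$ has no crossing then $P'$ is already an alternating path in $D$; otherwise I replace the unique crossing $ab$ with the subpath $a, y', x', b$, whose end-edges $ay'$ and $x'b$ are unmarked in $D$ because $a \in N_H(Y)$ and $b \in N_H(X)$. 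The substitution preserves alternation (unmarked $\to$ unmarked-marked-unmarked), and simplicity is maintained because the new vertices $x'$ and $y'$ are introduced only once.

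The technical heart of the argument is verifying the ``at most one crossing'' claim. Same-direction crossings (both $X \to Y$, or both $Y \to X$) shortcut cleanly, since the replacement cross-edge splices between two marked edges on either side and preserves the marked-unmarked-marked pattern at the splice points. The opposite-direction case is the main obstacle: a naive one-edge shortcut produces two consecutive unmarked edges at the splice, breaking alternation. Resolving this will require exploiting the tree-of-bags structure of $D'$: the detour of $P'$ between the two crossings is confined to bags lying on a single side of the imagined split (its entry and exit at $H$ both occur through marked edges incident with $Y$-vertices), so the detour can either be discarded outright or absorbed together with one of the crossings to yield a valid shorter alternating path, contradicting the minimality of $P'$.
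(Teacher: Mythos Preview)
The paper does not prove this lemma; it is cited from Adler, Kant\'e, and Kwon without proof, so there is no paper argument to compare against and I evaluate your proposal on its own.

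Your inductive framework is sound, the base case and the $\Leftarrow$ direction are correct, and the substitution $ab \leftrightarrow a\,x'\,y'\,b$ is the right mechanism (modulo a harmless swap of $x'$ and $y'$ in one place). However, you have manufactured a difficulty in the $\Rightarrow$ direction that does not exist. You worry that an alternating path $P'$ in $D'$ might contain two or more unmarked edges of the bag $H$ crossing the split, and you propose a minimality-plus-shortcut argument whose ``opposite-direction'' case you leave unresolved. But this situation cannot arise: by the definition of a marked graph given in the paper, every marked edge of $D'$ is a \emph{cut edge}, so a simple path in $D'$ can visit the bag $H$ at most once; combined with alternation (inside a bag all edges are unmarked, so the path uses exactly one of them per visit), $P'$ contains at most one edge of $H$ altogether. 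That single edge either lies within $X$ or within $Y$ (and then $P'$ is already an alternating path in $D$), or it crosses the split with endpoints in $N_H(Y)$ and $N_H(X)$ (and then your three-edge substitution applies directly and preserves alternation). The minimality choice, the same-direction/opposite-direction case split, and the anticipated obstacle in your final paragraph all disappear once you invoke the cut-edge property up front.
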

 
Naturally, we can define a reverse operation of decomposing into a simple decomposition; for a marked edge $xy$ of a split decomposition $D$, 
\emph{recomposing $xy$} is the operation of removing two vertices $x$ and $y$ and making $N_D(x)\setminus \{y\}$ complete to $N_D(y)\setminus \{x\}$ with unmarked edges.
It is not hard to observe that if $D$ is a split decomposition of $G$, then $G$ can be obtained from $D$ by recomposing all marked edges.

Note that there are many ways of decomposing a complete graph or a star, because every its non-trivial vertex partition is a split.
Cunningham and Edmonds \cite{CunninghamE80} developed a canonical way to decompose a graph into a split decomposition by not allowing to decompose a bag which is a star or a complete graph.
A split decomposition $D$ of $G$ is called a \emph{canonical split decomposition} if each bag of $D$ is either a prime graph, a star, or a complete graph, and 
every recomposing of a marked edge in $D$ results in a split decomposition without the same property. It is not hard to observe that every canonical split decomposition has no marked edge linking two complete bags, and no marked edge linking a leaf of a star bag and the center of another star bag~\cite{Bouchet1988a}. Furthermore, for each pair of twins $a$ and $b$ in $G$, it holds that $a$ and $b$ must both be located in the same bag of the canonical split decomposition. 

\begin{figure}[t]
\centerline{\includegraphics[scale=0.66]{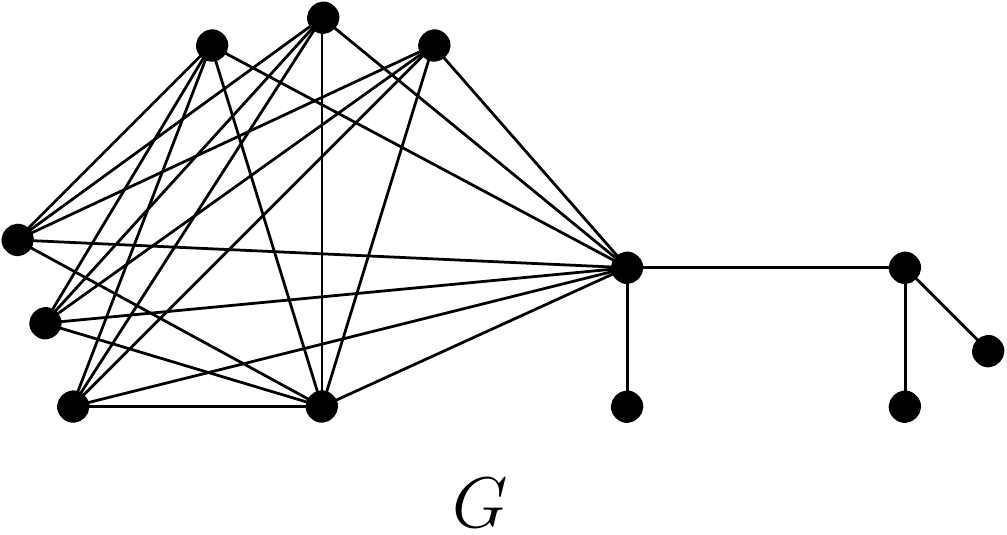}\qquad \includegraphics[scale=0.55]{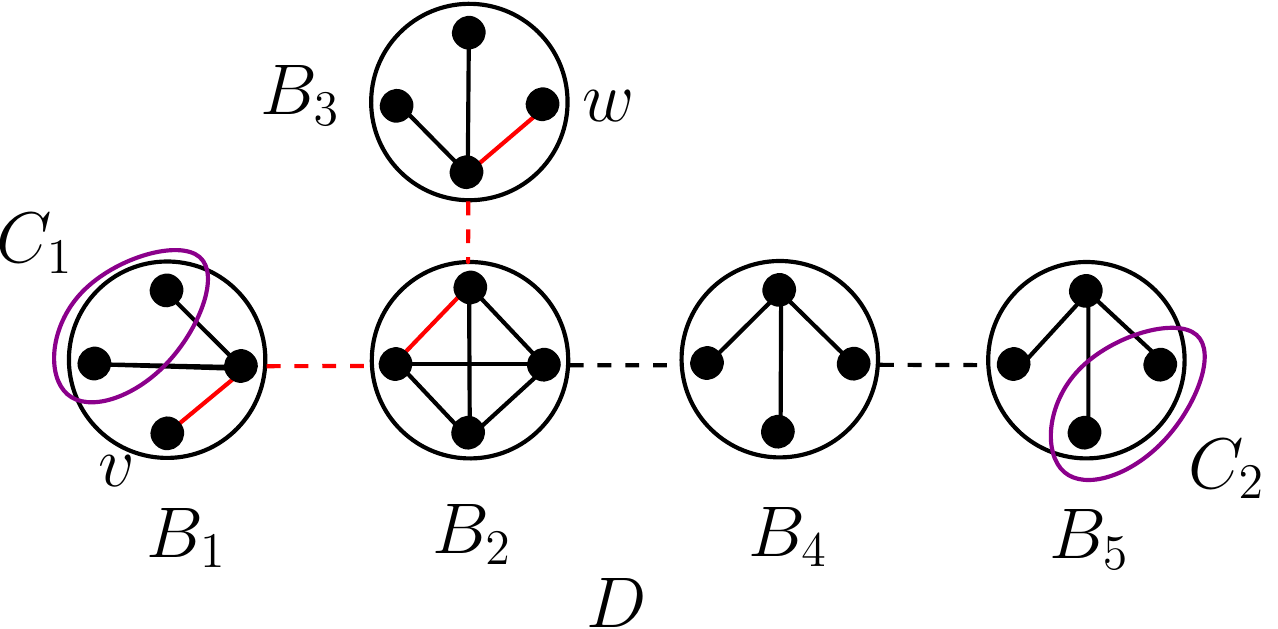}}
\caption{A graph $G$ and its canonical split decomposition $D$. Marked edges are represented by dashed edges, and bags are indicated by circles. 
Note that $\btwn(B_1, B_5)=\{B_1, B_2, B_4, B_5\}$ and $\{B_4, B_5\}$ is the set of $(C_1, C_2)$-separator bags, and $\{B_4\}$ is the set of $(B_1, B_5)$-separator bags. 
The shortest path from $v$ to $w$ in $D$ is a path from $v$ to $w$ where its first and last edges are unmarked, and
an unmarked edge and a marked edge alternatively appear in the path. The existence of such a path exactly corresponds to the adjacency relation in the original graph.
The distance between $C_1$ and $C_2$ in $G$ is $3$, and there are two $(C_1, C_2)$-separator bags.}
\label{fig:example}
\end{figure}

\begin{theorem}[Cunningham and Edmonds~\cite{CunninghamE80}] \label{thm:CED} 
Every connected graph has a unique canonical split decomposition, up to isomorphism.
\end{theorem}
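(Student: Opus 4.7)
The plan is to establish existence and uniqueness separately, each by induction on $\abs{V(G)}$.

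For existence, if $G$ is itself prime, a star, or a complete graph, then $G$ viewed as a marked graph with no marked edges is already a canonical split decomposition. Otherwise, $G$ must admit some split $(X,Y)$; I would apply the simple-decomposition construction from the excerpt to split $G$ into two bags $G_X$ and $G_Y$ joined by a marked edge $x'y'$, with $\abs{V(G_X)}, \abs{V(G_Y)} < \abs{V(G)}$ because $\abs{X}, \abs{Y} \ge 2$. By the inductive hypothesis, $G_X$ and $G_Y$ admit canonical split decompositions $D_X$ and $D_Y$, and gluing them along $x'y'$ yields a split decomposition of $G$ in which every bag is prime, a star, or a complete graph. Finally, I would repeatedly recompose any marked edge that joins two complete bags, or joins the center of one star bag to a leaf of another; each such recomposition strictly decreases the number of bags, so the procedure terminates at a decomposition satisfying the full canonical condition.

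For uniqueness, the plan is to characterize the canonical decomposition intrinsically in terms of $G$, so that the output cannot depend on the choices made during the existence construction. The standard route is via \emph{strong splits}: declare two splits $(A_1,A_2)$ and $(B_1,B_2)$ of $G$ to \emph{cross} if all four intersections $A_i \cap B_j$ are nonempty, and call a split \emph{strong} if it crosses no other split of $G$. One proves that the family of strong splits is laminar in a suitable sense, and thus admits a unique tree representation (Cunningham's split tree). I would then argue that the bags of any canonical split decomposition of $G$ are in bijection with the nodes of this tree, and that the bag type at each node (prime, star, or complete) is forced by the local structure of $G$ around the corresponding node. Since the tree and its bag-labelling are defined purely from $G$, the resulting decomposition is unique up to isomorphism.

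The main obstacle is the uniqueness half, and specifically the laminarity of strong splits. The technical heart of the argument is a case analysis showing that when two distinct splits $(A_1,A_2)$ and $(B_1,B_2)$ of $G$ cross, the four sets $A_i \cap B_j$ together with their complete-to relationships force a highly rigid local structure: the quotient of $G$ by these four classes must be either a complete graph on four vertices or a star-like configuration. This rigidity both explains why star and complete bags appear (they arise precisely where many pairwise-crossing splits collapse) and is what lets me show that any two canonical split decompositions of $G$ realize the same strong-split tree, completing the uniqueness proof.
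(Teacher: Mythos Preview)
The paper does not prove this theorem; it is stated as a classical result attributed to Cunningham and Edmonds~\cite{CunninghamE80} and used as a black box. There is therefore nothing in the paper to compare your proposal against.

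For what it is worth, your sketch is a reasonable outline of the standard argument from the original source and its subsequent expositions: existence by repeated simple decomposition followed by recomposing ``degenerate'' marked edges, and uniqueness via the laminarity of strong (non-crossing) splits and the resulting canonical tree structure. The one place where you are hand-waving is precisely where the real work lies in the Cunningham--Edmonds proof: showing that the strong splits form a laminar family and that the crossing splits at each node of the resulting tree assemble into exactly a star or a complete graph. Your description of the ``rigid local structure'' when two splits cross is on the right track but would need substantial fleshing out to become a proof; in particular, one must carefully verify that after quotienting by all pairwise-crossing splits at a node one obtains a bag that is prime, complete, or a star, and that this labelling is forced. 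None of this is expected here, since the paper simply invokes the result.
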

\begin{theorem}[Dahlhaus~\cite{Dahlhaus00}]\label{thm:dahlhaus}
The canonical split decomposition of a graph  $G$ can be computed in time $\mathcal{O}(\abs{V(G)}+\abs{E(G)})$.
\end{theorem}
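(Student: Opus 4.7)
The plan is to describe a linear-time algorithm that constructs the canonical split decomposition incrementally. By Theorem~\ref{thm:CED}, the canonical split decomposition is unique up to isomorphism, so it suffices to produce any marked graph satisfying the canonical conditions (no marked edge joining two complete bags and no marked edge joining the leaf of a star to the center of another star, with each bag being prime, a star, or a complete graph).

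First I would fix an ordering $v_1, v_2, \ldots, v_n$ of $V(G)$ (e.g., a BFS order from a fixed root) and maintain the canonical split decomposition $D_i$ of the induced subgraph $G_i = G[\{v_1, \ldots, v_i\}]$. When $v_{i+1}$ is added with neighborhood $N_{i+1} = N_G(v_{i+1}) \cap V(G_i)$, the task is to update $D_i$ into $D_{i+1}$. The key structural observation (exploited already by Bouchet and Cunningham--Edmonds) is that inserting a single vertex performs only \emph{local} surgery on the bag-tree: using Lemma~\ref{lem:splitadj}, each bag $B$ of $D_i$ is classified by how $N_{i+1}$ is distributed over the marked-edge neighbors of $B$, and from this classification we determine whether $v_{i+1}$ is absorbed into an existing bag (possibly forcing that bag to split off a new star/complete/prime piece), attached by subdividing a marked edge with a fresh star or complete bag, or hung off a single marked vertex via a new bag. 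After the local insertion, one sweeps the at most constantly many new marked edges and recomposes those that violate the canonical conditions.

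The main obstacle is amortization: to reach total time $\mathcal{O}(|V(G)|+|E(G)|)$ one must ensure that each insertion costs only $\mathcal{O}(|N_{i+1}| + \Delta_{i+1})$, where $\Delta_{i+1}$ bounds the size of the region of $D_i$ that actually changes. This forces a careful choice of data structures: adjacency lists partitioned by bag; a union--find structure over marked edges so that for any unmarked vertex one can locate its bag in amortized $\mathcal{O}(\alpha(n))$ time; and a partition-refinement primitive (in the spirit of Habib--Paul--Viennot for modular decomposition) that identifies, along the path of bags reached by $N_{i+1}$, exactly which bags must be split and how. The delicate point is showing that the sequence of splits and subsequent recompositions telescopes, since each marked edge can be created and destroyed only once per vertex it serves; this is precisely the accounting that Dahlhaus's paper works out, and it is the step I expect to be the technical heart of the argument.
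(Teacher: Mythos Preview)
The paper does not prove Theorem~\ref{thm:dahlhaus} at all: it is stated as a cited result from Dahlhaus~\cite{Dahlhaus00} and used as a black box throughout (e.g., in Proposition~\ref{prop:applicationRR} and in the running-time analysis of Theorem~\ref{thm:main2}). There is therefore nothing in the paper to compare your proposal against.

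As for the proposal itself, it is a plausible high-level outline of an \emph{incremental} linear-time algorithm, closer in spirit to the later fully dynamic approach of Gioan and Paul~\cite{GP2012} than to Dahlhaus's original method (which proceeds via hierarchical clustering rather than vertex-by-vertex insertion). More importantly, your sketch is not a self-contained argument: you explicitly identify the amortized accounting of splits and recompositions as ``the technical heart'' and then defer it to Dahlhaus's paper. That is precisely the step whose correctness is nontrivial---in particular, the claim that each insertion touches only a region of size $\mathcal{O}(|N_{i+1}|+\Delta_{i+1})$ with the $\Delta_{i+1}$ telescoping over all insertions requires a careful charging scheme that you have not supplied. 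So what you have written is a reasonable reading guide to one proof strategy, but it is not itself a proof, and in the context of this paper none is expected: citing~\cite{Dahlhaus00} is the intended ``proof.''
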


We can now give the second characterization of distance-hereditary graphs that is crucial for our results.
For convenience, we call a bag a \emph{star bag} or a \emph{complete bag} if it is a star or a complete graph, respectively.

\begin{theorem}[Bouchet~\cite{Bouchet1988a}]\label{thm:bouchet}
A graph is a distance-hereditary graph if and only if every bag in its canonical split decomposition is either a star bag or a complete bag.
\end{theorem}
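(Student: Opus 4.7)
The plan is to prove the two directions of Bouchet's equivalence separately.

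For the forward direction, I would first show that every bag $B$ of the canonical split decomposition $D$ of $G$ is, as a simple graph, distance-hereditary. This reduces to exhibiting $B$ as an induced subgraph of $G$: for each marked vertex $m\in B$, incident to the marked edge $mm'$, choose an unmarked representative $u_m\in V(G)$ reachable from $m'$ by an alternating path in $D-mm'$ (such an unmarked vertex exists because $V(G)$ equals the set of unmarked vertices of $D$ and every bag is connected by unmarked edges). By Lemma~\ref{lem:splitadj}, the unmarked vertices of $B$ together with these representatives then induce in $G$ a copy of $B$. Since distance-hereditary graphs are closed under induced subgraphs, every bag is distance-hereditary. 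To rule out prime bags I would invoke the classical twin/pendant build-up of distance-hereditary graphs: every such graph on $\geq 2$ vertices contains a pair of twins or a pendant. A pair of twins $\{u,v\}$ in a graph on $\geq 4$ vertices immediately gives a split $(\{u,v\},V\setminus\{u,v\})$; a pendant $v$ whose unique neighbor is $u$ in a connected graph on $\geq 4$ vertices gives the split $(\{u,v\},V\setminus\{u,v\})$ as well. Since prime bags have $\geq 5$ vertices and admit no split, they cannot be distance-hereditary, and hence every bag of $D$ is a star or a complete graph.

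For the backward direction I would induct on the number of bags of the canonical split decomposition $D$. If $D$ consists of a single bag, then $G$ itself is a star or a complete graph, and direct inspection together with Bandelt--Mulder gives that $G$ is distance-hereditary. For the inductive step, pick any marked edge $xy$ of $D$; it corresponds to a split $(X,Y)$ of $G$, and $D$ naturally splits into two canonical split decompositions, namely of $G_X=G[X\cup\{y^\ast\}]$ and $G_Y=G[Y\cup\{x^\ast\}]$ for suitable representatives $x^\ast\in N(X)\cap Y$ and $y^\ast\in N(Y)\cap X$. Each sub-decomposition has strictly fewer bags and only star/complete bags, so by induction $G_X$ and $G_Y$ are distance-hereditary. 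It then suffices to show that if $G$ has a split $(X,Y)$ and both halves are distance-hereditary, then $G$ is. Suppose for contradiction $G$ contains an induced DH obstruction $H$; then $V(H)$ must meet both sides, but using the complete bipartite join between $N(Y)\cap X$ and $N(X)\cap Y$ one can contract $V(H)\cap Y$ to the single representative $x^\ast$ and show that the resulting induced subgraph of $G_X$ still contains a DH obstruction, contradicting the inductive hypothesis on $G_X$.

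The main obstacle is this last ``collapsing'' step in the backward direction: one must verify, for each type of DH obstruction, that after contracting $V(H)\cap Y$ to $x^\ast$ the induced subgraph inside $G_X$ still contains a DH obstruction. For a long induced cycle of length $\geq 5$ this is quick, because having $\geq 2$ cycle-vertices on the same side together with the complete bipartite join forces an unwanted chord; so each side intersects $H$ in a single induced path, and collapsing one side to $x^\ast$ yields a shorter induced cycle of length $\geq 5$ or a configuration handled directly by Lemma~\ref{lem:dhobs}. For the three small obstructions (gem, house, domino) the verification is finite: the $X$-side of $H$ reduces to a short induced path whose endpoints are both neighbors of $x^\ast$, and Observation~\ref{obs:sub} together with Lemma~\ref{lem:dhobs} produces a DH obstruction inside $G_X$, completing the contradiction.
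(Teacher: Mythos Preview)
The paper does not prove Theorem~\ref{thm:bouchet}; it is quoted from Bouchet~\cite{Bouchet1988a} with no argument, so there is no proof here to compare your proposal against.

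Your forward direction is sound: realizing each bag as an induced subgraph of $G$ via representatives chosen through Lemma~\ref{lem:splitadj}, and then using the pendant/twin build-up of distance-hereditary graphs to rule out prime bags, is a standard and correct argument.

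Your backward direction has the right skeleton (induction on the number of bags, reducing to the lemma ``if $G$ has a split $(X,Y)$ and both $G_X,G_Y$ are distance-hereditary then so is $G$''), but the way you handle the key lemma is muddled. The clean fact you are circling is that \emph{every DH obstruction is prime}. Given that, if $H$ is an induced DH obstruction with $|V(H)\cap X|\ge 2$ and $|V(H)\cap Y|\ge 2$, then $(V(H)\cap X,\,V(H)\cap Y)$ is itself a split of $H$ (the complete-bipartite condition is inherited because $H$ is induced), contradicting primality. Hence one side meets $H$ in at most one vertex, and $H$ already sits inside $G_X$ or $G_Y$; no ``contracting'' is needed at all. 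Your proposed collapsing step for the small obstructions is not correct as written: for instance, when $H$ is the gem and $|V(H)\cap Y|=1$, the set $V(H)\cap X$ is \emph{not} an induced path (it is $K_1+P_3$), so the appeal to Observation~\ref{obs:sub} and Lemma~\ref{lem:dhobs} does not apply. The right repair is simply to check directly that the house, gem, and domino have no split (a short finite verification) and that $C_n$ is prime for $n\ge 5$; then the contradiction is immediate.
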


We will later on also need a little bit of additional notation related to split decompositions of distance-hereditary graphs.
Let $D$ be a canonical split decomposition of a distance-hereditary graph.
For two distinct bags $B_1$ and $B_2$, we denote by $\comp (B_1, B_2)$ the connected component of $D-V(B_1)$ containing $B_2$.
Technically, when $B_1=B_2$, we define $\comp(B_1, B_2)$ to be the empty set.
For two bags $B_1$ and $B_2$, we denote by $\btwn (B_1, B_2)$ the set of all bags containing a vertex in a shortest path from $B_1$ to $B_2$ in $D$.
In other words, when we obtain a tree from $D$ by contracting every bag $B$ into a node $v(B)$,
$\btwn (B_1, B_2)$ is the set of all bags corresponding to nodes of the unique path from $v(B_1)$ to $v(B_2)$ in the tree. 
See Figure~\ref{fig:example}.

Let $C_1$ and $C_2$ be two disjoint vertex subsets of $D$ such that $C_1$ and $C_2$ are sets of unmarked vertices contained in (not necessarily distinct) bags $B_1$ and $B_2$, respectively. 
A bag $B$ is called \emph{a $(C_1, C_2)$-separator bag}
  if $B$ is a star bag contained in $\btwn (B_1, B_2)$ whose center is adjacent to neither $\comp (B, B_1)$ nor $\comp (B, B_2)$.
 We remark that $B$ can be $B_i$ for some $i\in \{1,2\}$, and especially when $B=B_1=B_2$, 
 $B$ is a star bag and each $C_i$ consists of leaves of $B$ and $B_1$ is the unique $(C_1, C_2)$-separator bag.
For convenience, we also say that a bag $B$ is \emph{a $(B_1, B_2)$-separator bag}
if  $B$ is a star bag contained in $\btwn (B_1, B_2)\setminus \{B_1, B_2\}$ whose center is adjacent to neither $\comp (B, B_1)$ nor $\comp (B, B_2)$.
For this notation, $B$ cannot be $B_1$ nor $B_2$.

We observe that the distance between $C_1$ and $C_2$ in the original graph 
  is exactly the same as one plus the number of $(C_1, C_2)$-separator bags. 

\begin{observation}\label{obser:separator}
The distance between $C_1$ and $C_2$ in the original graph 
  is exactly the same as one plus the number of $(C_1, C_2)$-separator bags. 
\end{observation}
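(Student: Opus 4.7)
The plan is to prove both inequalities $d_G(C_1, C_2) \ge s + 1$ and $d_G(C_1, C_2) \le s + 1$, where $s$ denotes the number of $(C_1, C_2)$-separator bags. The central tool throughout is Lemma~\ref{lem:splitadj}, which translates adjacency in $G$ into the existence of an alternating path in $D$.

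For the upper bound, I would construct an explicit $G$-path of length $s + 1$ from $u \in C_1$ to $v \in C_2$. Labeling the separator bags $S_1, \ldots, S_s$ along $\btwn(B_1, B_2)$ with respective centers $c_1, \ldots, c_s$, I would pick intermediate vertices $w_1, \ldots, w_s$ as follows: take $w_j = c_j$ if $c_j$ is unmarked, and otherwise take $w_j$ to be an unmarked vertex accessible from $c_j$ through its marked edge (extending the alternation into the adjacent bag if needed). Setting $w_0 = u$ and $w_{s+1} = v$, I would then verify $w_{j-1} w_j \in E(G)$ by exhibiting the required alternating path in $D$: since no separator bag lies strictly between the bags of $w_{j-1}$ and $w_j$, each intervening bag can be crossed directly (a complete bag via any two of its marked vertices, and a non-separator star bag through its center, whose marked edge necessarily lies on the correct side of the path).

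For the lower bound, fix any $G$-path $u = x_0, x_1, \ldots, x_k = v$, with each edge $x_i x_{i+1}$ witnessed by an alternating path $P_i$ in $D$. The key structural fact I would establish is that within any star bag, a simple alternating path can traverse at most one unmarked edge, and that edge must include the center (since every unmarked edge of a star incident to the center would otherwise force the path to revisit it). Consequently, if a separator bag $S$ were to lie strictly between the projections $\pi(x_i)$ and $\pi(x_{i+1})$ onto $\btwn(B_1, B_2)$ (where $\pi(x)$ is the bag of $\btwn(B_1, B_2)$ closest to the bag of $x$), then $P_i$ would be forced to enter and exit $S$ through leaves, since $S$'s center has its marked edge pointing off $\btwn(B_1, B_2)$; traversing two such leaves demands two disjoint unmarked edges both incident to the center of $S$, which violates simplicity. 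Therefore the projection sequence $\pi(x_0), \ldots, \pi(x_k)$ visits every separator bag, and together with $\pi(x_0) = B_1$ and $\pi(x_k) = B_2$ this forces $k \ge s + 1$ in the generic case.

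The main obstacle I expect is the boundary case in which $B_1$ (or symmetrically $B_2$) itself coincides with a separator bag $S_j$, and especially the special sub-case $B_1 = B_2$ with $u, v$ distinct leaves of one star bag. Here the naive counting only yields $k \ge s$, so extra reasoning is needed: since $B_1 = S_j$ is a star whose center is not adjacent to $\comp(B_1, B_2)$ and $u$ is a leaf, the first edge of any $G$-path from $u$ must either end at the center $c(B_1)$ (when $c(B_1)$ is unmarked) or leave the $\btwn(B_1, B_2)$ corridor through $c(B_1)$'s marked edge, forcing $\pi(x_1) = B_1 = \pi(x_0)$. This mandatory repetition in the projection sequence, together with a symmetric one at $B_2$ when applicable, supplies the missing $+1$ needed to conclude $k \ge s + 1$, and the degenerate case $B_1 = B_2$ can be verified by a direct check that two distinct unmarked leaves of a star bag are always at distance exactly $2$ in $G$.
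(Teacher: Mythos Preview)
The paper does not prove this observation; it is stated without justification as a routine fact about split decompositions. Your proposal therefore supplies what the paper omits, and your two-inequality strategy via Lemma~\ref{lem:splitadj} is the natural and correct approach. The upper bound by threading a $G$-path through center-representatives of the separator bags, and the lower bound by observing that an alternating path in $D$ cannot cross a separator star bag from one leaf-side to the other (since that would force two consecutive unmarked edges through the center), are both sound. Your treatment of the boundary cases---where $B_1$ or $B_2$ is itself a separator bag, forcing $\pi(x_0)=\pi(x_1)$ (respectively $\pi(x_{k-1})=\pi(x_k)$) because an unmarked leaf $u$ of a star has the center as its sole $D$-neighbour, and the degenerate case $B_1=B_2$---is also correctly identified and resolved.

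One small wording issue: when you write that passing through a separator bag ``violates simplicity,'' what actually fails is the \emph{alternation} of marked and unmarked edges, not simplicity of the path. The path $\text{leaf}_1$--center--$\text{leaf}_2$ is simple, but consists of two consecutive unmarked edges, which is what Lemma~\ref{lem:splitadj} forbids. With that clarification your argument goes through.
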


\section{Setting the Stage}
\label{sec:general}

We begin by applying the \emph{iterative compression technique}, first introduced by Reed, Smith and Vetta~\cite{ReedSV2004} to show that \textsc{Odd Cycle Transversal} can be solved in single-exponential FPT time. This technique allows us to transform our original target problem to one that is easier to handle, which we call \disjointDHVD. Our goal for the majority of the paper will be to obtain a single-exponential FPT algorithm for \disjointDHVD; this is then used to obtain an algorithm for {\DHVD} in Section~\ref{sec:completing}.

    \smallskip
\noindent
\fbox{\parbox{0.97\textwidth}{
\disjointDHVD \\
\emph{Instance :} A graph $G$, an integer $k$, and $S\subseteq V(G)$ such that $G-S$ is distance-hereditary. \\
\emph{Task :} Is there $Q\subseteq V(G)\setminus S$ with $|Q|\le k$ such that $G-Q$ is distance-hereditary? }}
\vskip 0.2cm

We will denote an instance of {\disjointDHVD} as a tuple $(G,S,k)$.
The major part of our result is to prove that this problem can be solved in time $2^{\mathcal{O}(k+\cc(G[S]))}n^{\mathcal{O}(1)}$, 
where $\cc(G[S])$ denotes the number of connected components of $G[S]$.
We note that any instance of \disjointDHVD{} such that $G[S]$ is not distance-hereditary must clearly be a \NO-instance; hence we will assume that we reject all such instances immediately. 

Before explaining the general approach for solving \disjointDHVD, it will be useful to introduce a few definitions.
Since the canonical split decomposition guaranteed by Theorem~\ref{thm:bouchet} only helps us classify twins in $G-S$ and not in $G$, we explicitly define an equivalence $\sim$ on the vertices of $G-S$ which allows us to classify twins in $G$:
\begin{center}
for two vertices $u,v\in V(G-S)$, $u\sim v$ iff they are twins in $G$.
\end{center}

We denote by $\twinclass$ the set of equivalence classes of $\sim$ on $V(G-S)$, and each individual equivalence class will be called a \emph{twin class} in $G-S$.
We can observe that if $U\in \twinclass$ lies in a single connected component of $G-S$, then $U$ must be contained in precisely one bag of the split decomposition of this connected component of $G-S$, as $U$ is a set of twins in $G-S$ as well.
A twin class is \emph{$S$-attached} if it has a neighbor in $S$, and 
\emph{non-$S$-attached} if it has no neighbors in $S$.
Similarly, we say that a bag in the canonical split decomposition of $G-S$ is \emph{$S$-attached} if it has a neighbor in $S$, and
\emph{non-$S$-attached} otherwise.

We frequently use a special type of star bags.
A star bag $B$ is called \emph{simple} if 
its center is either unmarked or adjacent to a connected component of $D-V(B)$ consisting of one non-$S$-attached bag.

\subsection{Overview of the Approach}
Now that we have introduced the required terminology, we can provide a high-level overview of our approach for solving \disjointDHVD.
\begin{enumerate}
\item We exhaustively apply the branching rules described in Section~\ref{subsec:branching}. 
Branching rules will be applied when $G$ has a small subset $X\subseteq V(G-S)$ such that $S\cup X$ induces a DH obstruction, or 
there is a small connected subset $X\subseteq V(G-S)$ such that adding $X$ to $S$ decreases the number of connected components in $G[S]$.
\item We exhaustively apply the initial reduction rules described in Section~\ref{sec:rules}. 
Each of these rules runs in polynomial time, finds a part in the canonical split decomposition of a connected component of $G-S$ that can be simplified, and modifies the decomposition.
Each application of a reduction rule from Section~\ref{sec:rules} 
either reduces the number of vertices in $G-S$ or reduces the total number of bags in the canonical split decomposition (of a connected component of $G-S$). 
It is well known that the total number of bags in the canonical split decomposition of a graph is linear in the number of vertices.
Therefore, the total number of application of these initial reduction rules will also be at most linear in the number of vertices.
\item We say that $G$ and the canonical split decompositions of $G-S$ are \emph{reduced} if the branching rules in Section~\ref{subsec:branching} and reduction rules in Section~\ref{sec:rules} cannot be applied anymore.
We will obtain the following simple structure of the decompositions in the reduced instance:
\begin{itemize}
\item Each canonical split decomposition $D$ of a connected component of $G-S$ contains at least two distinct $S$-attached twin classes 
(Lemma~\ref{lem:onesattached}).
\item Each bag contains at most one $S$-attached twin class 
(Lemma~\ref{lem:twinclassreduction}).
\item When $B$ is a bag and $D'$ is a connected component of $D-V(B)$ containing no bags having a neighbor in $S$, 
$D'$ consists of one bag and $B$ is a star bag whose center is adjacent to $D'$ (Lemma~\ref{lem:smallbranch}).
In this case, $B$ is a simple star bag whose center is adjacent to $D'$.
\item When $B$ is a bag and $D'$ is a connected component of $D-V(B)$ such that
$D'$ contains exactly one $S$-attached bag $B'$,
there is no $(B', B)$-separator bag (Lemma~\ref{lem:simplifynearsattached2}).
\end{itemize}
\item Using these structures, we prove in Subsection~\ref{sec:twinclass} that if a split decomposition of a connected component of $G-S$ contains two $S$-attached twin classes, 
then one of reduction rules should be applied. 
For this, we assign any bag as a root bag $R$ of $D$ and 
choose a bag $B$ with maximum $\abs{\btwn(B,R)}$ such that there are two descendant bags of $B$ having $S$-attached twin classes $C_1$ and $C_2$, respectively.
Then the distance from $C_1$ to $C_2$ in $G-S$ is at most $2$, and thus their neighbors on $S$ should be close to each other, as branching rules cannot be applied further.
Depending on the type of $B$ and the distance from $C_1$ to $C_2$, 
we show separately that one of reduction rules can be applied.

It will imply that we can apply one of all rules recursively until $G-S$ is empty or $k$ becomes $0$.
Then we can test whether the resulting instance is distance-hereditary or not in polynomial time, and output an answer.
\end{enumerate}

Let us also say a few words about the running time of the algorithm. Let $\mu:=k+\cc(G[S])$. Each of our branching rules will reduce $\mu$ and branch into at most $6$ subinstances.
Each reduction rule takes polynomial time, and the reduction rules will be applied at most $\mathcal{O}(\abs{V(G)})$ times. 
Whenever we introduce a new rule, we need to show that it is \emph{safe}; for branching rules this means that there exists at least one subinstance resulting from the rule which is a \YES-instance if and only if the original graph was a \YES-instance, while for reduction rules this means that the application of the rule preserves the property of being a \YES-instance.

A vertex $v$ in $G-S$ is called \emph{irrelevant} if $(G, S, k)$ is a \YES-instance if and only if $(G-v, S, k)$ is a \YES-instance. We will be identifying and removing irrelevant vertices in several of our reduction rules. When removing a vertex $v$ from $G-S$, it is easy to modify the canonical split decomposition containing $v$, and thus it is not necessary to recompute the canonical split decomposition of the resulting graph from scratch.
More details regarding such modifications of split decompositions can be found in the work of Gioan and Paul~\cite{GP2012}.

\subsection{Branching Rules}\label{subsec:branching}
We state our two branching rules below.

\begin{BRULE}\label{brule:threevertices}
For every vertex subset $X$ of $G-S$ with $\abs{X}\le 5$, 
if $G[S\cup X]$  is not distance-hereditary, then we remove one of the vertices in $X$, and reduce $k$ by $1$.
\end{BRULE}
\begin{BRULE}\label{brule:reducecomponent}
For every vertex subset $X$ of $G-S$ with $\abs{X}\le 5$ such that $G[X]$ is connected and the set $N_G(X)\cap S$ is not contained in a connected component of $G[S]$, 
then we either remove one of the vertices in $X$ and reduce $k$ by $1$, or put all of them into $S$ (which reduces the number of connected components of $G[S]$).
\end{BRULE}

The safeness of Branching Rules~\ref{brule:threevertices} and \ref{brule:reducecomponent} are clear, and these rules can be performed in polynomial time.
The exhaustive application of these branching rules guarantees the following structure of the instance.

\begin{lemma}
\label{lem:shortdistance}
Let $(G, S, k)$ be an instance reduced under Branching Rules~\ref{brule:threevertices} and \ref{brule:reducecomponent}.
\begin{enumerate}[(1)]
\item $G$ has no small DH obstructions.
\item Let $v\in V(G-S)$. For every two vertices $x,y\in N_G(v)\cap S$, 
they are contained in the same connected component of $G[S]$ and there is no induced path of length at least $3$ from $x$ to $y$ in $G[S]$.
Specifically, if $xy\notin E(G)$, then there is an induced path $xpy$ for some $p\in S$.
\item There is no induced path $v_1 \cdots v_5$ of length $4$ in $G-S$ where $v_1$ and $v_5$ have neighbors in $S$ but $v_2$ and $v_4$ have no neighbors in $S$.
\item There is no induced path $v_1 \cdots v_4$ of length $3$ in $G-S$ where $v_1$ and $v_4$ have neighbors in $S$ but $v_2$ has no neighbors in $S$.
\end{enumerate}
\end{lemma}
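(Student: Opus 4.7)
The plan is to establish the four claims in sequence, each as a contradiction against reducedness via Branching Rules~\ref{brule:threevertices} and \ref{brule:reducecomponent}, with parts (3) and (4) additionally invoking part (1).

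For (1), if $G$ contained a small DH obstruction $H$, then distance-heredity of $G-S$ forces $V(H) \cap S \neq \emptyset$, so $X := V(H) \setminus S$ has $|X| \leq 5$ and $G[S \cup X]$ contains $H$. Hence $G[S \cup X]$ is not distance-hereditary, Branching Rule~\ref{brule:threevertices} applies, and reducedness is contradicted. For (2), the component claim follows immediately from Branching Rule~\ref{brule:reducecomponent} applied to $X := \{v\}$. For the no-induced-path claim, suppose $P$ were an induced path of length at least $3$ from $x$ to $y$ in $G[S]$; then $v$ is adjacent to both endpoints of $P$, so Lemma~\ref{lem:dhobs} furnishes a DH obstruction inside $G[V(P) \cup \{v\}] \subseteq G[S \cup \{v\}]$, so $G[S \cup \{v\}]$ is not distance-hereditary and Branching Rule~\ref{brule:threevertices} with $X = \{v\}$ applies. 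The ``specifically'' consequence follows because a shortest $x$-$y$ path in $G[S]$ is induced: when $xy \notin E(G)$ it has length at least $2$ and at most $2$, so exactly $2$, witnessing an induced $xpy$.

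For (3), I would take $X := \{v_1, v_2, v_3, v_4, v_5\}$. If some $z \in S$ is adjacent to both $v_1$ and $v_5$, Lemma~\ref{lem:dhobs} applied to the length-$4$ induced path $v_1 v_2 v_3 v_4 v_5$ with endpoint-adjacent $z$ yields a small DH obstruction, contradicting (1). Otherwise, pick $x \in N_G(v_1) \cap S$ and $y \in N_G(v_5) \cap S$, necessarily distinct; Branching Rule~\ref{brule:reducecomponent} applied to $X$ forces $x, y$ into the same component of $G[S]$. Let $P$ be a shortest $x$-$y$ path in $G[S]$ and set
\[
H := G[\{v_1, v_2, v_4, v_5\} \cup V(P)].
\]
Because $v_2, v_4$ have no neighbors in $S$ and $v_3 \notin V(H)$, both $v_2$ and $v_4$ are pendants in $H$, attached to $v_1$ and $v_5$ respectively, so $d_H(v_2, v_4) = 2 + d_H(v_1, v_5)$. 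Since $v_1 v_5 \notin E(G)$ and no vertex of $V(P) \subseteq S$ is a common neighbor of $v_1, v_5$, one obtains $d_H(v_1, v_5) \geq 3$, hence $d_H(v_2, v_4) \geq 5$, while $d_{G[S \cup X]}(v_2, v_4) = 2$ via $v_3$. This distance mismatch in the connected induced subgraph $H$ of $G[S \cup X]$ certifies that $G[S \cup X]$ is not distance-hereditary, triggering Branching Rule~\ref{brule:threevertices}.

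Part (4) is analogous with $X := \{v_1, v_2, v_3, v_4\}$: the case of a common $S$-neighbor of $v_1, v_4$ is handled by Lemma~\ref{lem:dhobs} on the length-$3$ induced path, and otherwise, with $P$ a shortest path in $G[S]$ joining $x \in N_G(v_1) \cap S$ and $y \in N_G(v_4) \cap S$, the subgraph $H := G[\{v_1, v_2, v_4\} \cup V(P)]$ makes $v_2$ a pendant at $v_1$, so $d_H(v_2, v_4) = 1 + d_H(v_1, v_4) \geq 4 > 2 = d_{G[S \cup X]}(v_2, v_4)$. The main obstacle lies in (3) and (4): no small DH obstruction need sit inside $G[S \cup X]$, so a plain forbidden-subgraph argument fails. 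The key idea is to delete the ``shortcut'' vertex $v_3$, forcing $v_2$-$v_4$ paths to detour through $S$, and to rule out the length-$2$ alternative via the no-common-$S$-neighbor case split, thereby exposing a distance mismatch between $H$ and $G[S \cup X]$.
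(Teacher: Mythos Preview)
Your proof is correct. Parts (1) and (2) match the paper's argument exactly. For (3) and (4) you take a genuinely different route: the paper chooses $P$ to be a shortest path in $G[S]$ between the \emph{sets} $N_G(v_1)\cap S$ and $N_G(v_5)\cap S$, observes that $v_2v_1Pv_5v_4$ is then an induced path of length at least $4$ with $v_3$ adjacent to both endpoints, and invokes Lemma~\ref{lem:dhobs} once to produce a DH obstruction inside $G[S\cup X]$. This handles the common-neighbor and non-common-neighbor cases uniformly, since a common neighbor simply makes $P$ a single vertex. You instead split off the common-neighbor case (handled by Lemma~\ref{lem:dhobs} on the original path) and in the remaining case appeal directly to the distance-preservation definition of distance-hereditary graphs by exhibiting the connected induced subgraph $H$ with $d_H(v_2,v_4)>d_{G[S\cup X]}(v_2,v_4)$. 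Your approach is a bit longer and needs the case split, but it is self-contained in the sense that it does not rely on Lemma~\ref{lem:dhobs} for the main case; the paper's approach is shorter and more in keeping with how Lemma~\ref{lem:dhobs} is used elsewhere. Your closing remark that ``no small DH obstruction need sit inside $G[S\cup X]$'' is accurate and explains why a naive forbidden-subgraph hunt fails, but note that Branching Rule~\ref{brule:threevertices} only requires \emph{some} DH obstruction, not a small one, which is why the paper's single application of Lemma~\ref{lem:dhobs} suffices.
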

\begin{proof}
(1) Suppose $G$ has a small DH obstruction $H$. Since $G-S$ is distance-hereditary, $V(H)\cap S\neq \emptyset$.
Thus, $\abs{V(H)\setminus S}\le 5$, and it can be reduced under Branching Rule~\ref{brule:threevertices}.

(2) First, by Branching Rule~\ref{brule:reducecomponent}, 
$x$ and $y$ are contained in the same connected component of $G[S]$.
Suppose there is an induced path of length at least $3$ from $x$ to $y$ in $G[S]$.
Then by Lemma~\ref{lem:dhobs}, $G[S\cup \{v\}]$ contains a DH obstruction, contradicting our assumption that $G$ is reduced under Branching Rule~\ref{brule:threevertices}.
So, if $xy\notin E(G)$, then there is an induced path of length $2$ from $x$ to $y$ in $G[S]$.

(3) Suppose there is an induced path $v_1 \cdots v_5$ of length $4$ in $G-S$ where $v_1$ and $v_5$ have neighbors on $S$ but $v_2$ and $v_4$ have no neighbors on $S$.
By Branching Rule~\ref{brule:reducecomponent}, 
we know that $N_G(v_1)\cap S$ and $N_G(v_5)\cap S$ are contained in the same connected component of $G[S]$.
Let $P$ be a shortest path from $N_G(v_1)\cap S$ to $N_G(v_5)\cap S$ (if $v_1$ and $v_5$ have a common neighbor, then we choose a common neighbor).
Then $v_2v_1Pv_5v_4$ is an induced path of length at least $4$ and $v_3$ is adjacent to its end vertices.
So, $G[S\cup \{v_1, \ldots, v_5\}]$ contains a DH obstruction, contradicting our assumption that $G$ is reduced under Branching Rule~\ref{brule:threevertices}.

(4) The same argument in (3) holds.
\end{proof}

Lemma~\ref{lem:shortdistance}, and especially point $(2)$ in the lemma, is used in many parts of our proofs. Since we will apply the branching rules exhaustively at the beginning and also after each new application of a reduction rule, these properties will be implicitly assumed to hold in subsequent sections.

We will make use of two more lemmas based on our branching rules. These
will be used in Section~\ref{sec:twinclass} as well as
in the proof of Lemma~\ref{lem:twinclassreduction} in Section~\ref{subsec:structure}.

\begin{lemma}
 \label{lem:relationt1t2}
Let $(G, S, k)$ be an instance reduced under Branching Rules~\ref{brule:threevertices} and \ref{brule:reducecomponent}.
Let $C_1, C_2$ be two distinct $S$-attached twin classes of $G-S$ such that $C_1$ is anti-complete to $C_2$, and $(N_G(C_1)\cap N_G(C_2))\cap V(G-S)\neq \emptyset$. Then:
\begin{enumerate}[(1)]
\item $(N_G(C_1)\cap N_G(C_2))\cap S\neq \emptyset$.
\item For every $x\in N_G(C_1)\setminus N_G(C_2)$ and every $y_1, y_2\in N_G(C_1)\cap N_G(C_2)$, 
if $x$ is adjacent to $y_1$, then $x$ is adjacent to $y_2$ as well.
It implies that $x$ is adjacent to either all of vertices in $N_G(C_1)\cap N_G(C_2)$ or neither of them.
\item For every $x\in N_G(C_1)\setminus N_G(C_2)$ and every $y_1, y_2\in N_G(C_1)\cap N_G(C_2)$, 
if there is a path $xpy_1$ for some $p\in S\setminus N_G(C_1)$ (not necessarily induced), then $p$ is adjacent to $y_2$ as well.
It implies that $p$ is adjacent to either all of vertices in $N_G(C_1)\cap N_G(C_2)$ or neither of them.
\end{enumerate} 
 \end{lemma}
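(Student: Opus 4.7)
The plan is to prove all three statements by the same strategy: suppose the conclusion fails, build an explicit distance-hereditary obstruction inside $G[S \cup X]$ for some $X \subseteq V(G-S)$ with $|X| \leq 5$, and invoke Branching Rule~\ref{brule:threevertices} for a contradiction. Throughout I fix $c_1 \in C_1$, $c_2 \in C_2$, and a common neighbor $v \in (N_G(C_1)\cap N_G(C_2))\cap V(G-S)$ guaranteed by the hypothesis; note $c_1 c_2 \notin E(G)$ (anti-complete) and that all of $C_1 \cup C_2$ is adjacent to $v$ because twins in $G$ share external neighbors.

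For (1) I would assume no vertex of $S$ lies in $N_G(C_1) \cap N_G(C_2)$ and pick $s_i \in N_G(C_i)\cap S$. Applying Branching Rule~\ref{brule:reducecomponent} to the connected $G-S$-subset $\{c_1, v, c_2\}$ places $s_1$ and $s_2$ in the same component of $G[S]$; then I would take a shortest path $P$ in $G[S]$ from $N_G(C_1)\cap S$ to $N_G(C_2)\cap S$ and argue, from minimality, that internal vertices of $P$ are adjacent to neither $C_1$ nor $C_2$. The concatenation $c_1, P, c_2$ is then an induced path in $G$ of length $\geq 3$ with endpoints adjacent to $v$, so Lemma~\ref{lem:dhobs} produces a DH obstruction inside $G[S\cup\{c_1,c_2,v\}]$, triggering Branching Rule~\ref{brule:threevertices} on $X=\{c_1,c_2,v\}$.

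For (2) I would assume $xy_1 \in E(G)$ but $xy_2 \notin E(G)$ and look inside $G[\{c_1, c_2, x, y_1, y_2\}]$. A two-case split on whether $y_1 y_2 \in E(G)$ then delivers the obstruction: if $y_1 y_2 \in E(G)$, the four vertices $x, c_1, y_2, c_2$ form an induced $P_4$ (the relevant non-edges hold by hypothesis) to which $y_1$ is universal, yielding a gem; if $y_1 y_2 \notin E(G)$, the induced $4$-cycle $c_1, y_1, c_2, y_2$ glued to the triangle $c_1, y_1, x$ along the shared edge $c_1 y_1$ is a house. Branching Rule~\ref{brule:threevertices} with $X=\{c_1,c_2\}\cup(\{x,y_1,y_2\}\cap V(G-S))$ closes the case.

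For (3), assuming $py_2 \notin E(G)$, I would split on whether $c_2 p \in E(G)$. When $c_2 p \in E(G)$, the argument of (2) carries over to $G[\{c_1, c_2, p, y_1, y_2\}]$ with the roles of $C_1, C_2$ swapped and $p$ playing the role of $x$, producing a house or gem. When $c_2 p \notin E(G)$, I would examine $G[\{c_1, c_2, x, p, y_1, y_2\}]$: the seven guaranteed edges already form two induced $4$-cycles $(c_1, y_1, c_2, y_2)$ and $(c_1, y_1, p, x)$ sharing the edge $c_1 y_1$, and the three unknown adjacencies are $xy_1, xy_2, y_1y_2$. When all three are absent, the six vertices induce a domino; for each of the remaining $7$ configurations I would exhibit an induced $C_5$, house, or gem on at most $5$ of the six vertices (for instance, $(c_2, y_2, x, p, y_1)$ is an induced $C_5$ when only $xy_2$ is an edge, and $y_1$ becomes universal to an induced $P_4$ among four of the others when both $xy_1$ and $y_1y_2$ are edges). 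The main bookkeeping burden will be this eight-case enumeration, but each individual configuration admits a one-line verification once the right induced obstruction is identified.
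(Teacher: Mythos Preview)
Your proposal is correct. Part~(1) is essentially identical to the paper's argument. For parts~(2) and~(3), however, you take a more laborious route than the paper does. Where you explicitly identify the small DH obstruction (gem, house, domino, or $C_5$) via case analysis on the unknown adjacencies, the paper uniformly appeals to Lemma~\ref{lem:dhobs}: it simply exhibits a short induced path whose two endpoints are both adjacent to a single extra vertex. Concretely, for~(2) the paper observes that $x\,c_1\,y_2\,c_2$ is an induced path of length~$3$ with $y_1$ adjacent to both endpoints, and Lemma~\ref{lem:dhobs} immediately yields a small DH obstruction---no split on $y_1y_2$ is needed. For~(3) the paper splits (as you do) on whether $p\in N_G(C_2)$, but in each branch again produces a single induced path of length $3$ or $4$ (namely $c_1xpc_2$, or $pxy_2c_2$, or $pxc_1y_2c_2$) with $y_1$ adjacent to its endpoints, avoiding your eight-case enumeration entirely. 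Your explicit enumeration is valid and self-contained, but recognizing that Lemma~\ref{lem:dhobs} does the case analysis for you shortens the proof of~(2) and~(3) to a few lines.
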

 \begin{proof}
 For each $i\in \{1,2\}$ let $a_i\in C_i$ and let $T_i=N_G(C_i)$.

(1) Suppose $T_1\cap S$ and $T_2\cap S$ are disjoint.
Let us choose a vertex $z$ in $(T_1\cap T_2)\cap V(G-S)$, which is not an empty set by assumption.
Thus, $\{a_1, a_2, z\}$ induces a connected subgraph of $G$.
If $T_1\cap S$ and $T_2\cap S$ are not contained in one connected component of $G[S]$,
then we can apply Branching Rule~\ref{brule:reducecomponent}. 
As our instance was reduced under Branching Rule~\ref{brule:reducecomponent},
we know that $T_1\cap S$ and $T_2\cap S$ are contained in the same connected component of $G[S]$. 

Let $P$ be a shortest path from $T_1\cap S$ to $T_2\cap S$ in $G[S]$.
Clearly, $P$ contains at most one vertex from each $T_i\cap S$.
As $C_1$ is anti-complete to $C_2$, 
$a_1Pa_2$ is an induced path of length at least $3$ and $z$ is adjacent to its end vertices. 
By Lemma~\ref{lem:dhobs}, $G[V(P)\cup \{a_1, a_2, z\}]$ contains a DH obstruction, contradicting the assumption that
$G$ is reduced under Branching Rule~\ref{brule:threevertices}.

(2) For contradiction, suppose $xy_1\in E(G)$ and $xy_2\notin E(G)$.  Then $xa_1y_2a_2$ is an induced path of length $3$ and $y_1$ is adjacent to its end vertices.
By Lemma~\ref{lem:dhobs}, $G$ contains a small DH obstruction, contradiction. 

(3) Suppose there is a path $xpy_1$ for some $p\in S\setminus T_1$ and $p$ is not adjacent to $y_2$.   
First assume that $p\in S\setminus (T_1\cup T_2)$.
If $xy_2\in E(G)$, then $pxy_2a_2$ is an induced path, and
otherwise, $pxa_1y_2a_2$ is an induced path.
Since $y_1$ is adjacent to $p$ and $a_2$, by Lemma~\ref{lem:dhobs}, $G$ contains a small DH obstruction, contradiction. 
When $p\in (T_2\setminus T_1)\cap S$, $a_1xpa_2$ becomes an induced path of length $3$ and $y_1$ is adjacent to its end vertices, 
and thus $G$ contains a small DH obstruction. We conclude that $p$ is adjacent to $y_2$.
 \end{proof}

 \begin{lemma}\label{lem:completerelationt1t2}
 Let $(G, S, k)$ be an instance reduced under Branching Rules~\ref{brule:threevertices}, and \ref{brule:reducecomponent}.
Let $C_1, C_2$ be two distinct twin classes of $G-S$ such that $C_1$ is  complete to $C_2$. Then:
\begin{enumerate}[(1)]
\item For every $x\in N_G(C_1)\setminus (C_2\cup N_G(C_2))$ and every $y_1, y_2\in N_G(C_1)\cap N_G(C_2)$, 
if $x$ is adjacent to $y_1$, then either it is adjacent to $y_2$ as well, or $y_1$ is adjacent to $y_2$.
\item For every $x\in N_G(C_1)\setminus (C_2\cup N_G(C_2))$ and every $y\in N_G(C_1)\cap N_G(C_2)$, 
if there is a path $xpy$ for some $p\in V(G)\setminus (C_1\cup N_G(C_1))$ (not necessarily induced), then $p\in N_G(C_2)\setminus N_G(C_1)$.
\end{enumerate} 
 \end{lemma}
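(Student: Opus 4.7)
The overall plan mirrors the proof of Lemma~\ref{lem:relationt1t2}. For each $i \in \{1,2\}$ fix a representative $a_i \in C_i$; note that $a_1 a_2 \in E(G)$ since $C_1$ is complete to $C_2$. In both parts I will assume the conclusion fails, exhibit an induced $P_4$ on four of the named vertices, and observe that the remaining vertex is adjacent to both endpoints of that $P_4$; Lemma~\ref{lem:dhobs} will then yield a small DH obstruction, contradicting Lemma~\ref{lem:shortdistance}(1).

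For (1), suppose for contradiction that $xy_1 \in E(G)$ while $xy_2 \notin E(G)$ and $y_1 y_2 \notin E(G)$. The hypotheses $x \in N_G(C_1) \setminus (C_2 \cup N_G(C_2))$ and $y_1, y_2 \in N_G(C_1) \cap N_G(C_2)$, together with $C_1 \cap C_2 = \emptyset$, force the five vertices $x, a_1, a_2, y_1, y_2$ to be pairwise distinct. I then claim that $x - y_1 - a_2 - y_2$ is an induced $P_4$: the three consecutive edges are present since $xy_1 \in E(G)$ and $y_1, y_2 \in N_G(C_2)$, while the would-be chords $xa_2$, $xy_2$, $y_1y_2$ are all absent by $x \notin N_G(C_2)$ and the contradictory assumptions. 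Since $a_1$ is adjacent to both endpoints $x, y_2 \in N_G(C_1)$ of this $P_4$, Lemma~\ref{lem:dhobs} produces a small DH obstruction, contradicting Lemma~\ref{lem:shortdistance}(1).

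For (2), suppose for contradiction that $p \notin N_G(C_2)$. The conditions $p \notin C_1 \cup N_G(C_1)$, $x \in N_G(C_1) \setminus N_G(C_2)$, and $y \in N_G(C_1) \cap N_G(C_2)$ (together with $a_2 \in C_2 \subseteq N_G(C_1)$) immediately give pairwise distinctness of $p, x, a_1, a_2, y$. I claim that $p - x - a_1 - a_2$ is an induced $P_4$, regardless of whether $xy \in E(G)$: the consecutive edges $px$, $xa_1$, $a_1 a_2$ are present by the hypothesised path through $p$, by $x \in N_G(C_1)$, and by $C_1$ being complete to $C_2$; the potential chords $pa_1$, $pa_2$, $xa_2$ are absent because $p \notin N_G(C_1) \cup N_G(C_2)$ (using our contradictory assumption) and $x \notin N_G(C_2)$. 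The vertex $y$ is adjacent to $p$ (the other edge of the path $xpy$) and to $a_2$ (since $y \in N_G(C_2)$), hence to both endpoints of the $P_4$. Lemma~\ref{lem:dhobs} then produces a small DH obstruction, again contradicting Lemma~\ref{lem:shortdistance}(1).

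The main obstacle is purely edge-bookkeeping: one must carefully track which pairs among the five candidate vertices are forced to be edges or non-edges, and verify distinctness from the twin-class assumptions. The only mildly subtle point is in part (2), where the argument is uniform across the cases $xy \in E(G)$ and $xy \notin E(G)$, because Lemma~\ref{lem:dhobs} freely allows the extra vertex (here $y$) to be adjacent to some interior vertices of the $P_4$.
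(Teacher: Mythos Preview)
Your proof is correct and essentially identical to the paper's: for (1) the paper directly observes that $G[\{x,y_1,y_2,a_1,a_2\}]$ is a gem (which is exactly what your $P_4$-plus-universal-vertex argument yields, since $a_1$ is in fact adjacent to all four vertices of $x\,y_1\,a_2\,y_2$), and for (2) the paper uses precisely the same induced path $p\,x\,a_1\,a_2$ with $y$ adjacent to its endpoints and invokes Lemma~\ref{lem:dhobs}.
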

\begin{proof}
For each $i\in \{1,2\}$ and let $a_i\in C_i$ and let $T_i=N_G(C_i)$.

(1) Suppose $xy_1\in E(G)$ and $xy_2, y_1y_2\notin E(G)$.  Then $G[\{x,y_1, y_2, a_1, a_2\}]$ is isomorphic to the gem, contradiction.

(2) Suppose there is a path $xpy$ for some $p\in V(G)\setminus (C_1\cup N_G(C_1)\cup N_G(C_2))$.   
Then $pxa_1a_2$ is an induced path of length $3$, and $y$ is adjacent to its end vertices.
By Lemma~\ref{lem:dhobs}, 
$G[\{x,p,y, a_1, a_2\}]$ contains a small DH obstruction, contradiction.
\end{proof}

\section{Reduction Rules in Split Decompositions}\label{sec:rules}

In this section, we assume that the given instance $(G, S, k)$ is reduced under Branching Rules~\ref{brule:threevertices} and \ref{brule:reducecomponent}.
The reduction rules introduced here either remove some irrelevant vertex, move some vertex into $S$, or reduce the number of bags in the decomposition by modifying the instance into an equivalent instance.
After we apply any of these reduction rules, we will run the two branching rules from Section~\ref{sec:general} again.

In Subsection~\ref{subsec:bypassing}, we introduce the notion of a \emph{bypassing vertex}, which %
is a crucial concept that will frequently appear in our proofs.  
In Subsection~\ref{subsec:sixrules}, we present five reduction rules and prove their correctness.
Then in Subsection~\ref{subsec:structure}, we discuss structural properties of the obtained instance after exhaustive application of all of the presented branching rules and reduction rules.
These properties will be used in Section~\ref{sec:twinclass} to argue that if the instance is non-trivial, then one can apply one of reduction rules.

\subsection{Bypassing Vertices}\label{subsec:bypassing}

We introduce a generic way of finding an irrelevant vertex which will be used in many reduction rules.
For a vertex $v$ in $G-S$ and an induced path $H=p_1p_2p_3p_4p_5$ in $G$ where $p_3=v$,
a vertex $x$ in $S$ is called a \emph{bypassing vertex} for $H$ and $v$ if $x$ is adjacent to $p_2$ and $p_4$. 
When $H$ is clear from the context, we simply say that $x$ is a bypassing vertex for $v$. 
If such a vertex $x$ exists, it is clear that $x$ is not contained in $H$.
The following property is essential.

\begin{lemma}
\label{lem:badvertex}
Let $(G, S, k)$ be an instance reduced under Branching Rules~\ref{brule:threevertices} and \ref{brule:reducecomponent}.
Let $v$ be a vertex in $G-S$ such that for every induced path $P=p_1p_2p_3p_4p_5$ where $v=p_3$, 
there is a bypassing vertex for $P$ and $v$.
Then $v$ is irrelevant.
\end{lemma}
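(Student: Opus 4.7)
My plan is to prove the two directions of ``irrelevance'' separately. The forward direction is routine: if $Q$ is a solution for $(G, S, k)$, then $Q \setminus \{v\}$ is a solution of size at most $k$ for $(G - v, S, k)$, since $G - v - (Q \setminus \{v\})$ is an induced subgraph of $G - Q$ and distance-heredity is preserved under induced subgraphs.

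For the reverse direction, I would start from a solution $Q' \subseteq V(G - v) \setminus S$ of size at most $k$ for $(G - v, S, k)$ and argue by contradiction that $Q'$ is also a solution for $(G, S, k)$. Suppose $G - Q'$ has a DH obstruction $H'$; since $G - v - Q'$ is already distance-hereditary, $H'$ must contain $v$. The hypothesis of the lemma places us in an instance reduced under the branching rules, so by Lemma~\ref{lem:shortdistance}(1) the whole graph $G$ has no small DH obstructions, and hence $H'$ (being an induced subgraph of $G$) cannot be small either. Consequently $H'$ is an induced cycle of length $\ell \ge 7$.

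Writing $H' = v_1 v_2 \cdots v_\ell v_1$ with $v = v_3$, the five consecutive cycle vertices give an induced path $v_1 v_2 v_3 v_4 v_5$ of length $4$ with $v$ in the middle; the hypothesis of the lemma then supplies a bypassing vertex $x \in S$ adjacent to both $v_2$ and $v_4$. The key observation is that $x$ is distinct from every $v_i$, because the only vertex of the induced cycle $H'$ adjacent to both $v_2$ and $v_4$ is $v_3 = v$, and $x \in S$ while $v \notin S$. Deleting $v$ from $H'$ leaves the induced path $P' = v_2 v_1 v_\ell v_{\ell - 1} \cdots v_5 v_4$ in $G - v - Q'$, of length $\ell - 2 \ge 5 \ge 3$. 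Since $x \in S \subseteq V(G - v - Q')$ is disjoint from $V(P')$ and is adjacent to both endpoints of $P'$ (and possibly to some interior vertices as well), applying Lemma~\ref{lem:dhobs} to the induced subgraph $G[V(P') \cup \{x\}]$ produces a DH obstruction inside $G - v - Q'$, contradicting the choice of $Q'$.

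The main step requiring verification will be the induced-ness of $P'$ together with the disjointness of $x$ from $V(P')$; both follow immediately from the fact that $H'$ is an induced cycle. I do not expect any other obstacles: the no-small-obstruction consequence of Branching Rule~\ref{brule:threevertices} is exactly what forces $H'$ to be a long induced cycle, and the bypassing-vertex hypothesis is tailor-made to trigger the path-plus-chord situation of Lemma~\ref{lem:dhobs} after the single interior vertex $v$ has been bypassed.
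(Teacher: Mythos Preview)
Your proof is correct and follows essentially the same approach as the paper: both argue that any DH obstruction in $G-Q'$ must be a long induced cycle through $v$ (using Lemma~\ref{lem:shortdistance}(1)), then use the bypassing vertex together with Lemma~\ref{lem:dhobs} to produce a DH obstruction in $(G-v)-Q'$. Your justification that $x\notin V(H')$ via common neighbors in an induced cycle is equivalent to the paper's observation that otherwise $x,v_2,v_3,v_4$ would form a $4$-cycle.
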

\begin{proof}
We claim that $(G, S, k)$ is a \YES-instance if and only if $(G-v, S, k)$ is a \YES-instance.
The forward direction is clear.
Suppose that $G-v$ has a vertex set $T$ such that $S\cap T\neq \emptyset$, $\abs{T}\le k$, and $(G-v)-T$ is distance-hereditary. 
If $G-T$ is distance-hereditary, then we are done.
Suppose that $G-T$ has a DH obstruction $H$. Since Branching Rule~\ref{brule:threevertices} does not apply, $G$ has no small DH obstructions, and therefore
$H$ is an induced cycle of length at least $7$. 
Let $P=p_1p_2p_3p_4p_5$ be the subpath of $H$ such that $p_3=v$.
By the assumption, there is a bypassing vertex $v'$ for $P$ and $v$.
Note that $v'\notin V(H)$, as $v'p_2p_3p_4v'$ would be a cycle of length $4$.
Also, $H-v$ is an induced path of length at least $5$. 
Thus $G[(V(H)\setminus \{v\})\cup \{v'\}]$ contains another DH obstruction by Lemma~\ref{lem:dhobs}. Note that $v'\in S$ and hence also $v'\in G-T$ and, in particular, $(V(H)\setminus \{v\})\cup \{v'\}\subseteq V((G-v)-T)$. 
This contradicts the fact that $(G-v)-T$ is distance-hereditary.
\end{proof}

\begin{lemma}
\label{lem:twovertexinS}
Let $(G, S, k)$ be an instance reduced under Branching Rules~\ref{brule:threevertices} and \ref{brule:reducecomponent}.
Let $v$ be a vertex in $G-S$ and $P=p_1p_2p_3p_4p_5$ be an induced path where $p_3=v$.
If $p_2, p_4\in S$, then there is a bypassing vertex for $P$ and $v$.
\end{lemma}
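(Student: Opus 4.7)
The proof is quite short and follows almost immediately from Lemma~\ref{lem:shortdistance}(2). The plan is as follows.

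First, observe that since $P = p_1 p_2 p_3 p_4 p_5$ is an induced path, $p_2$ and $p_4$ are non-adjacent in $G$. Now apply Lemma~\ref{lem:shortdistance}(2) to the vertex $v = p_3 \in V(G-S)$ together with the two vertices $p_2, p_4 \in N_G(v) \cap S$. Since $p_2 p_4 \notin E(G)$, the lemma guarantees the existence of some $q \in S$ such that $p_2 q p_4$ is an induced path in $G[S]$; in particular, $q$ is adjacent to both $p_2$ and $p_4$.

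By definition, this $q$ is a bypassing vertex for $P$ and $v$, finishing the proof.

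There is essentially no obstacle here beyond identifying the right item of Lemma~\ref{lem:shortdistance}. One minor sanity check worth doing in passing is that $q \neq p_3$, which is automatic because $q \in S$ while $p_3 = v \notin S$; the remaining vertices of $P$ play no role, since the definition of a bypassing vertex only requires membership in $S$ and adjacency to $p_2$ and $p_4$.
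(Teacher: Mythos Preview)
Your proof is correct and essentially identical to the paper's own argument: both note that $p_2p_4\notin E(G)$ since $P$ is induced, then invoke Lemma~\ref{lem:shortdistance}(2) to obtain a vertex $p\in S$ with $p_2pp_4$ an induced path, which is by definition a bypassing vertex. Your additional sanity check that $q\neq p_3$ is harmless but not needed, since the definition of bypassing vertex already forces $q\in S$.
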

\begin{proof}
Note that $p_2p_4\notin E(G)$.
Thus, by (2) of Lemma~\ref{lem:shortdistance}, 
there is an induced path $p_2pp_4$ for some $p\in S$, and $p$ is a bypassing vertex.
\end{proof}

\subsection{Five Reduction Rules}\label{subsec:sixrules}

We are now ready to start with our reduction rules. 
For the remainder of this section, let us fix a canonical split decomposition $D$ of a connected component of $G-S$.

We start with a simple reduction rule that can be applied when $D$ contains at most one $S$-attached twin class.
\begin{RRULE}\label{rrule:dhcomponent}
If $D$ has at most one $S$-attached twin class, then we remove all unmarked vertices of $D$ from $G$.
\end{RRULE}
\begin{lemma}
\label{lem:onesattached}
Reduction Rule~\ref{rrule:dhcomponent} is safe.
\end{lemma}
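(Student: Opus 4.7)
The plan is to establish both directions of the equivalence $(G,S,k) \in \YES \iff (G-V(C), S, k) \in \YES$, where $V(C)$ denotes the set of unmarked vertices of $D$, i.e., the vertex set of the connected component of $G-S$ decomposed by $D$. The forward direction is immediate: if $Q$ solves $(G,S,k)$, then $Q \setminus V(C)$ solves the reduced instance, since $G-V(C)-(Q\setminus V(C))$ is an induced subgraph of the distance-hereditary graph $G-Q$.

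For the converse, I will fix a solution $Q$ for $(G-V(C),S,k)$ and assume toward contradiction that $G-Q$ contains a DH obstruction $H$. By Lemma~\ref{lem:shortdistance}(1) no small DH obstructions exist in $G$, so $H$ must be an induced cycle of length at least $7$. Let $U$ be the unique $S$-attached twin class of $D$ (or $U = \emptyset$ if none exists); since $U$ is a set of twins in $G$ and DH obstructions are twin-free, $|V(H) \cap U| \le 1$. The first step is to show $V(H) \cap V(C) = \{u\}$ for some $u \in U$. If $V(H) \cap V(C) = \emptyset$, then $H$ already lies inside $G-V(C)-Q$, a contradiction. Otherwise, any $v \in V(H) \cap (V(C) \setminus U)$ has all its $G$-neighbors inside $V(C)$ (its twin class is not $S$-attached, and $V(C)$ is a connected component of $G-S$), so $N_H(v) \subseteq V(H) \cap V(C)$. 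Deleting the at most one vertex of $V(H) \cap U$ from the cycle $H$ leaves either the cycle itself or a path, on which $V(H) \cap (V(C) \setminus U)$ is closed under adjacency; being nonempty it must therefore equal the entire vertex set of that substructure, forcing $V(H) \subseteq V(C)$ and contradicting distance-heredity of $C$.

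It then remains to handle the case $V(H) \cap V(C) = \{u\}$ with $u \in U$, which I expect to be the main obstacle. The two $H$-neighbors $s_1, s_2$ of $u$ lie outside $V(C)$, and since $u$'s external neighbors all lie in $S$, we have $s_1, s_2 \in N_G(u) \cap S$; they are non-adjacent because they sit at distance two on the induced cycle $H$ of length at least $7$. Lemma~\ref{lem:shortdistance}(2) will then supply some $p \in S$ making $s_1 p s_2$ an induced path, and a short length argument (using that $H - u$ has length at least $5$) rules out $p \in V(H)$. Finally, $H-u$ is an induced path of length at least $5$ in $G-V(C)-Q$ whose endpoints are both adjacent to $p$, so Lemma~\ref{lem:dhobs} yields a DH obstruction inside $G[(V(H)\setminus\{u\}) \cup \{p\}] \subseteq G-V(C)-Q$, the desired contradiction. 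The crux is precisely this shortcut-through-$S$ step for the single-representative case; the other cases dissolve via basic connectivity and twin-freeness of DH obstructions.
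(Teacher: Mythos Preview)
Your proof is correct and takes essentially the same approach as the paper. The paper phrases the argument iteratively (first discarding vertices outside the $S$-attached twin class, then invoking the bypassing-vertex framework of Lemma~\ref{lem:badvertex} for the remaining ones), while you remove all of $V(C)$ at once and argue directly about the obstruction $H$; but the core content---that $H$ can meet $V(C)$ only in a single vertex of the unique $S$-attached twin class, whose two cycle-neighbours then lie in $S$, after which Lemma~\ref{lem:shortdistance}(2) and Lemma~\ref{lem:dhobs} reroute the obstruction through $S$---is identical.
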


\begin{proof}
If $D$ has no $S$-attached twin class, then its underlying graph is a distance-hereditary connected component of $G$.
Thus, we can safely remove all its unmarked vertices. We may assume that $D$ has one $S$-attached twin class $C$.

Since $C$ is the only $S$-attached twin class and every induced cycle of length at least $5$ contains no twins,
no induced cycle of length at least $5$ contains an unmarked vertex in $V(D)\setminus C$.
Thus, we can safely remove all of unmarked vertices other than vertices in $C$.
Now we assume that $V(D)=C$. We claim that every vertex in $C$ is also irrelevant.

To apply Lemma~\ref{lem:badvertex}, 
suppose there is an induced path
 $P=p_1p_2p_3p_4p_5$ where $p_3\in C$.
 Since there are no twins in $P$, 
 $P$ contains at most one vertex of $C$.
Thus, $p_2$ and $p_4$ are contained in $S$. Since $p_2p_4\notin E(G)$ and $(G,S,k)$ is reduced under Branching Rules~\ref{brule:threevertices} and \ref{brule:reducecomponent}, 
by Lemma~\ref{lem:twovertexinS}, there is a bypassing vertex for $P$ and $v$.
Since $P$ was arbitrarily chosen, by Lemma~\ref{lem:badvertex}, $v$ is irrelevant.
\end{proof}

The next rule deals with a vertex of degree $1$ in $G-S$.
 
\begin{RRULE}\label{rrule:leaftoS}
Let $B$ be a star bag whose center is unmarked, and let $v$ be a leaf unmarked vertex in $B$. 
If $v$ has no neighbor in $S$, then we remove $v$.
If $v$ has a neighbor in $S$, then we move $v$ into $S$.
\end{RRULE}

\begin{lemma}
\label{lem:leaftoS}
Reduction Rule~\ref{rrule:leaftoS} is safe.
\end{lemma}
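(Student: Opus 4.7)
The rule splits into two cases that I would handle by different mechanisms.

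In the first case, where $v$ has no neighbor in $S$, I plan to argue that $v$ is a pendant of $G$ and hence lies in no DH obstruction. Because $c$ is unmarked, it is incident to no marked edge, so the alternating unmarked/marked path characterization of adjacency (Lemma~\ref{lem:splitadj}) forces $c$ to be the unique neighbor of $v$ in $G-S$. Combined with the assumption $N_G(v) \cap S = \emptyset$, this gives $\deg_G(v)=1$. Each DH obstruction---house, gem, domino, or induced cycle of length at least $5$---has minimum degree at least $2$, so $v$ can belong to no DH obstruction of any induced subgraph of $G$, and hence $v$ is irrelevant.

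In the second case, where $v$ has a neighbor in $S$, I want to prove the equivalence of $(G,S,k)$ and $(G, S\cup\{v\}, k)$; note first that $G[S\cup\{v\}]$ is distance-hereditary since Branching Rule~\ref{brule:threevertices} has been exhausted. The backward direction is immediate. For the forward direction, given a solution $Q$ for $(G,S,k)$ with $v\in Q$, I would perform a \emph{$c$-for-$v$ swap}: set $Q' := Q \setminus \{v\}$ when $c\in Q$, and $Q' := (Q\setminus\{v\})\cup\{c\}$ when $c\notin Q$. In either sub-case $|Q'| \le |Q| \le k$, $Q' \cap (S \cup \{v\}) = \emptyset$, and crucially $c \notin V(G-Q')$, so every neighbor of $v$ in $G-Q'$ lies in $N_G(v)\cap S$.

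To verify $G-Q'$ is distance-hereditary, I would assume toward contradiction that $H$ is a DH obstruction of $G-Q'$; since $G$ contains no small DH obstructions by Lemma~\ref{lem:shortdistance}(1), $H$ must be an induced cycle of length at least $7$. If $v \notin V(H)$, short bookkeeping of the swap in each sub-case gives $V(H) \subseteq V(G-Q)$, contradicting that $G-Q$ is distance-hereditary. Otherwise $v\in V(H)$, and by the previous paragraph both neighbors $u_2, u_4$ of $v$ in $H$ lie in $N_G(v)\cap S$. Lemma~\ref{lem:twovertexinS} applied to the length-$4$ subpath $u_1 u_2 v u_4 u_5$ of $H$ then yields a bypassing vertex $v' \in S$ adjacent to $u_2$ and $u_4$, and Lemma~\ref{lem:dhobs} applied to the induced path $H-v$ together with $v'$ produces a DH obstruction supported on $(V(H)\setminus\{v\})\cup\{v'\}$. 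The same bookkeeping places this vertex set inside $V(G-Q)$ (using that $v' \in S$ is disjoint from $Q$), so we obtain a DH obstruction in the distance-hereditary graph $G-Q$, the desired contradiction.

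The main obstacle is really only the second case: a naive $Q' = Q\setminus\{v\}$ would leave $c$ in the graph and could reintroduce the very obstruction $v$ was placed in $Q$ to destroy, whereas the $c$-for-$v$ swap strips $v$ of its sole neighbor in $G-S$ and forces the two neighbors of $v$ in any surviving obstruction to lie in $S$, which is precisely the hypothesis that powers Lemma~\ref{lem:twovertexinS}.
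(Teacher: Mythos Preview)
Your proposal is correct and follows essentially the same route as the paper: both argue that $v$ has degree~$1$ in $G$ when $N_G(v)\cap S=\emptyset$, and for the second case both perform the $c$-for-$v$ swap in the solution, then use Lemma~\ref{lem:twovertexinS} plus Lemma~\ref{lem:dhobs} to relocate any surviving obstruction into $G-Q$. The only (trivial) case you leave implicit is $v\notin Q$, where $Q$ itself already witnesses that $(G,S\cup\{v\},k)$ is a \YES-instance; the paper handles this by setting $T'=T$, but it requires no additional idea.
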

\begin{proof}
Let $x$ be the center of $B$ and let $v$ be a leaf unmarked vertex in $B$.
If $v$ has no neighbor in $S$, then $v$ has degree $1$ in $G$, and we can safely remove it.
We assume that $v$ has a neighbor in $S$.
We claim that $(G, S, k)$ is a \YES-instance if and only if $(G, S\cup \{v\}, k)$ is a \YES-instance.
The converse direction is easy. Suppose that $G$ contains a vertex set $T$ where $S\cap T=\emptyset$, $\abs{T}\le k$, and $G-T$ is distance-hereditary. 
Let $T'=T$ if $v\notin T$, and otherwise, we remove $v$ from $T$ and add $x$ to $T$, and call it $T'$.
We claim that 
$G-T'$ is distance-hereditary, which implies that $(G, S\cup \{v\}, k)$ is a \YES-instance.

Suppose $G-T'$ is not distance-hereditary.
Since $G$ has no small DH obstructions, $G-T'$ contains an induced cycle $H$ of length at least $7$.
First assume that $H$ contains $x$. 
Then $x$ is not contained in $T'$, and therefore, $v$ was not contained in $T$, and we have $T=T'$ by the construction. Thus $G-T$ also contains $H$, contradiction.
Thus, we have $x\notin V(H)$. 
If $v\notin V(H)$, then $H$ is an induced subgraph of $G-T$ because $T$ and $T'$ only differ at $\{v,x\}$.
This implies that $v\in V(H)$ and $v\in T$. Thus $T'$ contains $x$.

Let $P=p_1p_2p_3p_4p_5$ be the subpath of $H$ where $p_3=v$.
As $T'$ contains $x$, 
$p_2$ and $p_4$ are contained in $S$.
As $(G,S,k)$ is reduced under Branching Rules~\ref{brule:threevertices} and \ref{brule:reducecomponent}, 
by Lemma~\ref{lem:twovertexinS}, there is a bypassing vertex for $P$ and $v$.
Thus, $(G-T')[(V(H)\setminus \{v\})\cup S]$ contains another DH obstruction.
It contradicts the fact that $G-T$ is distance-hereditary, as $(G-T')[(V(H)\setminus \{v\})\cup S]$ is an induced subgraph of $G-T$.
\end{proof}

We remark that when we move $v$ into $S$ in Reduction Rule~\ref{rrule:leaftoS}, 
$k+\cc(G[S])$ does not increase, and the size of $V(G)\setminus S$ decreases.
After applying Reduction Rule~\ref{rrule:leaftoS} exhaustively, we obtain that 
if the center of a star bag is unmarked, then this bag contains no unmarked leaves.

The next reduction rule arises directly from the definition of bypassing vertices.

\begin{RRULE}\label{rrule:p5middle}
Let $v$ be a vertex in $G-S$ such that 
for every induced path $P=p_1p_2p_3p_4p_5$ with $p_3=v$, 
there is a bypassing vertex for $P$ and $v$.
Then we remove $v$. In  particular, when there is no such an induced path, we remove $v$.
\end{RRULE}

For fixed $v$, we can apply Reduction Rule~\ref{rrule:p5middle} in time $\mathcal{O}(\abs{V(G)}^5)$ by considering all vertex subsets of size $4$, 
and testing whether $p_2$ and $p_4$ have a common neighbor in $S$.

\begin{lemma}
  \label{lem:p5middle}
Reduction Rule~\ref{rrule:p5middle} is safe.
\end{lemma}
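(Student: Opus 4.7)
The plan is essentially a direct invocation of Lemma~\ref{lem:badvertex}. The hypothesis of Reduction Rule~\ref{rrule:p5middle} is that $v \in V(G-S)$ satisfies the condition that every induced $5$-vertex path with $v$ as its middle vertex admits a bypassing vertex; this is exactly the premise of Lemma~\ref{lem:badvertex}. Therefore $v$ is irrelevant, i.e., $(G, S, k)$ is a \YES-instance if and only if $(G-v, S, k)$ is a \YES-instance, which is precisely what safeness of the rule requires.

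For the special case highlighted in the rule's statement — namely, when no induced path $P = p_1 p_2 p_3 p_4 p_5$ with $p_3 = v$ exists at all — the universal statement over such paths is vacuously true, so the hypothesis of Lemma~\ref{lem:badvertex} is again satisfied and the same conclusion applies.

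In the proof itself I would just record these two sentences: (i) cite the hypothesis of the rule and apply Lemma~\ref{lem:badvertex} to conclude that $v$ is irrelevant; (ii) note that the vacuous case is covered by the same application. No further case analysis or structural reasoning is needed, since all the real work (in particular, invoking Lemma~\ref{lem:dhobs} to grow a small DH obstruction from a bypassing vertex, and using that every DH obstruction in $G-T$ after deletion must be a long induced cycle because the branching rules have killed all small obstructions) has already been done inside the proof of Lemma~\ref{lem:badvertex}. Thus there is no real obstacle; the only thing worth emphasizing is that the rule is stated in exactly the form that matches the lemma, which is why it can be discharged in one line.
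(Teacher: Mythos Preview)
Your proposal is correct and matches the paper's own proof, which is literally the one-line ``This follows from Lemma~\ref{lem:badvertex}.'' Your additional remark about the vacuous case is a harmless elaboration of the same argument.
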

\begin{proof}
This follows from Lemma~\ref{lem:badvertex}.
\end{proof}

We proceed by introducing a reduction rule which sequentially arranges bags containing exactly one twin class.
The operation of \emph{swapping the adjacency} between two vertices $x$ and $y$ in a graph is to remove $xy$ if $xy$ was an edge, and otherwise add an edge between $x$ and $y$.
The number of bags in $D$ is strictly reduced when applying Reduction Rule~\ref{rrule:leaf}.

\begin{RRULE}\label{rrule:leaf}
Let $B$ be a leaf bag and $B'$ be the neighbor bag of $B$.
\begin{enumerate}[(1)]
\item If $B$ is a complete bag having exactly one twin class in $G-S$ and $B'$ is a star bag whose leaf is adjacent to $B$, 
then we swap the adjacency between every two unmarked vertices in $B$. By swapping the adjacency, $B$ becomes a star whose center is adjacent to $B'$, and thus we can recompose the marked edge connecting $B$ and $B'$.
We recompose the marked edge connecting $B$ and $B'$.
\item If $B$ is a star bag having exactly one twin class in $G-S$, the center of $B$ is adjacent to $B'$, and $B'$ is a complete bag,  
then we swap the adjacency between every two unmarked vertices in $B$. By swapping the adjacency, $B$ becomes a complete graph, and thus we can recompose the marked edge connecting $B$ and $B'$.
We recompose the marked edge connecting $B$ and $B'$.
\end{enumerate}
\end{RRULE}

We use the following lemma.
\begin{lemma}\label{lem:swaptwinproperty}
Let $A$ be a set of vertices that are pairwise twins in $G$. Let $G'$ be the graph obtained from $G$ by swapping the adjacency relation between every pair of two distinct vertices in $A$.
Then $(G, S, k)$ is a \YES-instance if and only if $(G', S, k)$ is a \YES-instance.
\end{lemma}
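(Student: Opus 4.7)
The plan is to prove the stronger equivalence that for every $Q \subseteq V(G)\setminus S$ with $|Q|\le k$, the graph $G-Q$ is distance-hereditary if and only if $G'-Q$ is; this immediately implies the lemma. The whole argument will rest on the observation, already recorded in the paper right after the introduction of DH obstructions, that no DH obstruction contains a pair of twins.

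First I would verify that the swap preserves the twin relation on $A$: if $A$ is pairwise twin in a graph $H$, then $A$ is also pairwise twin in the swapped graph $H'$. This is immediate, since for any $u,v \in A$ the neighborhoods outside $A$ are untouched, and inside $A\setminus\{u,v\}$ both $u$ and $v$ end up with the same neighborhood (either all or none of $A\setminus\{u,v\}$, depending on the direction of the swap). Applying this to $H = G-Q$, the set $A\setminus Q$ is pairwise twin in both $G-Q$ and $G'-Q$.

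I would then run the forward direction of the main equivalence; the reverse is symmetric, since the swap is an involution and, by the preservation just established, $A$ is also pairwise twin in $G'$. Suppose for contradiction that $G-Q$ is distance-hereditary but $G'-Q$ contains an induced DH obstruction $H$. Because $G$ and $G'$ agree on every edge with at least one endpoint outside $A$, the induced subgraphs $G[V(H)]$ and $G'[V(H)]$ can differ only on edges inside $V(H)\cap A$. Since DH obstructions are twin-free and $V(H)\cap A$ consists of pairwise twins in $G'$, we must have $|V(H)\cap A|\le 1$; then $G[V(H)] = G'[V(H)]$, so $H$ is also an induced DH obstruction in $G-Q$, contradicting our assumption.

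The only nontrivial ingredient, which I anticipate will be the main obstacle, is the verification that DH obstructions are twin-free. I would justify this by a direct inspection of the house, gem, and domino (each on at most $6$ vertices, with distinct neighborhoods verifiable by hand), together with the observation that a pair of twins in an induced cycle $C_\ell$ with $\ell \ge 5$ would force either a triangle (if the twins are adjacent, since each then has a unique additional neighbor which would have to coincide) or a $4$-cycle (if the twins are non-adjacent and share both of their two neighbors), neither of which occurs in $C_\ell$ for $\ell \ge 5$. Beyond that, the argument is entirely local and requires no manipulation of the canonical split decomposition.
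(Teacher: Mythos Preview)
Your proposal is correct and follows essentially the same approach as the paper's proof: both arguments rest on the facts that the swap preserves the pairwise-twin property of $A$ and that no DH obstruction contains two twins, from which it follows that the swap can neither create nor destroy DH obstructions. The paper's version is terser (it simply states these two facts and concludes), whereas you spell out the stronger per-$Q$ equivalence and give an explicit verification that DH obstructions are twin-free, but the underlying idea is identical.
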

\begin{proof}
Note that either $G[A]$ is a complete graph or it has no edges. Therefore, $A$ is again a set of vertices that are pairwise twins in $G'$.
Since each DH obstruction contains at most one vertex from a set of twins (and hence, at most one vertex from $A$), swapping the adjacency on $A$ will neither introduce nor remove DH obstructions from $G$.
Hence it is easy to check that  $(G, S, k)$ is a \YES-instance if and only if $(G', S, k)$ is a \YES-instance.
\end{proof}
\begin{lemma}
Reduction Rule~\ref{rrule:leaf} is safe.
\end{lemma}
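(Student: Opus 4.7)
My plan is to derive safety from Lemma~\ref{lem:swaptwinproperty} together with the fact that recomposing a marked edge in a split decomposition preserves the underlying graph. The argument will split into three pieces: (i) the unmarked vertices of $B$ form a single set of pairwise twins in $G$ itself, (ii) swapping the adjacencies within that set is therefore a twin-swap, so Lemma~\ref{lem:swaptwinproperty} gives an equivalent instance, and (iii) after the swap, the marked edge between $B$ and $B'$ becomes a simple decomposition of a single star (in case (1)) or complete (in case (2)) bag, so recomposing it does not alter the underlying graph.

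The first step is essentially definitional. The equivalence $\sim$ is defined on $V(G-S)$ by being twins in $G$ and not merely in $G-S$; hence the hypothesis that the unmarked vertices of $B$ form exactly one twin class in $G-S$ directly yields that any two such vertices $u,v$ have identical neighborhoods in $V(G)\setminus\{u,v\}$, including their neighbors in $S$. This is exactly the hypothesis needed to invoke Lemma~\ref{lem:swaptwinproperty} with $A$ being the set of unmarked vertices of $B$, which will immediately hand us an equivalent intermediate instance $(G', S, k)$ obtained by swapping the adjacency inside $A$.

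For the third step I would verify the decomposition moves in each case. In case (1), $B$ is a complete leaf bag with its unique marked vertex $m$ joined by the marked edge to a leaf $m'$ of the star $B'$; after the swap the unmarked vertices of $B$ become pairwise non-adjacent while each remains adjacent to $m$, so $B$ is now a star with center $m$. The marked edge $mm'$ then links the centers of two stars, which is precisely a simple decomposition of one larger star; recomposing it produces a star bag whose center is the center of $B'$ and whose leaves are the remaining leaves of $B'$ together with the unmarked vertices of $B$, with no change to the underlying graph. Case (2) is symmetric: $B$ is a star leaf bag with marked center $m$ and $B'$ is a complete bag, the swap makes $B$ a complete bag, and the marked edge then links two complete bags, whose recomposition yields a single complete bag on the union of the unmarked vertices.

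The only real obstacle I expect is the first step: one has to remember that $\sim$ is designed to capture twins in $G$ rather than in $G-S$, which is the very reason the rule is stated with the phrase \emph{exactly one twin class in $G-S$} rather than in terms of twins inside the single connected component. Once that identification is in place, chaining Lemma~\ref{lem:swaptwinproperty} (for the swap step) with the invariance of the underlying graph under recomposition (for the merge step) immediately gives that $(G,S,k)$ and the instance output by Reduction Rule~\ref{rrule:leaf} are equivalent.
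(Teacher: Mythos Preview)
Your proposal is correct and follows exactly the paper's approach: the paper's proof is the single line ``This follows from Lemma~\ref{lem:swaptwinproperty},'' and you have simply unpacked the implicit details (that the unmarked vertices of $B$ are twins in $G$, so the swap is covered by Lemma~\ref{lem:swaptwinproperty}, and that the subsequent recomposition leaves the underlying graph unchanged). One small slip: in case~(1) the marked edge $mm'$ does not link two star \emph{centers} but rather the center $m$ of the swapped $B$ to a \emph{leaf} $m'$ of $B'$; your description of the recomposed bag is nonetheless correct, so this does not affect the argument.
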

\begin{proof}
This follows from Lemma~\ref{lem:swaptwinproperty}.
\end{proof}

The last rule consider bags near to some leaf bag.
We illustrate in Figure~\ref{fig:bypassing2}.
Recall that 
a star bag $B$ is simple if 
its center is either unmarked or adjacent to a connected component of $D-V(B)$ consisting of one non-$S$-attached bag.

\begin{RRULE}\label{rrule:bypassing2}
Let $B_1, B_2, B_3$ be distinct bags in $D$ such that  
\begin{itemize}
\item $B_1$ is a non-$S$-attached leaf bag whose neighbor bag is $B_2$, and it is not a star whose leaf is adjacent to $B_2$, 
\item $B_2$ has exactly two neighbor bags $B_1$ and $B_3$, it is a star whose center is adjacent to $B_1$, and the set of unmarked vertices in $B_2$ is the unique $S$-attached twin class $C_2$ in $B_2$,  and
\item $B_3$ is a simple star bag.
\end{itemize}
Let $C_1$ be the set of unmarked vertices in $B_1$.
Then we remove $B_1$ and $B_2$, and
add a leaf set of unmarked vertices $\widetilde{C}$ with $\min ( \abs{C_1}, \abs{C_2})$ vertices to $B_3$, that is complete to $N_G(C_2)\cap S$ and has no other neighbors in $S$.
\end{RRULE}

Note that this rule can potentially create an induced cycle of length $6$. So, we need to run Branching Rule~\ref{brule:threevertices} after applying Reduction Rule~\ref{rrule:bypassing2}. 
We confirm the safeness of Reduction Rule~\ref{rrule:bypassing2} in the following lemma.

  \begin{figure}[t]
    
    \centering
      \includegraphics[scale=0.55]{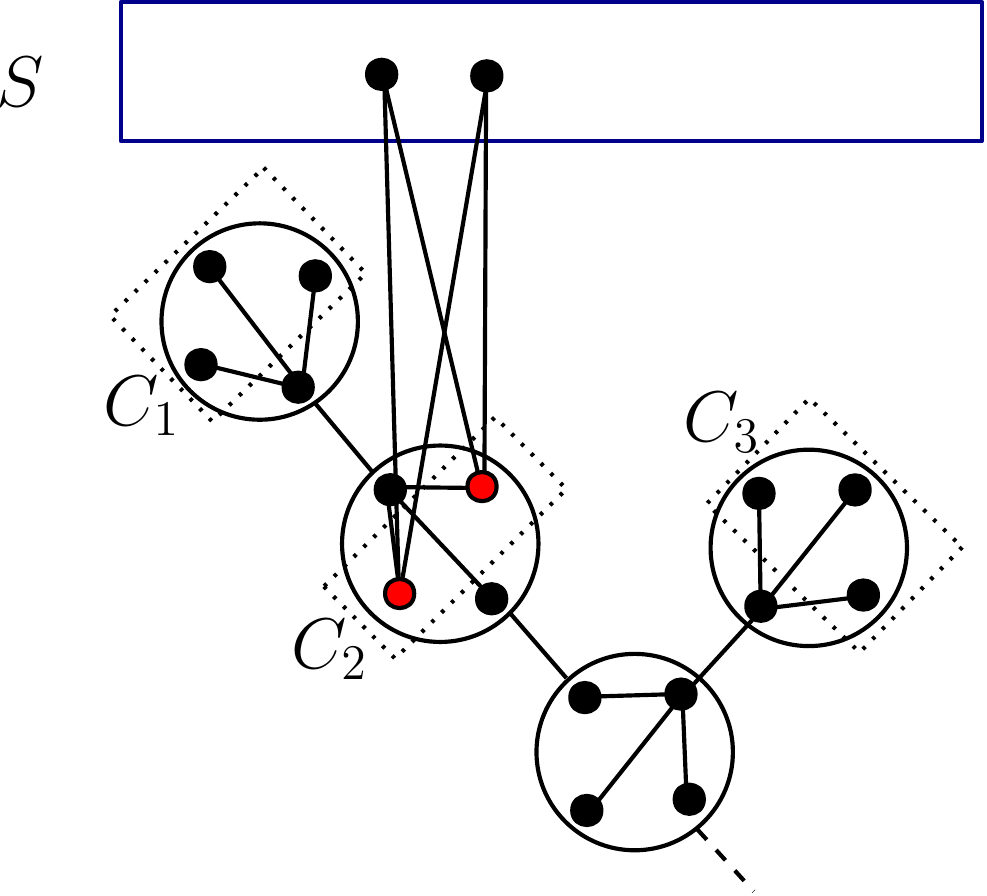}
    \qquad
      \includegraphics[scale=0.55]{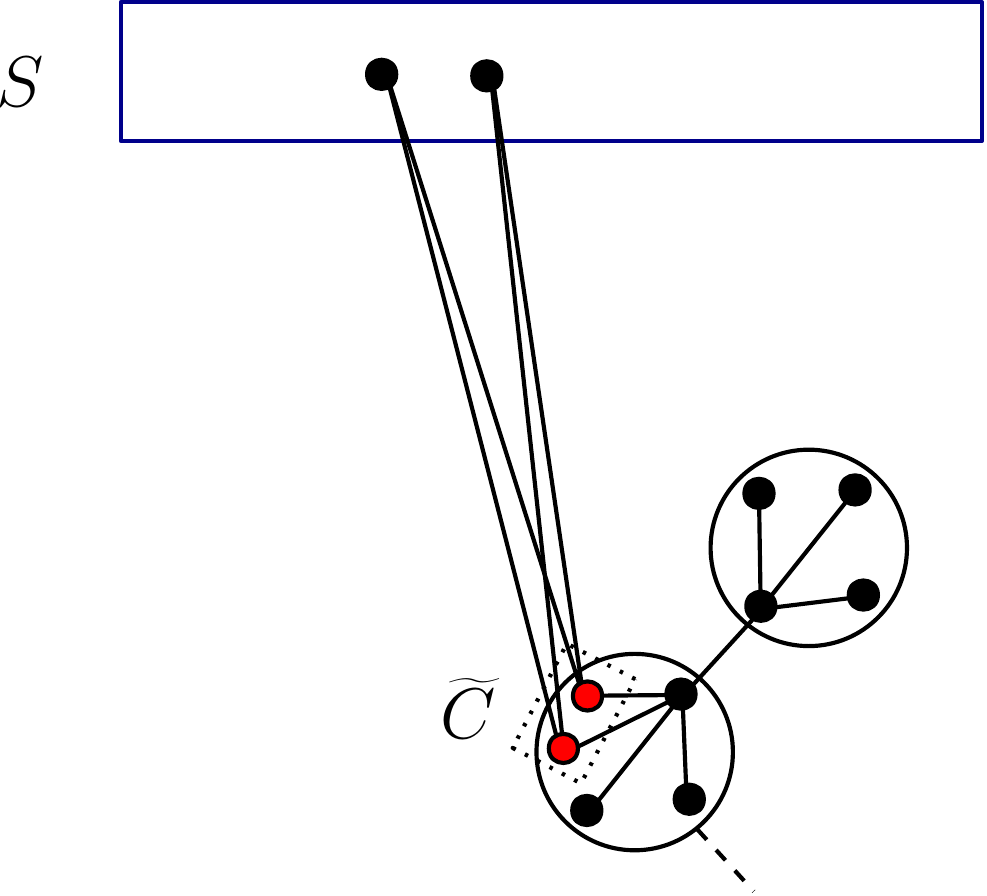}
        
    \caption{Reduction Rule~\ref{rrule:bypassing2}.} \label{fig:bypassing2}
      
  \end{figure}

\begin{lemma}
\label{lem:bypassing2}
Reduction Rule~\ref{rrule:bypassing2} is safe.
\end{lemma}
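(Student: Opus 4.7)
The plan is to prove both directions of the equivalence $(G, S, k) \in \YES \Leftrightarrow (G', S, k) \in \YES$, where $G'$ denotes the graph produced by the rule. A key preliminary step is to characterise the neighbourhood of each $\widetilde{c} \in \widetilde{C}$ inside $G'$. Setting $A := N_G(C_1) \setminus (C_1 \cup C_2 \cup S)$, I would apply Lemma~\ref{lem:splitadj} to the modified canonical split decomposition to show that the alternating path starting from $\widetilde{c}$ through the center of $B_3$ reaches exactly the unmarked vertices that $C_1$ used to reach through the chain $B_1$--$B_2$--$B_3$ in $D$. Together with the $S$-adjacencies prescribed by the rule, this yields $N_{G'}(\widetilde{c}) = A \cup (N_G(C_2) \cap S)$; moreover, as leaves of a star bag, the vertices of $\widetilde{C}$ form a set of false twins in $G'$.

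For the forward direction, given a solution $T$ for $G$, partition $T = T_0 \cup T_1 \cup T_2$ with $T_0 := T \setminus (C_1 \cup C_2)$ and $T_i := T \cap C_i$, and set $T' := T_0 \cup \widetilde{T}$ where $\widetilde{T} \subseteq \widetilde{C}$ has size $\min(|T_1| + |T_2|,\, \min(|C_1|, |C_2|))$; this yields $|T'| \le |T|$. A hypothetical DH obstruction $H'$ in $G' - T'$ contains at most one vertex from $\widetilde{C}$, because DH obstructions have no twins. If $H'$ meets $\widetilde{C}$ in $\emptyset$ it directly embeds into $G - T$. Otherwise $\{\widetilde{c}\} = V(H') \cap (\widetilde{C} \setminus \widetilde{T})$ and the two neighbours of $\widetilde{c}$ in $H'$ lie in $A \cup (N_G(C_2) \cap S)$; depending on how they split between $A$ and $N_G(C_2) \cap S$, I would replace $\widetilde{c}$ by a single vertex $c_1 \in C_1 \setminus T_1$, by a single vertex $c_2 \in C_2 \setminus T_2$, or by the two-vertex path $c_1 c_2$, and use that $C_1$ is complete to $C_2$ and that the external neighbourhoods of $\widetilde{c}$ and $c_1, c_2$ match in $G$ to conclude that the resulting $H$ is an induced DH obstruction of length at least $5$ in $G - T$.

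For the backward direction, let $T_0 := T' \setminus \widetilde{C}$ and $\widetilde{T} := T' \cap \widetilde{C}$. When $\widetilde{C} \setminus \widetilde{T} \ne \emptyset$, I would show that $T := T_0$ already suffices: any DH obstruction $H$ in $G - T_0$ either avoids $C_1 \cup C_2$ (and is then an obstruction in $G' - T'$) or meets each of $C_1, C_2$ in at most one vertex, in which case substituting those vertices by a fixed $\widetilde{c}^{*} \in \widetilde{C} \setminus \widetilde{T}$ produces an obstruction in $G' - T'$. When $\widetilde{T} = \widetilde{C}$, take $T := T_0 \cup C$ for $C \in \{C_1, C_2\}$ of size $r := \min(|C_1|, |C_2|)$, so $|T| = |T'|$. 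The key observation is that in a reduced instance, obstructions of ``type $\beta$'' (using only a vertex $c_1 \in C_1$ with both of its $H$-neighbours in $A$) cannot occur in $G$ at all, because a careful trace of the alternating paths from any $y \in A$ in the canonical split decomposition shows $N_G(y)$ is contained in $C_1$ together with a set of false twins of $y$, forcing any cycle through $c_1$ with two $A$-neighbours to have an internal chord. Thus when $C = C_2$ (possible when $|C_2| \le |C_1|$) no type-$\beta$ problem arises, and when $C = C_1$ (possible when $|C_1| \le |C_2|$) any surviving ``type $\gamma$'' obstruction (using a vertex of $C_2$ with both $H$-neighbours in $N_G(C_2) \cap S$) can be lifted to an obstruction of $G' - T'$ by replacing $c_2$ with the short $S$-path guaranteed by Lemma~\ref{lem:shortdistance}(2), contradicting that $G' - T'$ is distance-hereditary.

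The main obstacle I expect is the detailed chord analysis needed to make these substitutions rigorous: in both the forward ``lift an obstruction from $G'$ to $G$'' step and the backward ``lift a type-$\gamma$ obstruction from $G$ to $G'$'' step, the replacement can introduce extra edges, and one must argue that any such extra edge either yields a shorter induced cycle that contradicts the absence of small DH obstructions in $G$ (Lemma~\ref{lem:shortdistance}(1)) or a valid obstruction in the target graph. Handling this case-by-case, using Lemma~\ref{lem:dhobs} to certify DH obstructions and Lemma~\ref{lem:shortdistance}(2)--(4) to constrain $S$-paths and forbid certain induced $P_5$-structures in $G - S$, is where the bulk of the technical work lies.
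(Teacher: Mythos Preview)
Your overall architecture matches the paper's proof closely: both argue the two directions by a case split on how the deletion set meets $C_1$, $C_2$, respectively $\widetilde{C}$, and both transfer obstructions between $G$ and $G'$ by substituting $\widetilde{c}$ for $c_2$ or for the edge $c_1c_2$ (the paper invokes Observation~\ref{obs:sub} for the latter, and handles the ``only $c_2$'' subcase via the bypassing-vertex Lemma~\ref{lem:twovertexinS} together with Lemma~\ref{lem:dhobs}, which is exactly your ``short $S$-path'' idea). The paper additionally assumes the deletion set is \emph{minimum}, which collapses your partial-$\widetilde{T}$ case into the empty case; your treatment without minimality is fine, just slightly longer.

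There is, however, a genuine gap in your justification for why ``type~$\beta$'' obstructions cannot occur. You claim that for $y\in A$ a trace of alternating paths shows $N_G(y)\subseteq C_1\cup\{\text{false twins of }y\}$. This is false: the hypotheses of Reduction Rule~\ref{rrule:bypassing2} say only that $B_3$ is a \emph{simple} star bag, which constrains where its \emph{center} points, not its leaves. The bag $B_3$ may have further marked leaves leading to arbitrary other parts of $D$, and $y$ (whether it is the unmarked center of $B_3$ or lies in the single non-$S$-attached bag beyond that center) is adjacent in $G$ to whatever those branches reach. So $N_G(y)$ can be much larger than $C_1$ plus twins of $y$. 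The correct---and simpler---reason type~$\beta$ is impossible is the one the paper uses: $A=N_G(C_1)\setminus C_2$ is a \emph{single twin class} (this follows directly from the simplicity of $B_3$), so any DH obstruction containing two vertices of $A$ would contain a pair of twins, which is impossible. Replace your alternating-path argument with this observation and the plan goes through.
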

\begin{proof}
As $B_3$ is a simple star bag and $C_1$ has no neighbors in $S$, 
$N_G(C_1)\setminus C_2$ is exactly 
the center of $B_3$ if it is unmarked, and otherwise, the set of unmarked vertices in the bag where the center of $B_3$ is adjacent.
Let $C_3=N_{G}(C_1)\setminus C_2$. 
We remark that $C_3$ is a twin class. 

Let $G'$ be the resulting graph obtained by applying Reduction Rule~\ref{rrule:bypassing2}.
Note that $\widetilde{C}$ is a set of pairwise twins in $G'$ (it may not be a twin class), 
and $G-(C_1\cup C_2)=G'-\widetilde{C}$.

We claim that $(G, S, k)$ is a \YES-instance if and only if $(G', S, k)$ is a \YES-instance.
Suppose $G$ has a minimum vertex set $T$ such that $\abs{T}\le k$, $S\cap T=\emptyset$, and $G-T$ is distance-hereditary.
We divide cases depending on whether $T$ contains a vertex of $C_1\cup C_2$ or not.

\subparagraph{\textbf{Case 1.} $T$ contains a vertex in $C_1\cup C_2$:}  
We observe that since $C_i$ is a twin class and $T$ is a minimum solution, 
if $T$ contains a vertex of $C_i$, then $T$ contains all vertices in $C_i$.
Thus,  $T$ fully contains one of $C_1$ and $C_2$.
Since $\widetilde{C}=\min(\abs{C_1}, \abs{C_2})$, the set $T'=(T\setminus (C_1\cup C_2))\cup \widetilde{C}$ has size at most $k$. Moreover, we conclude that
$G'-T'$ is distance-hereditary, as it is an induced subgraph of $G-T$.

\subparagraph{\textbf{Case 2.} $T$ contains no vertex in $C_1\cup C_2$:}  

Suppose that $G'-T$ contains a DH obstruction $H$.
If $H$ does not contain a vertex in $\widetilde{C}$, then $H$ is an induced subgraph of $G-T$, contradicting our assumption.
Thus, $H$ contains a vertex in $\widetilde{C}$, and as every pair of two distinct vertices in $\widetilde{C}$ is a twin, we have $\abs{V(H)\cap \widetilde{C}}=1$.
Let $v$ be the vertex in $V(H)\cap \widetilde{C}$, and let $w,z$ be the two neighbors of $v$ in $H$. As $C_3$ is a twin class in $G'-S$, 
at least one of $w$ and $z$ is contained in $S$. Without loss of generality, we assume $w\in S$.

If $z\in S$, then we can obtain a DH obstruction by replacing $v$ with a vertex of $C_2$ in $G$,
which implies that $G-T$ contains a DH obstruction. Thus, we may assume that $z$ is contained in $V(G-S)$, and henceforth we have $z\in C_3$.
For two vertices $c_1\in C_1$ and $c_2\in C_2$, 
we can obtain a DH obstruction in $G-T$ from $H$ by removing $v$ and adding $c_1,c_2$, which is equivalent (up to isomorphism) to subdividing the unique edge in $H$ incident to $v$ and a vertex in $C_3$. By Observation~\ref{obs:sub}, we know that the resulting graph $G-T$ must then also contain a DH obstruction, contradicting our assumption.

\medskip

For the converse direction, suppose that $G'$ has a minimum vertex set $T'$ such that $\abs{T'}\le k$, $S\cap T'=\emptyset$, and $G'-T'$ is distance-hereditary.
Similar to the forward direction, we divide cases depending on whether $T'$ contains a vertex in $\widetilde{C}$ or not.

\subparagraph{\textbf{Case 1.} $T'$ contains no vertex in $\widetilde{C}$:}  
Suppose $G-T'$ has a DH obstruction $H$. Since $G$ has no small DH obstructions due to the application of branching rules, $H$ should be an induced cycle of length at least $7$.
We have $V(H)\cap (C_1\cup C_2)\neq \emptyset$, otherwise $H$ is an induced subgraph of $G'-T'$, which is contradiction.
As $C_1$ and $C_2$ are twin classes, $H$ contains at most one vertex from each of $C_1$ and $C_2$.

We claim that $H$ contains one vertex from each of $C_1$ and $C_2$.
Suppose $V(H) \cap C_1 \neq \emptyset$ and $V(H) \cap C_2 = \emptyset$. Then the two neighbors of the vertex on $C_1\cap V(H)$ belong to $C_3$, 
since $C_3 = N_G(C_1) \setminus C_2$. But $C_3$ forms a twin class, and an induced cycle of length at least $7$ cannot contain two vertices from the same twin class; a contradiction.
Suppose $V(H)\cap C_1= \emptyset$ but $V(H)\cap C_2\neq \emptyset$.  
Then the two neighbors of the vertex $v$ in $V(H)\cap C_2$ in $H$ are contained in $S$.
Let $P=p_1p_2p_3p_4p_5$ be the subpath of $H$ where $p_3=v$.
By Lemma~\ref{lem:twovertexinS}, there is a bypassing vertex for $P$ and $v$, and thus $G[(V(H)\setminus \{v\})\cup S]$ contains a DH obstruction, which is also contained in $G'-T'$.
This constitutes a contradiction. We conclude that $H$ contains one vertex from each of $C_1$ and $C_2$.

 It further implies that $H$ contains one vertex from each of $C_3=N_G(C_1)\setminus C_2$ and $N_G(C_2)\cap S$, because $N_G(C_2)\setminus C_1\subseteq S$. 
Since $H$ has length at least $7$, we can obtain an induced cycle of length at least $6$ in $G'-T$ from $H$ by removing the vertices in $C_1\cup C_2$ and adding one vertex of $\widetilde{C}$, which is contradiction.

\subparagraph{\textbf{Case 2.} $T'$ contains a vertex in $\widetilde{C}$:}  
As $\widetilde{C}$ is a twin class and $T'$ is a minimum solution for $G'$, we have $\widetilde{C}\subseteq T'$.
We obtain a set $T$ from $T'$ by removing $\widetilde{C}$, and adding $C_1$ if $\abs{C_1}=\abs{\widetilde{C}}$ and adding $C_2$ if $\abs{C_2}=\abs{\widetilde{C}}$. If $\abs{C_1}=\abs{C_2}$, then we add one of them chosen arbitrarily.
Clearly, $\abs{T}\le \abs{T'}\le k$.
We claim that $G-T$ is distance-hereditary.

In case when $C_2\subseteq T$, we observe that every induced cycle of length at least $7$ containing a vertex in $C_1$ has to contain two vertices in $C_3$, which is not possible.
Thus, we may assume $C_1\subseteq T$.
Note that $N_G(C_2)\subseteq S\cup C_1$. Thus, whenever there is an induced cycle of length at least $7$ in $G-T$ containing a vertex in $C_2$, by Lemma~\ref{lem:twovertexinS} there exists another DH obstruction which does not contain any vertex in $C_2$, contradicting the assumption that $G'-T'=G-(T\cup C_2)$ is distance-hereditary.
Hence we conclude that $G-T$ is distance-hereditary.
\end{proof}

\begin{PROP}
\label{prop:applicationRR}
 Let $(G, S, k)$ be an instance reduced under Branching Rules~\ref{brule:threevertices} and \ref{brule:reducecomponent}.
Given a connected component $H$ of $G-S$, we can in time $\mathcal{O}(\abs{V(G)}^6)$ either apply one of Reduction Rules~\ref{rrule:dhcomponent}--\ref{rrule:bypassing2}, or correctly answer that Reduction Rules~\ref{rrule:dhcomponent}--\ref{rrule:bypassing2} cannot be applied anymore. 
\end{PROP}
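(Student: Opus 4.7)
The plan is to compute the canonical split decomposition $D$ of $H$ once (via Theorem~\ref{thm:dahlhaus}) and then test the preconditions of Reduction Rules~\ref{rrule:dhcomponent}--\ref{rrule:bypassing2} one by one, returning the first rule whose precondition is met together with its witness, or otherwise reporting that none applies. Before scanning the rules, I would precompute several auxiliary structures: the type (prime, star, or complete) of each bag together with its unmarked/marked vertex partition; the twin classes of $G-S$ along with a flag indicating whether each class is $S$-attached; the adjacency matrix of $G$; and a table $\mathsf{cn}[u][v]$ which, for every pair $u,v\in V(G)$, records whether some $x\in S$ is adjacent to both $u$ and $v$. All of this can be built in $\mathcal{O}(\abs{V(G)}^3)$ time.

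Reduction Rules~\ref{rrule:dhcomponent} and~\ref{rrule:leaftoS} then reduce to local scans over $D$: I count the $S$-attached twin classes of $H$ for the former, and for the latter I examine each star bag whose center is unmarked and test each of its unmarked leaves against $S$. Reduction Rule~\ref{rrule:leaf} can be tested by enumerating the leaf bags $B$ of $D$ together with their unique neighbor bag $B'$ and verifying conditions (1) or (2), all of which are local to the types of $B$ and $B'$ and to the twin-class structure of $B$. Reduction Rule~\ref{rrule:bypassing2} can be tested by enumerating, for each leaf bag $B_1$ of $D$, the bags $B_2$ and $B_3$ on the path from $B_1$ in $D$ and checking the listed structural properties (including whether $B_3$ is a simple star bag, which amounts to inspecting the component of $D-V(B_3)$ incident to the center of $B_3$). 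Since $D$ has $\mathcal{O}(\abs{V(H)})$ bags and each individual check is polynomial, all four of these rules can be tested comfortably within the claimed runtime bound.

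The main obstacle, and the step that dominates the overall running time, is Reduction Rule~\ref{rrule:p5middle}. For each $v\in V(H)$ I would enumerate all ordered $4$-tuples $(p_1,p_2,p_4,p_5)\in V(G)^4$, test in constant time from the adjacency matrix whether $p_1p_2vp_4p_5$ is an induced path in $G$, and, whenever it is, look up $\mathsf{cn}[p_2][p_4]$ to decide in constant time whether some $x\in S$ is a bypassing vertex for that path. If for some $v$ no such path fails the bypassing test, then the precondition of Rule~\ref{rrule:p5middle} is satisfied at $v$; otherwise the rule does not apply at $v$. This costs $\mathcal{O}(\abs{V(G)}^4)$ per vertex $v$ and $\mathcal{O}(\abs{V(G)}^5)$ in total over all vertices of $H$. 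Combined with the $\mathcal{O}(\abs{V(G)}^3)$ preprocessing and the polynomial checks for the other four rules, the whole routine runs within $\mathcal{O}(\abs{V(G)}^6)$, as claimed. Correctness of the ``not applicable'' answer follows from the fact that each rule's precondition has been exhaustively tested, so if all tests fail then indeed none of Reduction Rules~\ref{rrule:dhcomponent}--\ref{rrule:bypassing2} can be applied.
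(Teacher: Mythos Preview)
Your proposal is correct and follows essentially the same approach as the paper: compute the canonical split decomposition of $H$, classify twin classes and their $S$-attachment, and then test each rule's precondition, with Reduction Rule~\ref{rrule:p5middle} being the bottleneck. The only notable difference is that you precompute the table $\mathsf{cn}[u][v]$, which lets you test each candidate path $p_1p_2vp_4p_5$ for a bypassing vertex in constant time; the paper instead spends $\mathcal{O}(\abs{V(G)})$ per path, giving $\mathcal{O}(\abs{V(G)}^5)$ per vertex $v$ and $\mathcal{O}(\abs{V(G)}^6)$ in total, whereas your analysis in fact yields $\mathcal{O}(\abs{V(G)}^5)$ overall---a harmless improvement that still fits comfortably within the stated bound.
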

\begin{proof}
We first compute the canonical split decomposition $D$ of $H$ in time  $\mathcal{O}(\abs{V(G)}+\abs{E(G)})$ using Theorem~\ref{thm:dahlhaus}.
Then we classify twin classes in $D$ by testing two unmarked vertices in a bag have the same neigbhorhood in $S$ or not. 
This can be done in time $\mathcal{O}(\abs{V(G)}^2)$. 
At the same time, we can also test whether a twin class is $S$-attached or not. 
Note that the total number of bags in canonical split decompositions of connected components of $G-S$ is $\mathcal{O}(\abs{V(G)})$.

We can apply Reduction Rules~\ref{rrule:dhcomponent}, \ref{rrule:leaftoS}, \ref{rrule:leaf} in time $\mathcal{O}(\abs{V(G)})$, if one of them can be applied. 
We can apply Reduction Rule~\ref{rrule:p5middle} in time $\mathcal{O}(\abs{V(G)}^5)$ for fixed vertex $v$, 
and thus, we can test for all vertices $v\in V(G)\setminus S$ in time $\mathcal{O}(\abs{V(G)}^6)$.
For Reduction Rule~\ref{rrule:bypassing2}, we need to consider three bags, which are uniquely identified by the first (leaf) bag among them, to check whether they satisfy preconditions of the rule. We can verify the preconditions of Reduction Rule~\ref{rrule:bypassing2} in constant time and 
thus this step takes time $\mathcal{O}(\abs{V(G)})$.
We conclude that we can in time $\mathcal{O}(\abs{V(G)}^6)$ either apply one of Reduction Rules~\ref{rrule:dhcomponent}--\ref{rrule:bypassing2}, or correctly answer that Reduction Rules~\ref{rrule:dhcomponent}--\ref{rrule:bypassing2} cannot be applied anymore. 
\end{proof}

\subsection{Structural Properties obtained after Exhaustive Application of Rules}\label{subsec:structure}

In this subsection, we discuss structural properties obtained after the exhaustive application of both branching and reduction rules.
We say that $G$ and the canonical split decompositions of connected components of $G-S$ are \emph{reduced} 
if Branching Rules~\ref{brule:threevertices}--\ref{brule:reducecomponent} and Reduction Rules~\ref{rrule:dhcomponent}--\ref{rrule:bypassing2} cannot be applied anymore.
We assume that the given instance is reduced in this subsection.

The following observation is a direct consequence of the exhaustive application 
of Reduction Rule~\ref{rrule:leaftoS}.

\begin{observation}\label{obs:starrestriction}
If the center of a star bag in $D$ is unmarked, then this bag contains no unmarked leaves.
\end{observation}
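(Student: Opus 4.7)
The plan is to prove this observation by a direct contradiction argument that invokes Reduction Rule~\ref{rrule:leaftoS}. Specifically, I would suppose for contradiction that in the reduced instance there exists a star bag $B$ in the canonical split decomposition $D$ of some connected component of $G-S$ such that the center $x$ of $B$ is unmarked and $B$ additionally contains some unmarked leaf $v$. My goal is then to verify that the preconditions of Reduction Rule~\ref{rrule:leaftoS} are met for this pair $(B,v)$, which will contradict the assumption that the instance is reduced.

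First I would observe that the hypothesis of Reduction Rule~\ref{rrule:leaftoS} requires only that $B$ be a star bag whose center is unmarked and that $v$ be an unmarked leaf of $B$; both conditions are immediate from our contradiction hypothesis. The rule then acts in one of two ways depending on whether $v$ has a neighbor in $S$: if $v$ has no such neighbor, the rule removes $v$, and otherwise it moves $v$ into $S$. In either case, the rule is applicable and would modify the instance, so this directly contradicts the fact that the instance is reduced (i.e., that Reduction Rules~\ref{rrule:dhcomponent}--\ref{rrule:bypassing2} cannot be applied anymore). I would briefly note that Lemma~\ref{lem:leaftoS} guarantees the safeness of this rule, which is why its exhaustive application is legitimate in producing the reduced instance we are analyzing.

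There is essentially no technical obstacle in this argument; the observation is built into the very definition of being reduced with respect to Reduction Rule~\ref{rrule:leaftoS}. The only subtlety worth mentioning explicitly is that both branches of the rule (removal and movement to $S$) genuinely constitute an application of the rule, so the exhaustive application closes off both possibilities simultaneously. No additional properties of the split decomposition, the set $S$, or the branching rules are needed beyond the statement of Reduction Rule~\ref{rrule:leaftoS} itself.
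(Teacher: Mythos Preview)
Your proposal is correct and matches the paper's own justification exactly: the paper simply states that the observation is a direct consequence of the exhaustive application of Reduction Rule~\ref{rrule:leaftoS}, and your contradiction argument is precisely the unpacking of that claim.
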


Our next goal is to establish the following lemma.

\begin{lemma}\label{lem:twinclassreduction}
Every bag of $D$ contains at most one $S$-attached twin class.
\end{lemma}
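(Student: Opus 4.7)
The plan is to assume for contradiction that some bag $B$ of $D$ contains two distinct $S$-attached twin classes $C_1$ and $C_2$, and then derive, via a case analysis on the type of $B$, either a small DH obstruction in $G$ (contradicting that Branching Rule~\ref{brule:threevertices} no longer applies) or the applicability of one of the reduction rules (contradicting that $G$ is reduced). Throughout, adjacency in $G$ will be translated to alternating paths in $D$ via Lemma~\ref{lem:splitadj}.

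My first step is to pin down the structure of $C_1, C_2$ inside $B$. If $B$ is a star bag, Observation~\ref{obs:starrestriction} prevents an unmarked center from coexisting with any unmarked leaf; since two twin classes demand at least two unmarked vertices in $B$, the center of $B$ must be marked and both $C_1$ and $C_2$ must lie among the unmarked leaves. Two leaves of a star bag are non-adjacent in $G$, because their only in-bag link uses two consecutive unmarked edges through the center and is therefore not alternating. Hence $C_1$ is anti-complete to $C_2$. If instead $B$ is a complete bag, then all its unmarked vertices are pairwise adjacent in $G$, so $C_1$ is complete to $C_2$. In both cases, for every $v \in V(G-S)\setminus B$ the unique path in $D$ from $v$ enters $B$ through a fixed marked vertex of $B$, and one checks that this forces $N_G(C_1)\cap (V(G-S)\setminus B) = N_G(C_2)\cap (V(G-S)\setminus B)$. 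Since $C_1\ne C_2$ as twin classes in $G$, the mismatch must lie in $S$: without loss of generality I pick $x\in S$ with $x\in N_G(C_1)\setminus N_G(C_2)$, pick $y\in N_G(C_2)\cap S$ (so $y\ne x$), and fix representatives $a_1\in C_1$, $a_2\in C_2$.

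In the star-bag case, $a_1 a_2\notin E(G)$; the marked center of $B$ typically connects $C_1$ and $C_2$ to a common $V(G-S)$-neighbor, so Lemma~\ref{lem:relationt1t2}(1) yields $z\in N_G(C_1)\cap N_G(C_2)\cap S$. Applying Lemma~\ref{lem:relationt1t2}(2,3) together with Lemma~\ref{lem:shortdistance}(2) to the pairs $\{x,z\}\subseteq N_G(a_1)\cap S$ and $\{y,z\}\subseteq N_G(a_2)\cap S$ tightly constrains the adjacencies on $\{x,y,z\}$: any induced path of length at least $3$ in $G[S]$ between these vertices, together with $a_1$ or $a_2$, would already produce a small DH obstruction via Lemma~\ref{lem:dhobs}. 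With these constraints in place, one of the five-vertex sets $\{x,a_1,z,a_2,y\}$ or $\{x,a_1,a_2,y,p\}$ (where $p\in S$ is supplied by Lemma~\ref{lem:shortdistance}(2) when $xy\notin E(G)$) induces $C_5$, a house, or a gem---contradicting reducedness under Branching Rule~\ref{brule:threevertices}. In the edge case where $C_1,C_2$ have no common $V(G-S)$-neighbor, $B$ becomes effectively pendant and the same five-vertex configuration within $S\cup C_1\cup C_2$ gives the contradiction.

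The complete-bag case runs along the same lines but uses Lemma~\ref{lem:completerelationt1t2} in place of Lemma~\ref{lem:relationt1t2}. Starting from the induced $P_3$-path $x a_1 a_2$ and extending by $y$, when $xy\notin E(G)$ one appends a bypassing vertex $p$ produced by Lemma~\ref{lem:shortdistance}(2) and applies Lemma~\ref{lem:dhobs} to the induced path $xa_1a_2y$ augmented by $p$, yielding an induced $C_5$ or a small DH obstruction containing $p$. The main obstacle I expect is the subcase $xy\in E(G)$: there the natural four-vertex set $\{x,a_1,a_2,y\}$ is only a $C_4$, so one must carefully produce a fifth vertex---either a second element of $C_1$ or $C_2$, or a vertex forced to exist by Lemma~\ref{lem:completerelationt1t2}(2), or a common $S$-neighbor $z\in N_G(C_1)\cap N_G(C_2)\cap S$ extracted from a path argument---and then verify that the resulting induced subgraph is a genuine small DH obstruction. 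Once every subcase yields such a contradiction, Lemma~\ref{lem:twinclassreduction} follows.
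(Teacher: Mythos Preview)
Your plan diverges from the paper's proof in a way that does not close. The paper does \emph{not} exhibit a small DH obstruction; instead it shows that for every $v\in C_1$ and every induced path $w'wvzz'$ there is a bypassing vertex for $v$, so Reduction Rule~\ref{rrule:p5middle} applies and the instance was not reduced. Your proposal, by contrast, tries to assemble five vertices from $C_1\cup C_2\cup S$ (possibly plus a common $V(G{-}S)$-neighbour) that induce a house, gem, or $C_5$. That strategy can fail outright.

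Here is a concrete obstacle in the complete-bag case you already flag as delicate. Take $G-S$ to be a single complete bag on $\{a_1,a_1',a_2\}$ with $C_1=\{a_1,a_1'\}$ and $C_2=\{a_2\}$; let $S=\{x,z\}$ with $xz\in E(G)$, $N_G(C_1)\cap S=\{x,z\}$, and $N_G(C_2)\cap S=\{z\}$. Then $G\cong K_5-e$ is distance-hereditary, so there is no DH obstruction whatsoever, small or otherwise; in particular your four-vertex set $\{x,a_1,a_2,y\}$ is $K_4-e$ (not $C_4$), and every ``fifth vertex'' you list---a second element of $C_1$, something from Lemma~\ref{lem:completerelationt1t2}(2), or a common $S$-neighbour---either produces another distance-hereditary graph or does not exist. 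Yet the instance is not reduced: there is no induced $P_5$ at all, so Reduction Rule~\ref{rrule:p5middle} removes $a_1$. The moral is that the contradiction you need is not ``$G$ contains a small DH obstruction'' but ``every induced $P_5$ centred at $v\in C_1$ admits a bypassing vertex''; this is what the paper proves, with a case split on whether $w,z$ lie in $S$ or in $V(G{-}S)$ and repeated use of Lemmas~\ref{lem:relationt1t2} and~\ref{lem:completerelationt1t2}. Your sketch needs to be reoriented around that bypassing-vertex argument rather than around hunting for a single forbidden subgraph.
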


Before we formally prove Lemma~\ref{lem:twinclassreduction}, we briefly explain how the argument works.
Let $C_1$ and $C_2$ be two distinct $S$-attached twin classes in a bag $B$ such that neither of them consists of the center of a star and $(N_G(C_1)\setminus N_G(C_2))\cap S$ is non-empty.
If $B$ is a star, then $C_1$ is anti-complete to $C_2$ and $C_1$, $C_2$ have a common neighbor in $G-S$, and thus, $C_1$ and $C_2$ satisfy preconditions of Lemma~\ref{lem:relationt1t2}.
Lemma~\ref{lem:relationt1t2} implies that $(N_G(C_1)\cap N_G(C_2))\cap S$ is non-empty. 
Let $x\in (N_G(C_1)\setminus N_G(C_2))\cap S$ and $y\in (N_G(C_1)\cap N_G(C_2))\cap S$. 
We argue that whenever there is an induced path $w'wvzz'$ with $v\in C_1$ there is a bypassing vertex for $v$.
We describe an example case. For instance, when $w$ and $z$ are both contained in $V(G-S)$, they are contained in $N_G(C_1)\cap N_G(C_2)$. 
So, $x\in N_G(C_1)\setminus N_G(C_2)$ while $w, z ,y\in N_G(C_1)\cap N_G(C_2)$.
Thus, by (2) of Lemma~\ref{lem:relationt1t2},
if $x$ is adjacent to $y$, then $x$ should be adjacent to $w$ and $z$, which means that $x$ is a bypassing vertex for $v$.
If $x$ is not adjacent to $y$, then we could apply (3) of Lemma~\ref{lem:relationt1t2} to find a bypassing vertex for $v$.
We do a careful analysis depending on the places of $w$ and $z$, and also consider the case when $B$ is a complete bag.

 \begin{proof}[Proof of Lemma~\ref{lem:twinclassreduction}]
Suppose there is a bag containing two distinct $S$-attached twin classes $C_1$ and $C_2$.
By Observation~\ref{obs:starrestriction}, if $C_i$ consists of the center of a star, then there are no other unmarked vertices in the bag, and thus it is not possible.
Therefore, $C_i$ does not consist of the center of a star bag.
As $C_1$ and $C_2$ are distinct twin classes, $N_G(C_1)\cap S\neq N_G(C_2)\cap S$, and thus we have 
either $(N_G(C_1)\setminus N_G(C_2))\cap S\neq \emptyset$ or $(N_G(C_2)\setminus N_G(C_1))\cap S\neq \emptyset$.
Without loss of generality, we assume $(N_G(C_1)\setminus N_G(C_2))\cap S$ is non-empty.

For each $i\in \{1,2\}$, let $c_i\in C_i$ and let $T_i=N_G(C_i)\setminus C_{3-i}$. Let $x\in (T_1\setminus T_2)\cap S$.
We observe that $(T_1\setminus T_2)\cap V(G-S)=(T_2\setminus T_1)\cap V(G-S)=\emptyset$.
This is because $C_1$ and $C_2$ are contained in $B$, which is a complete bag or a star bag whose center is marked.

We claim that for every $v\in C_1$ and every induced path $H=w'wvzz'$,
there is a bypassing vertex for $H$ and $v$. 
This will imply that we can apply Reduction Rule~\ref{rrule:p5middle}, which leads a contradiction.

If $w, z\in S$, then by Lemma~\ref{lem:twovertexinS}, there is a bypassing vertex for $v$.
We may assume that $w$ or $z$ is contained in $G-S$.
Without loss of generality, we assume that $w$ is contained in $G-S$.
We depict cases in Figures~\ref{fig:lemma45star1} and \ref{fig:lemma45star2}.

  \begin{figure}[t]
\begin{subfigure}[b]{0.5\textwidth}
      \includegraphics[scale=0.5]{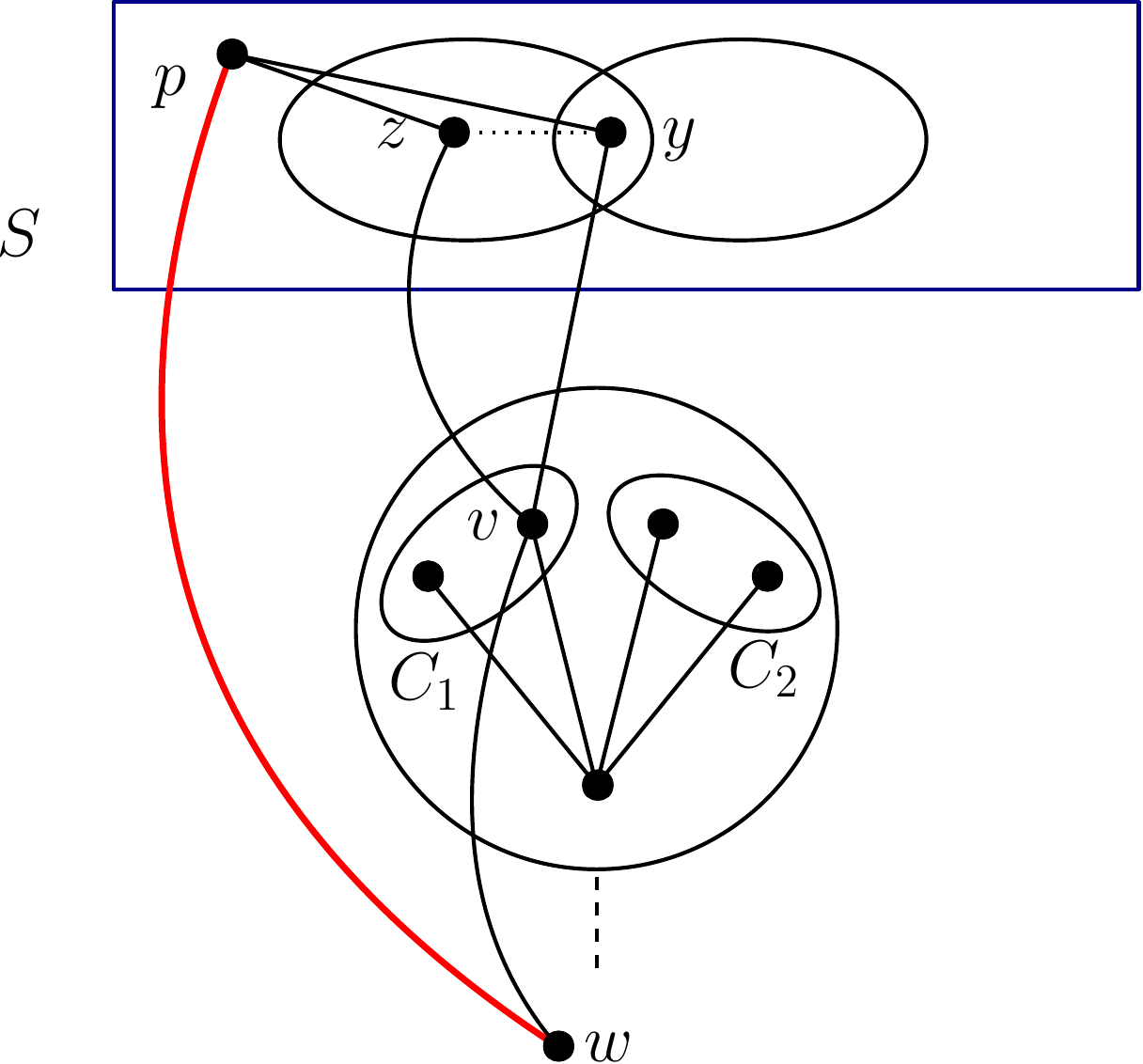}
      \caption{$z\in (T_1\setminus T_2)\cap S$}
      \end{subfigure}
      \qquad
\begin{subfigure}[b]{0.5\textwidth}
      \includegraphics[scale=0.5]{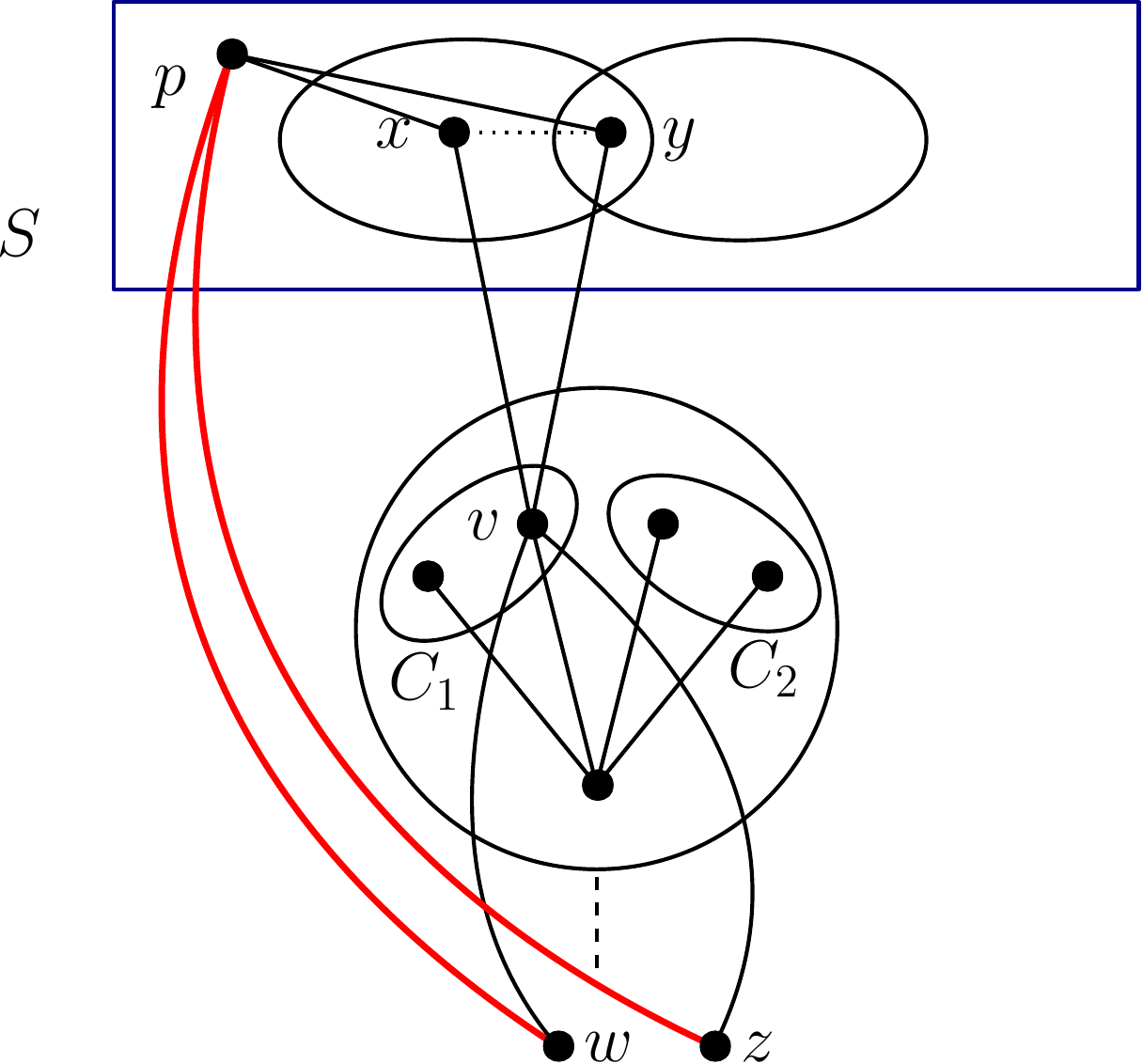}
      \caption{$z\in (T_1\cap T_2)\cap V(G-S)$}
      \end{subfigure}
            \caption{When $B$ is a star bag in Lemma~\ref{lem:twinclassreduction}.
            The red thick edges illustrate the edges whose existence is guaranteed by Lemma~\ref{lem:relationt1t2}. } \label{fig:lemma45star1}
  \end{figure}

\subparagraph{\textbf{Case 1.} $B$ is a star bag:}  
In this case, $C_1$ is anti-complete to $C_2$ and $w\in (T_1\cap T_2)\cap V(G-S)$.
By (1) of Lemma~\ref{lem:relationt1t2}, we have $(T_1\cap T_2)\cap S\neq \emptyset$. Let $y\in (T_1\cap T_2)\cap S$. 
We divide cases depending on whether $z\in (T_1\setminus T_2)\cap S$ or $z\in T_1\cap T_2$.

Suppose $z\in (T_1\setminus T_2)\cap S$. Note that both $y$ and $w$ are contained in $T_1\cap T_2$.
Since $zw\notin E(G)$, by (2) of Lemma~\ref{lem:relationt1t2},
$z$ is not adjacent to $y$.
Since $y$ and $z$ are neighbors of $v$ and $y,z$ are not adjacent, by (2) of Lemma~\ref{lem:shortdistance}, there is an induced path $zpy$ for some $p\in S$.
Then by (3) of Lemma~\ref{lem:relationt1t2}, $p$ is adjacent to $w$, and therefore, $p$ is a bypassing vertex.

Suppose $z\in T_1\cap T_2$. Recall that $x$ is a vertex in $(T_1\setminus T_2)\cap S$. If $x$ is adjacent to $y$, then by (2) of Lemma~\ref{lem:relationt1t2},
$x$ is adjacent to both $w$ and $z$, and thus $x$ is a bypassing vertex.
We may assume that $xy\notin E(G)$.
Then by (2) of Lemma~\ref{lem:shortdistance}, there is an induced path $xpy$ for some $p\in S$.
By (3) of Lemma~\ref{lem:relationt1t2}, $p$ is adjacent to both $w$ and $z$, and therefore, $p$ is a bypassing vertex, as required.

  \begin{figure}[t]
\begin{subfigure}[b]{0.5\textwidth}
      \includegraphics[scale=0.5]{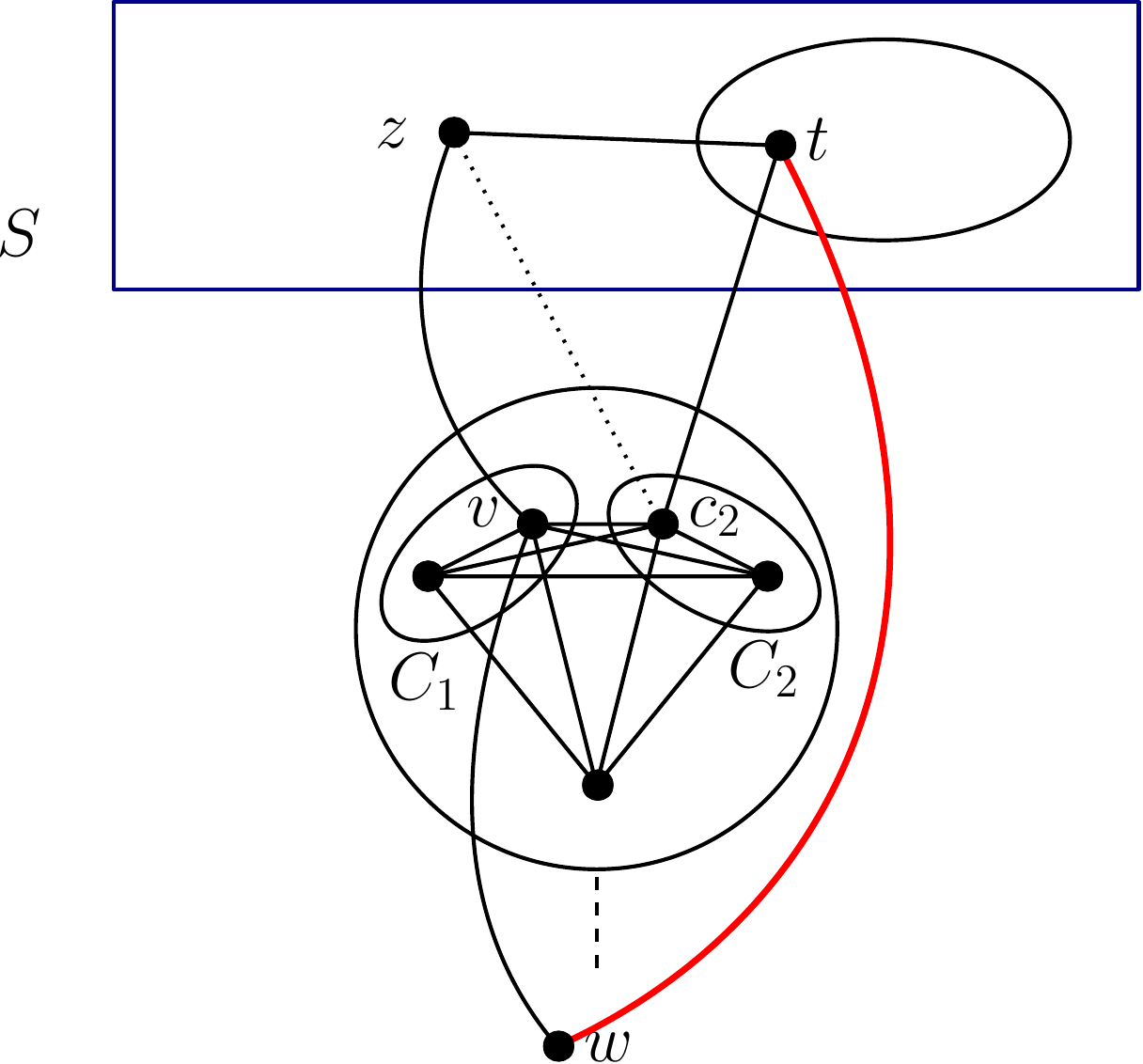}
       \caption{$z\in (T_1\setminus T_2)\cap S$.}
\end{subfigure}
\qquad
\begin{subfigure}[b]{0.5\textwidth}
      \includegraphics[scale=0.5]{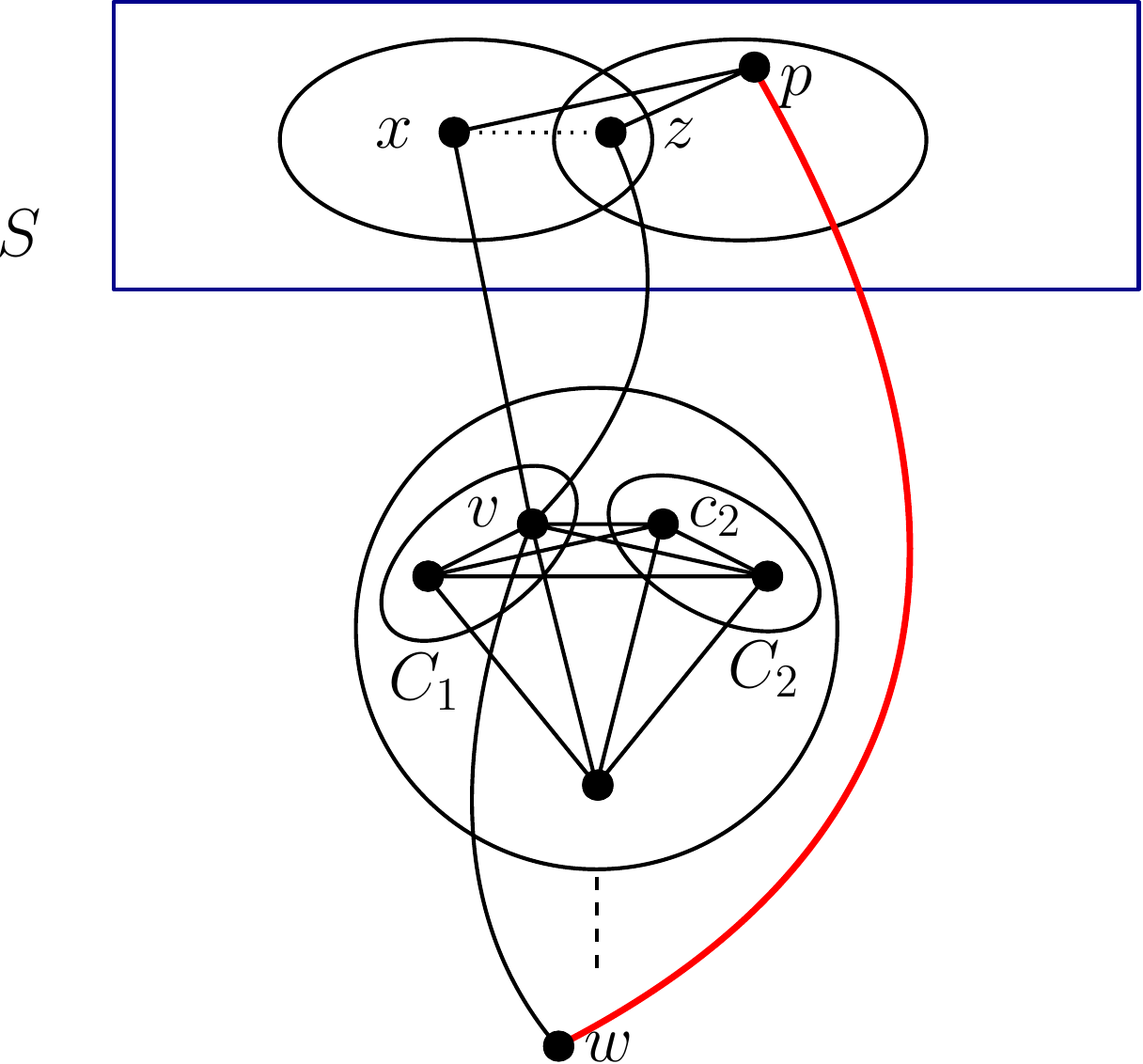}
       \caption{$z\in (T_1\cap T_2)\cap S$.}
\end{subfigure}
\center
\begin{subfigure}[b]{0.5\textwidth}
      \includegraphics[scale=0.5]{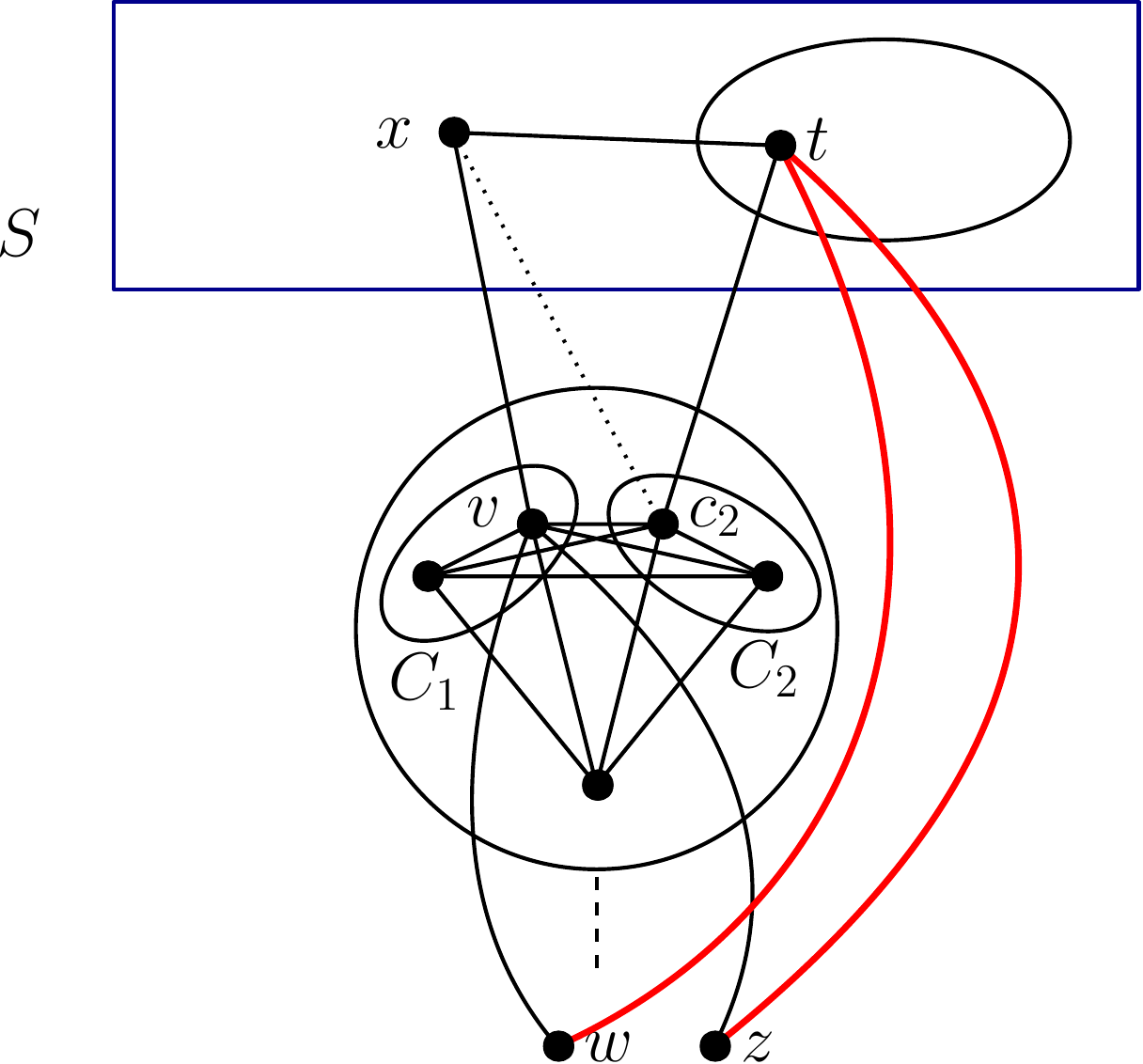}
       \caption{$z\in (T_1\cap T_2)\cap V(G-S)$.}
\end{subfigure}
            \caption{When $B$ is a complete bag and $w\in (T_1\cap T_2)\cap V(G-S)$ in Lemma~\ref{lem:twinclassreduction}.
            The red thick edges illustrate the edges whose existence is guaranteed by Lemma~\ref{lem:completerelationt1t2} or non-existence of small DH obstructions. } \label{fig:lemma45star2}
  \end{figure}

\subparagraph{\textbf{Case 2.} $B$ is a complete bag:}  
Note that $C_1$ is complete to $C_2$, and  $w$ is contained in either  $C_2$ or $(T_1\cap T_2)\cap V(G-S)$.
We first discuss when $w$ is contained in $C_2$.

Suppose $w\in C_2$. As $w$ is not adjacent to $z$, $z$ cannot be in $T_1\cap T_2$, and furthermore $z$ cannot be in $B$ as $B$ is a complete bag. 
Thus $z\in (T_1\setminus T_2)\cap S$. 
If $z$ has a neighbor in $T_2\cap S$, then the neighbor is a bypassing vertex, because it is adjacent to both $w$ and $z$.
We may assume that $z$ has no neighbors in $T_2\cap S$.
Observe that $z$ and $T_2\cap S$ are contained in the same connected component of $G[S]$, otherwise, Branching Rule~\ref{brule:reducecomponent} can be applied.
Let us take a shortest path $P$ from $z$ to $T_2\cap S$ in $G[S]$.
Then $Pw$ is an induced path of length at least $3$ and $v$ is adjacent to its end vertices, and thus $G[S\cup \{v,w\}]$ has a DH obstruction by Lemma~\ref{lem:dhobs}, which contradicts the assumption that $(G,S,k)$ is reduced under Branching Rule~\ref{brule:threevertices}.

Now, suppose $w\in (T_1\cap T_2)\cap V(G-S)$. In this case, $z$ is not in $C_2$. We distinguish subcases by the places of $z$.
We illustrate cases in Figure~\ref{fig:lemma45star2}.

\subparagraph{\textbf{Case 2-1.} $z\in (T_1\setminus T_2)\cap S$ :}  
Let $P$ be a shortest path from $z$ to $T_2\cap S$.
If $P$ has length at least $2$, then $Pc_2$ is an induced path of length at least $3$ and $v$ is adjacent to its end vertices.
So, $G[S\cup \{v,c_2\}]$ contains a DH obstruction, which is a contradiction.
Thus, $z$ has a neighbor in $T_2\cap S$, say $t$.
If $w$ is not adjacent to $t$, then $ztc_2w$ is an induced path and $v$ is adjacent to its end vertices.
This contradicts the assumption that $(G,S,k)$ is reduced under Branching Rule~\ref{brule:threevertices}.
Thus, $wt\in E(G)$ and $t$ is a bypassing vertex.

\subparagraph{\textbf{Case 2-2.} $z\in (T_1\cap T_2)\cap S$ :}  
Recall that $x$ is a vertex in $(T_1\setminus T_2)\cap S$.
If $x$ is adjacent to $z$, then by (1) of Lemma~\ref{lem:completerelationt1t2}, 
$w$ is adjacent to $x$
because $w$ is not adjacent to $z$. Thus, $x$ is a bypassing vertex.
So, we may assume that $x$ is not adjacent to $z$.
By (2) of Lemma~\ref{lem:shortdistance}, there is an induced path $xpz$ for some $p\in S$.

If $p\in (T_1\setminus T_2)\cap S$, then $p$ is adjacent to $w$ by (1) of Lemma~\ref{lem:completerelationt1t2}, and thus $p$ is a bypassing vertex.
Assume $p\in (T_1\cap T_2)\cap S$. If $p$ is adjacent to $w$, then $p$ is a bypassing vertex, and we are done.
Otherwise, by (1) of Lemma~\ref{lem:completerelationt1t2}, $x$ should be adjacent to both $w$ and $z$, since $wz\notin E(G)$.
Therefore, we may assume that $p\notin T_1$. 
Then 
by (2) of Lemma~\ref{lem:completerelationt1t2}, 
we have $p\in (T_2\setminus T_1)\cap S$, and again by (1) of Lemma~\ref{lem:completerelationt1t2},
either $wz\in E(G)$ or $wp\in E(G)$.
Since $wz\notin E(G)$, $p$ becomes a bypassing vertex.

\subparagraph{\textbf{Case 2-3.} $z\in (T_1\cap T_2)\cap V(G-S)$ :}  
If $x$ is adjacent to $w$ or $z$, then by (1) of Lemma~\ref{lem:completerelationt1t2},  
$x$ is adjacent to both $w$ and $z$, because $wz\notin E(G)$.
Then $x$ is a bypassing vertex. Therefore, we may assume $x$ is adjacent to neither $w$ nor $z$.
We take a shortest path $P$ from $x$ to $T_2\cap S$.
If $P$ has length at least $2$, then $Pc_2$ is an induced path of length at least $3$, and 
since $v$ is adjacent to its end vertices, $G[V(P)\cup \{v, c_2\}]$ contains a DH obstruction by Lemma~\ref{lem:dhobs}.
But this contradicts the assumption that $G$ is reduced under Branching Rule~\ref{brule:threevertices}.
We may assume that $P$ has length $1$, and let $t$ be a neighbor of $x$ in $T_2\cap S$.
Observe that if $t$ is not adjacent to $w$ or $z$, then $xtc_2w$ or $xtc_2z$ is an induced path, respectively, 
and $v$ is adjacent to its end vertices. It contradicts the assumption that $(G,S,k)$ is reduced under Branching Rule~\ref{brule:threevertices}.
Therefore $t$ is adjacent to both $w$ and $z$, which implies that $t$ is a bypassing vertex.

\medskip 
We conclude that, for every induced path $w'wvzz'$, there exists a bypassing vertex for $v$. It contradicts the assumption that $D$ is reduced under Reduction Rule~\ref{rrule:p5middle}.
Therefore, every bag of $D$ contains at most one $S$-attached twin class.
\end{proof}

Next, we show that for a bag $B$ of $D$, if a connected component of $D-V(B)$ contains no $S$-attached bags, then $B$ is a simple star bag adjacent to the component.

\begin{lemma}
\label{lem:smallbranch}
Let $B$ be a bag and $D_1$ be a connected component of $D-V(B)$ containing no $S$-attached bags.
Then $B$ is a simple star bag whose center is adjacent to $D_1$.
\end{lemma}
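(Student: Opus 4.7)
The plan is to prove the lemma by strong induction on $|D_1|$.

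For the base case $|D_1| = 1$, $D_1$ consists of a single non-$S$-attached leaf bag $B_1$ whose unique neighbor in $D$ is $B$. Observation~\ref{obs:starrestriction}, Theorem~\ref{thm:bouchet}, and the minimum bag size together force $B_1$ to be either complete or a star with a marked center. Writing $b$ for the marked vertex of $B$ incident to the marked edge to $B_1$, I analyze $B$ case-by-case. If $B_1$ is complete, the canonical form forbids $B$ from also being complete, and Reduction Rule~\ref{rrule:leaf}(1) rules out $b$ being a leaf of a star, so $B$ is a star with $b$ as its center. If $B_1$ is a star with a marked center, the canonical form forbids $b$ from being a leaf of a star, and Reduction Rule~\ref{rrule:leaf}(2) rules out $B$ being complete, so again $B$ is a star with $b$ as its center. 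In both sub-cases $B$ is a simple star whose center is adjacent to $B_1$.

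For the inductive step $|D_1| \geq 2$, I aim for a contradiction. Since $B_1$ has its marked edge to $B$ together with at least one $D_1$-internal marked edge, $B_1$ is not a leaf of $D$; hence there exists a leaf bag $B_\ell$ of $D$ with $B_\ell \in D_1$ and $B_\ell \neq B_1$. Let $B_\ell'$ be its unique $D$-neighbor. Applying the base-case analysis to $(B_\ell, B_\ell')$ yields that $B_\ell'$ is a star whose center $c'$ is adjacent to $B_\ell$. I claim that $B_\ell'$ has exactly two marked edges. Since $D$ is a tree, each marked edge of $B_\ell'$ yields a distinct connected component of $D - V(B_\ell')$, and $\{B_\ell\}$ is one such non-$S$-attached component. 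Any further non-$S$-attached component of $D - V(B_\ell')$ lies inside $D_1 \setminus \{B_\ell, B_\ell'\}$ and thus has size strictly less than $|D_1|$; the inductive hypothesis would then force $c'$ to be adjacent to it as well, contradicting the fact that $c'$ is incident to only one marked edge. Therefore every other component of $D - V(B_\ell')$ is $S$-attached, and the unique such component reachable in the tree is the one containing $B$ and $D \setminus D_1$ (which is $S$-attached because Reduction Rule~\ref{rrule:dhcomponent} guarantees at least two $S$-attached twin classes in $D$ and none of them lie in the non-$S$-attached $D_1$). So $B_\ell'$ has exactly two marked edges, and together with the bag-size lower bound this yields at least one unmarked leaf $u$ of $B_\ell'$.

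Finally I apply Reduction Rule~\ref{rrule:p5middle} to $u$. Using Lemma~\ref{lem:splitadj}, the $G$-neighborhood of $u$ equals precisely the twin class $C_\ell$ of unmarked vertices of $B_\ell$: any alternating path from $u$ must traverse the marked edge incident to $c'$ and then stop at an unmarked vertex of $B_\ell$, while any path through another marked leaf of $B_\ell'$ would require two consecutive unmarked bag edges and is therefore blocked. If $B_\ell$ is complete, then $C_\ell$ is a clique in $G$, so no induced $P_5$ centered at $u$ can exist. If $B_\ell$ is a star with a marked center, then $C_\ell$ is a set of pairwise twins in $G$, and the twin property forces any candidate $p_1$ adjacent to $p_2 \in C_\ell$ to also be adjacent to $p_4 \in C_\ell$, preventing the $P_5$ from being induced. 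In either case no induced $P_5$ is centered at $u$, so Reduction Rule~\ref{rrule:p5middle} removes $u$, contradicting reducedness. The main obstacle is the tree-structural argument in the inductive step, where one has to classify every marked edge of $B_\ell'$ as leading to $\{B_\ell\}$, to the unique $S$-attached component containing $B$, or to a contradiction via the inductive hypothesis; this in turn relies on how $D - V(B_\ell')$ interacts with $D_1$ and on the fact that the $S$-attached twin classes guaranteed by Reduction Rule~\ref{rrule:dhcomponent} all live outside $D_1$.
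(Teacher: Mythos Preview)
Your proof is correct and follows the same line as the paper's: both show that $|D_1|\ge 2$ is impossible by locating a leaf bag inside $D_1$, arguing that its neighbour is a star with centre toward that leaf and with only one further marked edge, and then applying Reduction Rule~\ref{rrule:p5middle} to an unmarked leaf of that star. The only difference is organizational: the paper selects a \emph{farthest} leaf $Y$ from $B_1$ (so every child of its neighbour $X$ is a leaf) and case-splits on whether $X$ is a star or a complete bag, whereas your strong induction on $|D_1|$ absorbs both the extremal choice and the case split into applications of the inductive hypothesis (once to the size-$1$ component $\{B_\ell\}$ to force $B_\ell'$ to be a star with centre toward $B_\ell$, and once to any further non-$S$-attached component to rule out extra marked edges).
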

\begin{proof}
Let $B_1$ be the neighbor bag of $B$ contained in $D_1$.
First claim that $D_1=B_1$. Suppose $D_1$ contains at least one bag other than $B_1$.
We regard $B_1$ as the root bag of $D_1$, and choose a bag $Y$ in $D_1$ with maximum $\abs{\btwn (Y, B_1)}$. 
Clearly, $Y$ is a leaf bag in $D$.
Let $X$ be the neighbor bag of $Y$.

Suppose $X$ is a star.
As we choose $Y$ with maximum $\abs{\btwn (Y, B_1)}$, 
every child of $X$ is a leaf bag.
We claim that there is no leaf bag of $D$ pending to a leaf of $X$.
Suppose for contradiction there exists such a bag $Y_1$.
Since $D$ is canonical, $Y_1$ is not a star whose center is adjacent to $X$.
If $Y_1$ is a star whose leaf is adjacent to $X$, then it can be reduced under Reduction Rule~\ref{rrule:leaftoS}.
If $Y_1$ is a complete graph, then it can be reduced under Reduction Rule~\ref{rrule:leaf}, 
which is a contradiction. 
Therefore, there is no leaf bag of $D$ pending to a leaf of $X$, 
and it implies that 
the center of $X$ is adjacent to $Y$, and
$Y$ is the unique child of $X$.

Let $v$ be an unmarked vertex of $X$. As $N_G(v)$ is the set of unmarked vertices in $Y$ which is a twin class, 
any induced path of length $4$ could not contain $v$ as the third vertex.
Therefore, Reduction Rule~\ref{rrule:p5middle} can be applied to remove $v$, which is a contradiction.

Suppose $X$ is a complete graph.
Since $D$ is canonical, $Y$ is not a complete graph.
If $Y$ is a star whose leaf is adjacent to $X$, 
then all unmarked leaves in $Y$ can be removed by Reduction Rule~\ref{rrule:leaftoS}.
If $Y$ is a star whose center is adjacent to $X$, then 
we can apply Reduction Rule~\ref{rrule:leaf} to $X$ and $Y$.

We conclude that $D_1=B_1$. Moreover, if $B$ is not a star whose center is adjacent to $B_1$, then we can reduce $B_1$ using Reduction Rule~\ref{rrule:leaftoS} or \ref{rrule:leaf}.
Thus $B$ is a star whose center is adjacent to $B_1$.
\end{proof}

The following structure is illustrated in Figure~\ref{fig:rrule2}.

  \begin{figure}[t]
      \centering
      \includegraphics[scale=0.55]{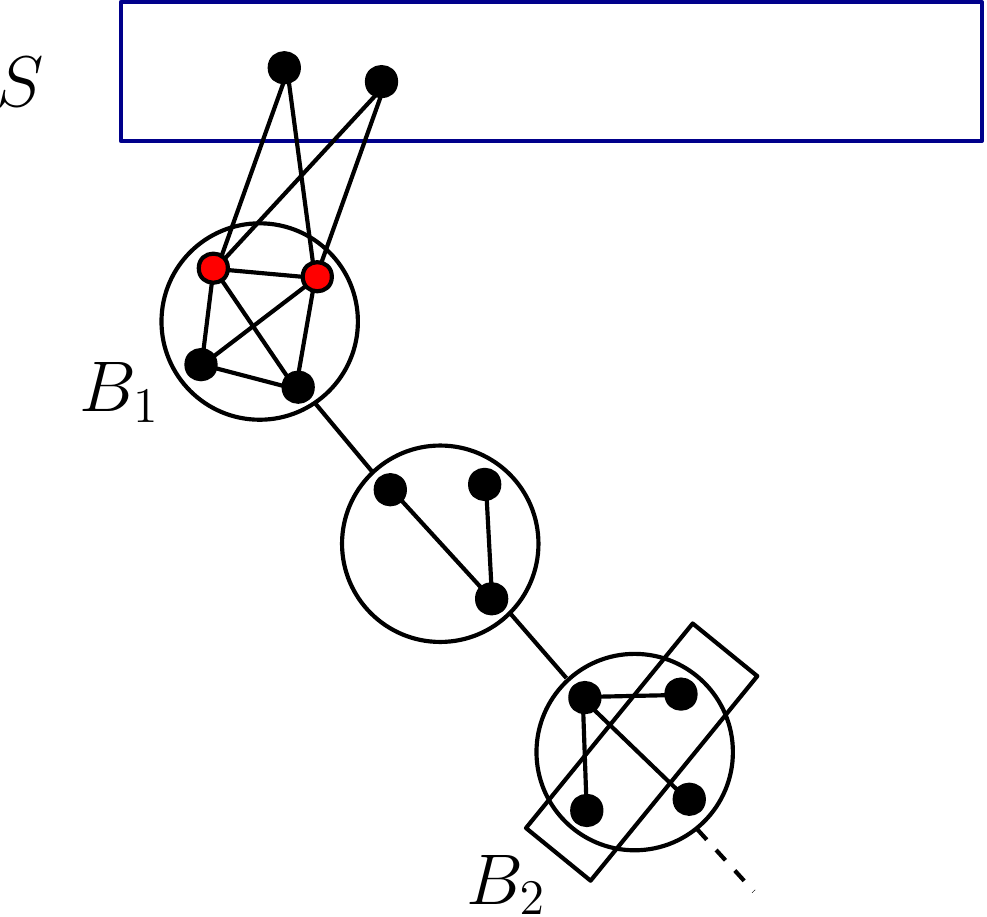}
            \caption{Lemma~\ref{lem:starcenter}.} \label{fig:rrule2}
  \end{figure}
  
\begin{lemma}
  \label{lem:starcenter}
Let $B_1$ be a leaf bag containing at most  one $S$-attached twin class and $B_2$ be a bag distinct from $B_1$ such that
\begin{itemize}
\item $B_2$ is a star bag whose center is adjacent to $\comp (B_2, B_1)$.
\item every bag in $\btwn (B_1, B_2)\setminus \{B_1, B_2\}$ is not a $(B_1, B_2)$-separator bag, and has exactly two neighbor bags, and 
\item for every bag $B$ in $\btwn (B_1, B_2)\setminus \{B_1, B_2\}$ that is not a star whose center is adjacent to $\comp (B, B_1)$, 
$B$ is non-$S$-attached.
\end{itemize}
Then $B_2$ contains no non-$S$-attached twin class $C$.
\end{lemma}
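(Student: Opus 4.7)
The plan is to assume for contradiction that $B_2$ contains a non-$S$-attached twin class $C$ and to derive a contradiction by showing that Reduction Rule~\ref{rrule:p5middle} would apply to any vertex $v \in C$ via Lemma~\ref{lem:badvertex}. First, since the hypothesis that $B_2$'s center is adjacent to $\comp(B_2, B_1)$ forces that center to be marked, every vertex of $C$ must be an unmarked leaf of $B_2$; fix such a $v$. Because $v$ is non-$S$-attached and its only unmarked neighbor in $D$ is the center of $B_2$, Lemma~\ref{lem:splitadj} places $N_G(v) \subseteq V(G-S) \cap \comp(B_2, B_1)$; moreover every other unmarked leaf of $B_2$ is a twin of $v$ in $G-S$ (hence has the same $G-S$-neighborhood as $v$) and is non-adjacent to $v$ in $G$, since no alternating path connects two leaves of the same star bag.

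To apply Lemma~\ref{lem:badvertex} to $v$, it suffices to show that every induced 5-path $H = p_1 p_2 v p_4 p_5$ in $G$ has a bypassing vertex in $S$. Since $v$ is non-$S$-attached, we have $p_2, p_4 \in V(G-S) \cap N_G(v) \subseteq \comp(B_2, B_1)$ and $p_2 p_4 \notin E(G)$. If both $p_2$ and $p_4$ are $S$-attached, then either they lie in the same twin class (any $S$-neighbor of $p_2$ serves as a bypassing vertex since $p_2$ and $p_4$ share identical $S$-neighborhoods), or they lie in two distinct $S$-attached twin classes $C_2, C_4$ which must be anti-complete (because $p_2 p_4 \notin E(G)$) and which share $v$ as a common $V(G-S)$-neighbor, so Lemma~\ref{lem:relationt1t2}(1) produces a bypassing vertex in $(N_G(C_2) \cap N_G(C_4)) \cap S$. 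Otherwise, at least one of $p_2, p_4$ is non-$S$-attached, and I would argue that no such induced 5-path exists at all: a candidate $p_1 \in S$ requires $p_2$ to be $S$-attached, so we may assume $p_1 \in V(G-S) \cap N_G(p_2) \setminus \{v, p_4\}$; if $p_1$ is another unmarked leaf of $B_2$, it is a twin of $v$ in $G-S$ and hence adjacent to $p_4 \in N_G(v)$, violating the induced path; otherwise $p_1 \in \comp(B_2, B_1)$, and a symmetric analysis of $p_5 \in V(G-S) \cap N_G(p_4)$, combined with tracing alternating paths that must return through the center of $B_2$, forces $p_5$ to be adjacent to $p_2$ or $v$, again ruling out the induced path.

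The main obstacle will be the structural analysis in the second case, particularly verifying that no suitable $p_5$ exists for the remaining configurations of $p_1 \in \comp(B_2, B_1)$. This relies on tracing alternating paths along the chain $B_1 = P_1, P_2, \ldots, P_k = B_2$ using the hypothesis that each intermediate $P_i$ has exactly two neighbor bags, is not a $(B_1, B_2)$-separator, and is either a star whose center is toward $B_1$ (possibly $S$-attached) or is non-$S$-attached. Combined with the canonicality rule forbidding marked edges linking a leaf of a star to the center of another star, this pins down that the ``toward-$B_2$'' marked vertex of each intermediate $P_i$ is its center (when $P_i$ is star-oriented toward $B_2$ or complete, hence non-$S$-attached) or a leaf of $P_i$ (when $P_i$ is oriented toward $B_1$, hence potentially $S$-attached), yielding a precise description of $N_G(v)$ and of each $N_G(p_2)$. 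Granting the claim, Lemma~\ref{lem:badvertex} declares $v$ irrelevant, so Reduction Rule~\ref{rrule:p5middle} applies and removes $v$, contradicting the assumption that the instance is reduced; hence $B_2$ contains no non-$S$-attached twin class.
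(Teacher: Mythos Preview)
Your high-level strategy matches the paper's: assume $B_2$ has a non-$S$-attached twin class, pick $v$ in it, and show that Reduction Rule~\ref{rrule:p5middle} applies. But your execution is more complicated than necessary, and the part you flag as the ``main obstacle'' is exactly where the paper's argument is cleanest.

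The paper never searches for a bypassing vertex. Instead it shows directly that \emph{no} induced path $p_1p_2vp_4p_5$ exists, by proving that either $N_G(p_2)\subseteq N_G(p_4)$ or $N_G(p_4)\subseteq N_G(p_2)$; this immediately forces $p_1p_4\in E(G)$ or $p_2p_5\in E(G)$. The containment comes from a short case analysis on the bags $P_2,P_4$ containing $p_2,p_4$. If $P_2=P_4=B_1$, then (since $p_2p_4\notin E(G)$) $B_1$ is a star and $p_2,p_4$ are leaves in distinct twin classes; as $B_1$ has at most one $S$-attached class, one of them---say $p_2$---is non-$S$-attached, whence $N_G(p_2)=N_{G-S}(p_2)=N_{G-S}(p_4)\subseteq N_G(p_4)$. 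Otherwise the bag closer to $B_2$, say $P_4$, is intermediate; since $p_2p_4\notin E(G)$ it must be a star with center toward $\comp(P_4,B_2)$, hence non-$S$-attached by hypothesis, and every alternating path out of $p_4$ goes through that center, which also lies on every alternating path from $p_2$ toward $B_2$---giving $N_G(p_4)\subseteq N_G(p_2)$.

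Your first case (both $p_2,p_4$ $S$-attached) is in fact vacuous, so the appeal to Lemma~\ref{lem:relationt1t2} is unnecessary: any $S$-attached vertex in $N_G(v)$ must lie in $B_1$ (an intermediate $S$-attached bag is, by hypothesis, a star with center toward $B_1$, so its unmarked leaves are not adjacent to $v$), and $B_1$ has at most one $S$-attached twin class, forcing $p_2,p_4$ to be twins in $G$. Your second case is thus the entire content, and your plan to trace $p_1$ and $p_5$ through the bag chain can be made to work, but the neighborhood-containment argument above dispatches it in two lines and avoids the asymmetry you would otherwise face when $p_4$ is $S$-attached and $p_5\in S$.
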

\begin{proof}
Suppose $B_2$ contains a non-$S$-attached twin class, and 
let $v$ be a vertex in the class.
We claim that there is no induced path $w'wvzz'$, which implies that Reduction Rule~\ref{rrule:p5middle} can be applied.
Suppose there is such a path.

We claim that either $N_G(w)\subseteq N_G(z)$ or $N_G(z)\subseteq N_G(w)$. If $N_G(w)\subseteq N_G(z)$ then $w'$ should be adjacent to $z$, which contradicts the fact that 
$w'wvzz'$ is an induced path. The same argument holds when $N_G(z)\subseteq N_G(w)$.

Let $P_w$ and $P_z$ be the bags containing $w$ and $z$, respectively. As $B_2$ is a star whose center is adjacent to $\comp (B_2, B_1)$,
$P_w$ and $P_z$ are bags in $\btwn (B_1, B_2)\setminus \{B_2\}$. 

First assume $P_w=P_z=B_1$. In this case, since $w$ is not adjacent to $z$, $B_1$ is a star bag.
Note that no DH obstruction contains two twins, and therefore, $w$ and $z$ are contained in distinct twin classes.
Since $B_1$ contains at most one $S$-attached twin class by Lemma~\ref{lem:twinclassreduction}, 
one of $w$ and $z$ is contained in the non-$S$-attached twin class in $B_1$.
Say $w$ is such a vertex. Then we have $N_G(w)\subseteq N_G(z)$, because $w$ and $z$ are twins in $G-S$.

Now, we assume at least one of $P_w$ and $P_z$ is not equal to $B_1$.
We further assume $P_z$ is contained in $\btwn (P_w, B_2)\setminus \{B_2\}$. The same argument holds when $P_w$ is contained in $\btwn (P_z, B_2)\setminus \{B_2\}$.

Since $P_z\in \btwn (P_w, B_2)\setminus \{B_2\}$ and $wz\notin E(G)$, $P_z$ is not a complete bag. 
Thus, $P_z$ is a star bag whose center is adjacent to $\comp (P_z, B_2)$.
As $P_z\neq B_1$ and it is not $S$-attached by the assumption, all neighbors of $z$ in $G$ are neighbors of $w$.
Then $z'$ should be adjacent to $w$, which is a contradiction.

We conclude that there is no such path $w'wvzz'$, and Reduction Rule~\ref{rrule:p5middle} can be applied to remove $v$. Therefore, $B_2$ contains no non-$S$-attached twin class $C$.
\end{proof}

The following structure is illustrated in Figure~\ref{fig:rrule3}.
 
\begin{lemma}\label{rrule:sattachedleaf}
Let $B_1$ be a leaf bag having exactly one $S$-attached twin class and $B_2$ be a simple star bag distinct from $B_1$ such that 
\begin{itemize}
\item $B_1$ is not a star whose leaf is adjacent to a neighbor bag,
\item every bag in $\btwn (B_1, B_2)\setminus \{B_1, B_2\}$ is non-$S$-attached, not a $(B_1, B_2)$-separator bag and has exactly two neighbor bags.
\end{itemize}
Then $B_1$ contains no non-$S$-attached twin class.
\end{lemma}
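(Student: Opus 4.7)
The plan is to emulate the proof of Lemma~\ref{lem:starcenter}: assume for contradiction that $B_1$ contains a non-$S$-attached twin class, fix a vertex $v$ in that class, and show that no induced path $w'wvzz'$ with $v$ as the middle vertex can exist in $G$. Since $v$ has no neighbor in $S$, any such path satisfies $w,z\in V(G-S)$, and once we rule all such paths out, Reduction Rule~\ref{rrule:p5middle} applies to $v$, contradicting that the instance is reduced.

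Before dealing with a hypothetical induced path, I would record structural consequences of the hypotheses along the chain $B_1=X_0, X_1, \ldots, X_{n-1}, X_n=B_2$. Because $B_1$ is a leaf bag that is not a star whose leaf is adjacent to its neighbor bag, $B_1$ is either a complete bag or a star whose marked center meets the unique marked edge of $B_1$; in either case every alternating path from $v$ leaves $B_1$ through that unique marked vertex. For $1\le i\le n-1$ let $p_i$ and $q_i$ denote the marked vertices of $X_i$ on the marked edges to $X_{i-1}$ and $X_{i+1}$, respectively; since $X_i$ is not a $(B_1,B_2)$-separator bag, $X_i$ is either complete or a star whose center lies in $\{p_i,q_i\}$, and in every allowed case $p_i$ is adjacent to $q_i$ inside $X_i$---exactly the condition needed to extend an alternating path through $X_i$. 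Finally, because $\comp(B_2,X_{n-1})$ contains the $S$-attached bag $B_1$, the simplicity of $B_2$ forces the center of $B_2$ to be either an unmarked vertex $c$, or a marked leaf distinct from the marked vertex on the edge to $X_{n-1}$ whose marked edge leaves $B_2$ to a single non-$S$-attached leaf bag $B_3$. An application of Lemma~\ref{lem:splitadj} then shows that $N_G(v)\cap V(G-S)$ is contained in $B_1\cup X_1\cup\cdots\cup X_{n-1}\cup\{c\}$ in the first case and in $B_1\cup X_1\cup\cdots\cup X_{n-1}\cup V(B_3)$ in the second.

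Now assume $w'wvzz'$ is an induced path. If $w\in B_1$ (and symmetrically $z$), then $B_1$ must be complete (otherwise $v$ is a leaf of the star and has no $G$-neighbor inside $B_1$), whence $w$ is a twin of $v$ in $G-S$; since $z\in V(G-S)\cap N_G(v)$, this forces $z\in N_G(w)$, contradicting induced-ness. Otherwise, $w,z$ lie outside $B_1$, and I would establish the analogue of the key claim of Lemma~\ref{lem:starcenter}: $N_G(w)\subseteq N_G(z)$ or $N_G(z)\subseteq N_G(w)$; this forces $w'z\in E(G)$ or $z'w\in E(G)$ and again contradicts induced-ness. If $w,z$ sit in the same chain bag $X_i$ with $1\le i\le n-1$, then $wz\notin E(G)$ forces $X_i$ to be a star and both $w,z$ to be unmarked leaves, and non-$S$-attachedness makes $w,z$ twins in $G$, giving $N_G(w)=N_G(z)$; the same-bag situation at $B_2$ is vacuous since $v$ has at most one $G$-neighbor in $B_2$, and the same-bag situation at $B_3$ is handled identically using that $B_3$ is non-$S$-attached. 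Otherwise $w$ and $z$ lie in distinct bags; without loss of generality the bag of $w$ is closer to $B_1$. The requirements $vw,vz\in E(G)$ together with $wz\notin E(G)$ and the intermediate adjacencies $p_l\sim q_l$ force this bag to be a chain bag $X_i$ that is a star with center $p_i$ and force $w$ to be an unmarked leaf of $X_i$ with $w\ne q_i$, so $w$'s alternating paths leave $X_i$ only through $p_i$, and $N_G(w)\subseteq V(G-S)\cap (X_0\cup\cdots\cup X_{i-1})$.

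To finish, for any $u\in X_k$ with $k<i$, Lemma~\ref{lem:splitadj} says that $u\in N_G(w)$ is equivalent to $u$ being adjacent to $q_k$ inside $X_k$; and the same local condition, combined with the already established intermediate adjacencies $p_l\sim q_l$ through the bags between $X_i$ and the bag of $z$, is exactly what makes $u\in N_G(z)$. Hence $N_G(w)\subseteq N_G(z)$, completing the contradiction. The main obstacle will be the careful bookkeeping of these alternating-path conditions at the two ends of the chain---distinguishing the two structural possibilities for $B_1$ and the two for a simple $B_2$ (including the extension into $B_3$ when the center of $B_2$ is marked)---but no combinatorial ingredient beyond those used in Lemma~\ref{lem:starcenter} is required.
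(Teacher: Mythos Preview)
Your proposal is correct and follows essentially the same approach as the paper's own proof: assume a non-$S$-attached vertex $v$ exists in $B_1$, argue that no induced path $w'wvzz'$ can have $v$ as its middle vertex (handling separately the case $w\in V(B_1)$ and the case where both $w,z$ lie outside $B_1$, the latter via an inclusion $N_{G}(w)\subseteq N_{G}(z)$ after ordering by distance to $B_1$), and then invoke Reduction Rule~\ref{rrule:p5middle} for a contradiction. The paper's write-up is terser---it does not explicitly spell out the chain $X_0,\ldots,X_n$, the $p_i\sim q_i$ adjacencies, or the $B_2/B_3$ endpoint cases---but the underlying argument is the same, and your more explicit bookkeeping is a faithful expansion of it.
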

\begin{proof}
We claim that if $B_1$ contains a non-$S$-attached twin class, then we can apply a reduction rule to remove it.
Suppose $B_1$ contains a non-$S$-attached twin class $C_1$, 
and let $C_2$ be the $S$-attached twin class in $B_1$.

Let $v\in C_1$ and we claim that there is no induced path $w'wvzz'$.
If this is true, then we can apply Reduction Rule~\ref{rrule:p5middle}. 
Suppose there is such an induced path.

Assume that $w\in V(B_1)$. In this case, $B_1$ should be a complete bag.
Therefore, $z$ is adjacent to $w$, 
because $z\in V(G-S)$, and $w$, $v$ are twins in $G-S$.
This contradicts the assumption that $w'wvzz'$ is an induced path. Thus, we can assume that $w\notin V(B_1)$, and similarly, $z\notin V(B_1)$.

By symmetry, we assume $\abs{\btwn (B_1, B_w)}\le \abs{\btwn (B_1, B_z)}$, where $B_w$ and $B_z$ are bags containing $w$ and $z$, respectively.
Since $w$ is not adjacent to $z$, $B_w$ should be a star bag whose center is adjacent to the component $\comp (B_w, B_1)$. 
Therefore, every neighbor of $w$ in $G-S$ is adjacent to $z$, and in particular, $w'$ is adjacent to $z$. 
This contradicts the assumption that $w'wvzz'$ is induced.
This proves the claim.
It contradicts the assumption that $D$ is reduced.
\end{proof}

  \begin{figure}[t]
    \centering
      \includegraphics[scale=0.5]{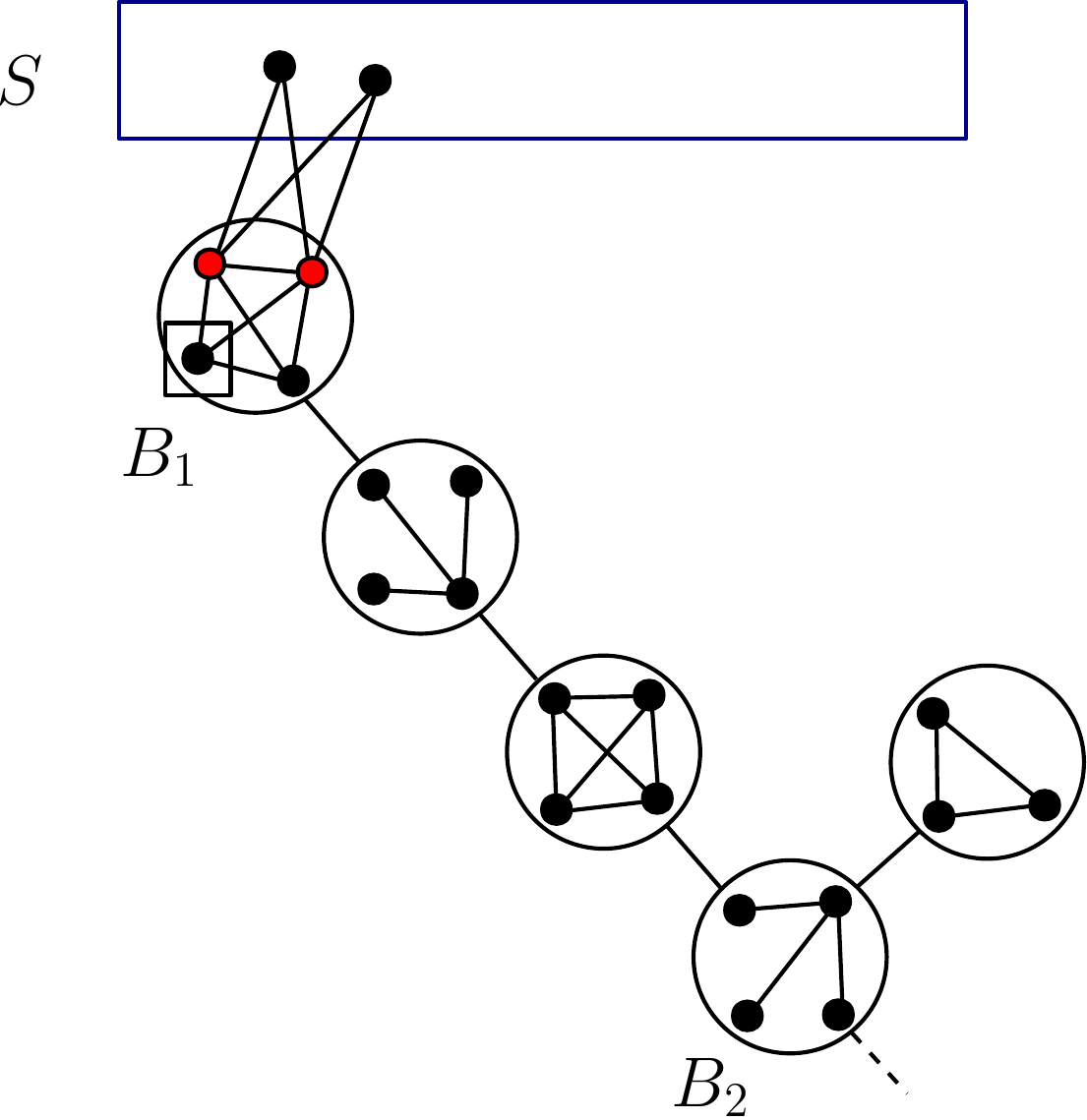}

    \caption{Lemma~\ref{rrule:sattachedleaf}.} \label{fig:rrule3}
      
  \end{figure}

\begin{lemma}
\label{lem:simplifynearsattached}
Let $B$ be a simple star bag, and let $D_1$ be a  connected component of $D-V(B)$ such that
\begin{itemize}
\item $D_1$ contains exactly one $S$-attached bag $B_1$, and
\item there is no $(B_1, B)$-separator bag.
\end{itemize}
Then $B_1$ is a star whose leaf is adjacent to $\comp (B_1, B)$ and there is a leaf bag $B_2$ where the center of $B_1$ is adjacent to $B_2$.
\end{lemma}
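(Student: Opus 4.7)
The plan is to analyze the components of $D-V(B_1)$ using Lemma~\ref{lem:smallbranch}, and then rule out degenerate configurations using the reduction rules together with the canonical structure of split decompositions and Observation~\ref{obs:starrestriction}.

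First I would observe that every component of $D-V(B_1)$ other than $\comp(B_1,B)$ lies entirely within $D_1$, and hence is non-$S$-attached since $B_1$ is the unique $S$-attached bag in $D_1$. By Lemma~\ref{lem:smallbranch}, each such component is a single bag adjacent to the center of $B_1$; since the center can be incident to at most one marked edge, there is at most one such component, which if present is the desired $B_2$. Consequently $B_1$ has at most two neighbor bags in $D$.

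The main task is then to show that $B_1$ is not a leaf bag of $D$, so that $B_2$ exists. Suppose for contradiction that $B_1$ is a leaf bag of $D$, with its unique neighbor $P$. If $P=B$, then $D_1=\{B_1\}$; using that $B$ is simple and $D_1$ is $S$-attached, the marked edge from $B$ to $B_1$ goes via a leaf of $B$, and a case analysis on the type of $B_1$ yields contradictions via Reduction Rule~\ref{rrule:leaf}(1) combined with Lemma~\ref{rrule:sattachedleaf} (if $B_1$ is complete), the canonical property forbidding leaf-to-center marked edges between two star bags (if $B_1$ is a star with center adjacent to $B$), and Observation~\ref{obs:starrestriction} (if $B_1$ is a star with leaf adjacent to $B$). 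If $P\neq B$, I would use the non-separator-bag hypothesis together with Lemma~\ref{lem:smallbranch} to show that each intermediate bag on the path from $B_1$ to $B$ has exactly two neighbors, since any branch off such a bag would force its center elsewhere and thereby violate the non-separator condition. This allows the use of Lemma~\ref{rrule:sattachedleaf} to force $B_1$ to have a unique twin class, and then depending on the configuration of the marked edge at $P$, contradictions follow from Reduction Rule~\ref{rrule:leaf}, Lemma~\ref{lem:starcenter} applied to the pair $(B_1,P)$ (which would force $P$ to have no non-$S$-attached twin class, contradicting that $P$ is non-$S$-attached with unmarked vertices), or Observation~\ref{obs:starrestriction}.

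Having established that $B_1$ has exactly two neighbors, the center of $B_1$ is marked and adjacent to the leaf bag $B_2$, and the other marked vertex of $B_1$ connects to $\comp(B_1,B)$. Were $B_1$ complete, then $B_2$ would necessarily be a star by canonicality (which forbids linked complete bags), and a case analysis on the marked vertex of $B_2$ yields a contradiction via Observation~\ref{obs:starrestriction} (if it is a leaf of $B_2$) or Reduction Rule~\ref{rrule:leaf}(2) (if it is the center of $B_2$, noting that the leaves of $B_2$ form a single twin class). Hence $B_1$ is a star whose leaf is adjacent to $\comp(B_1,B)$, as required. The main obstacle will be the $P\neq B$ subcase of the leaf-bag argument, which requires carefully combining the non-separator-bag hypothesis with Lemmas~\ref{lem:smallbranch},~\ref{lem:starcenter}, and~\ref{rrule:sattachedleaf} to rule out all configurations simultaneously.
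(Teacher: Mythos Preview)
Your approach is correct and uses the same toolkit as the paper (Lemmas~\ref{lem:smallbranch}, \ref{lem:starcenter}, \ref{rrule:sattachedleaf}, Reduction Rules~\ref{rrule:leaftoS} and~\ref{rrule:leaf}, Observation~\ref{obs:starrestriction}, and the canonical constraints on marked edges). The difference is purely organizational. The paper first assumes the shape of $B_1$ is wrong (complete, or star with center toward $\comp(B_1,B)$), deduces from Lemma~\ref{lem:smallbranch} that $B_1$ must then be a leaf bag, applies Lemma~\ref{rrule:sattachedleaf} to get a single twin class, uses Lemma~\ref{lem:starcenter} to rule out any intermediate star with center pointing toward $B_1$, and finishes with Reduction Rule~\ref{rrule:leaf}; afterwards it handles the existence of $B_2$ in one line via Reduction Rule~\ref{rrule:leaftoS}. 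You instead prove ``$B_1$ is not a leaf bag'' directly and let Lemma~\ref{lem:smallbranch} deliver both the shape of $B_1$ and the existence of $B_2$ simultaneously. Two remarks: your case split $P=B$ versus $P\neq B$ is not really needed, since the paper handles both uniformly (the simplicity of $B$ plays the role of Claim~2 when $P=B$); and your final paragraph ruling out ``$B_1$ complete'' is redundant, because once $B_2$ exists Lemma~\ref{lem:smallbranch} already forces $B_1$ to be a star with center adjacent to $B_2$. Also, in the $P\neq B$ subcase you should add the canonical prohibitions (no complete--complete link, no leaf--center link between stars) to your list of contradiction sources, since Reduction Rule~\ref{rrule:leaf} and Lemma~\ref{lem:starcenter} alone do not dispose of every pairing of $B_1$-type with $P$-type.
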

\begin{proof}
We first claim that $B_1$ is a star whose leaf is adjacent to $\comp (B_1, B)$. We prove this by a sequence of auxiliary claims.
Suppose for contradiction that $B_1$ does not satisfy the property; that is, either $B_1$ is a complete bag or a star bag whose center is adjacent to $\comp (B_1, B)$.

\begin{claim}
There is no connected component of $D-V(B_1)$ other than $\comp (B_1, B)$.
\end{claim}
\begin{clproof}
If there is such a component $C_1$, then by the assumption, it contains no $S$-attached bag.
By Lemma~\ref{lem:smallbranch}, $B_1$ is a star whose center is adjacent to $C_1$, 
contradicting our assumption. Thus, there is no connected component of $D-V(B_1)$ other than $\comp (B_1, B)$. 
\end{clproof}

We observe that $B_1$ contains one $S$-attached twin class by Lemma~\ref{lem:twinclassreduction}.
Also, all bags in $\btwn (B, B_1)\setminus \{B, B_1\}$ have exactly two neighbor bags.
This follows from Lemma~\ref{lem:smallbranch} and the fact that every bag in $\btwn (B_1, B)\setminus \{B, B_1\}$ is not a $(B, B_1)$-separator bag. 
Now, we can observe that $B$ and $B_1$ satisfy the conditions of Lemma~\ref{rrule:sattachedleaf}.
Therefore, $B_1$ contains no non-$S$-attached twin class.

\begin{claim}\label{claim:starreverse}
There is no star bag $B_2\in \btwn (B, B_1)\setminus \{B, B_1\}$ whose center is adjacent to $\comp (B_2, B_1)$.
\end{claim}
\begin{clproof}
Suppose there is such a star bag $B_2$.
Then $B_1$ and $B_2$ satisfy conditions in Lemma~\ref{lem:starcenter}.
Thus, $B_2$ has no non-$S$-attached twin class.
But this is impossible as $B_2$ has only two neighbor bags and $B_2$ has no $S$-attached twin class.
Thus, such a bag $B_2$ does not exist.
\end{clproof}

By Claim~\ref{claim:starreverse}, we observe that $B_1$ and its parent bag satisfy the condition (1) or (2) of Reduction Rule~\ref{rrule:leaf}, and thus we can apply the rule.
It contradicts the assumption that $D$ is reduced.
Thus, $B_1$ is a star whose leaf is adjacent to $\comp (B_1, B)$.

Now, suppose there is no bag $B_2$ where the center of $B_1$ is adjacent to $B_2$.
Since there is no bag pending to a leaf of $B_1$ by Lemma~\ref{lem:smallbranch},
$B_1$ is a leaf bag. In this case, we can reduce using Reduction Rule~\ref{rrule:leaftoS}, which is a contradiction.
Therefore, there is a leaf bag $B_2$ where the center of $B_1$ is adjacent to $B_2$, as required.
\end{proof}

The next lemma describes the structure of $\btwn(B_1, B_2)$ where $B_1$ and $B_2$ are simple star bags, and there is no $S$-attached bag in the connected component of $D-V(B_1)-V(B_2)$ containing bags in $\btwn (B_1, B_2)\setminus \{B_1, B_2\}$.

\begin{lemma}\label{rrule:bypassing1}
Let $B_1$ and $B_2$ be two simple star bags in $D$ such that 
\begin{itemize}
\item every bag in $\btwn (B_1, B_2)\setminus \{B_1,B_2\}$ is a non-$S$-attached bag, has two neighbor bags, and is not a $(B_1, B_2)$-separator bag. 
\end{itemize}
Then $B_1$ and $B_2$ are neighbor bags.
\end{lemma}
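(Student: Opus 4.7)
The plan is to proceed by contradiction: suppose $B_1$ and $B_2$ are not neighbor bags, let $B_1 = X_0, X_1, \dots, X_\ell = B_2$ be the path of bags between them in $D$ with $\ell \ge 2$, and identify an unmarked vertex $v$ in an intermediate bag for which Reduction Rule~\ref{rrule:p5middle} is applicable, contradicting reducedness.

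The first step is a canonicality constraint: if $X_1$ is a star bag, then its center is not linked by a marked edge to $B_1$. Since $B_1$ is a simple star bag, its center is either unmarked or marked to a single non-$S$-attached bag on the center side; because $\comp(B_1, X_1)$ contains at least two bags when $\ell \ge 2$, it cannot be that singleton side. Hence $B_1$'s marked edge to $X_1$ emanates from a leaf of $B_1$, and by canonicality (no marked edge joins a leaf of one star bag with the center of another) $X_1$'s endpoint is likewise a leaf. The symmetric statement holds for $X_{\ell-1}$ and $B_2$.

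For $\ell = 2$, we have $X_1 = X_{\ell-1}$; if $X_1$ were a star, its center would be marked to neither of its two neighbor bags, hence unmarked, and then $X_1$ would be a $(B_1, B_2)$-separator bag, contradicting the hypothesis. Thus $X_1$ is a complete bag. Pick any unmarked $v \in X_1$; since $v$ is non-$S$-attached, $N_G(v) \subseteq V(G-S)$, and by Lemma~\ref{lem:splitadj} the alternating-path structure through the simple bags $B_1, B_2$ restricts $N_G(v)$ to the twins of $v$ in $X_1$ together with either the unmarked center $c_i$ of $B_i$ or the unmarked vertices of the unique non-$S$-attached leaf bag $B_i'$ pending at $B_i$'s center, for $i \in \{1, 2\}$. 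A short case analysis across these four configurations shows that every non-adjacent pair $p_2, p_4 \in N_G(v)$ lies inside the same structural pocket (both leaves of a common star bag $B_i'$, or both unmarked leaves of a star reachable along one side), so $p_2$ and $p_4$ are twins in $G$ because they share exactly the same alternating-path neighborhood. But a non-adjacent twin pair cannot serve as the outer vertices of an induced path $p_1 p_2 v p_4 p_5$, since any neighbor of $p_2$ distinct from $p_4$ is automatically a neighbor of $p_4$. Hence no induced $P_5$ through $v$ exists and Reduction Rule~\ref{rrule:p5middle} applies vacuously, contradicting reducedness.

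For $\ell \ge 3$, I would run the same analysis on an unmarked $v \in X_1$: the alternating-path trace extends deeper through the chain, but because each intermediate bag is non-$S$-attached, has exactly two neighbor bags, and is not a separator (so its center, if a star, has its orientation pinned down by the canonicality step above), the unmarked vertices reached by $v$ along the chain still come as twin classes in $G-S$, and every non-adjacent pair in $N_G(v)$ reduces to a twin pair by the same reasoning. The hard part will be the bookkeeping of these twin classes as the chain lengthens and the bag types (complete versus star) and orientations vary along it; the canonicality constraints established above, however, make this analysis routine and preserve the twin structure. Once this is verified, Reduction Rule~\ref{rrule:p5middle} applies to $v$ and completes the contradiction.
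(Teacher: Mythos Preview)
Your overall structure matches the paper's: assume an intermediate bag exists, pick an unmarked vertex $v$ there, show no induced $P_5$ has $v$ as its middle vertex, and conclude that Reduction Rule~\ref{rrule:p5middle} applies. Your canonicality observation and your $\ell = 2$ analysis are correct.

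The gap is in the $\ell \ge 3$ step. Your claim that every non-adjacent pair $p_2, p_4 \in N_G(v)$ consists of \emph{twins} in $G$ is false in general. Take $\ell = 3$ with $X_1$ complete and $X_2$ a star whose center is marked to $X_1$, and pick $v \in X_1$. Then an unmarked leaf $a$ of $X_2$ and the (unmarked) center $c$ of $B_2$ are both neighbors of $v$ and are non-adjacent, but $N_G(a) \subsetneq N_G(c)$ whenever $B_2$ has further components attached via its other leaves --- so $a$ and $c$ are not twins. What \emph{does} hold, and what suffices, is one-sided containment: for any non-adjacent $p_2, p_4 \in N_G(v)$, one of $N_G(p_2), N_G(p_4)$ is contained in the other. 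This still kills any induced $P_5$ (the outer vertex on the smaller side would have to be adjacent to the other inner vertex), and it is precisely the mechanism the paper uses: after arguing that $w$ and $z$ lie on the same side of $B$, it shows that the one sitting in an intermediate bag has all of its $G{-}S$ neighbours also adjacent to the other. Your ``routine bookkeeping'' needs to be replaced by this containment argument; twin-ness is too strong a claim and will not go through as the chain lengthens.
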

\begin{proof}
Suppose for contradiction that $\btwn (B_1, B_2)\setminus \{B_1,B_2\}\neq \emptyset$.
Let $B\in \btwn (B_1, B_2)\setminus \{B_1,B_2\}$ and $v$ be an unmarked vertex of $B$.

We claim that there is no induced path $w'wvzz'$.
Suppose there is such an induced path.
By symmetry, we assume $\abs{\btwn (B_1, B_w)}\le \abs{\btwn (B_1, B_z)}$, where $B_w$ and $B_z$ are bags containing $w$ and $z$, respectively.
If $w$ and $z$ are contained in the different connected components of $D-V(B)$, 
then because $B$ is the not $(B_1, B_2)$-separator bag, $B$ should be a complete bag.
But then $w$ is adjacent to $z$, contradiction.
Thus, $w$ and $z$ are contained in the same connected component of $D-V(B)$.
Without loss of generality, we may assume that such a connected component contains $B_1$.

Suppose there is a bag $B_1'$ where the center of $B_1$ is adjacent to $B_1'$.
Since $B_1$ is simple, $B_1'$ contains only non-$S$-attached twin class.
Thus one of $w$ and $z$ are not contained in $B_1'$, as they are not twins in the path $w'wvzz'$.

We may assume $w$ is in $\btwn (B_1, B)\setminus \{B_1, B\}$.
Then, every neighbor of $w$ in $G-S$ is adjacent to $z$, in particular, $w'$ is adjacent to $z$. 
This contradicts the assumption that $w'wvzz'$ is induced.

This proves the claim. Since there is no such a path $w'wvzz'$, we can apply Reduction Rule~\ref{rrule:p5middle} to remove $v$.
It contradicts the assumption that $D$ is reduced.
We conclude that $B_1$ and $B_2$ are neighbor bags.
\end{proof}

Finally, we claim that our instance has the desired inseparability property. We formalize and prove this property below.

\begin{lemma}
\label{lem:simplifynearsattached2}
Let $B$ be a bag and let $D_1$ be a connected component of $D-V(B)$ such that
$D_1$ contains exactly one $S$-attached bag $B_1$.
Then there is no $(B_1, B)$-separator bag.
\end{lemma}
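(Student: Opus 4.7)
The plan is to argue by contradiction. Suppose a $(B_1, B)$-separator bag exists; among all such bags pick one, say $B^*$, that minimizes $|\btwn(B_1, B^*)|$. I aim to derive a structural description of $B_1$ that forces it to be non-$S$-attached, contradicting the hypothesis.

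First I would argue that $B^*$ is a simple star bag: the definition of a separator already gives the star property, and if the center of $B^*$ is marked, then its marked edge leads to some third component $C_3$ of $D - V(B^*)$ (distinct from $\comp(B^*, B_1)$ and $\comp(B^*, B)$). Since $B^* \in D_1$ and $B_1, B \notin C_3$, the component $C_3$ sits inside $D_1 \setminus \{B_1\}$, so it contains no $S$-attached bag, and Lemma~\ref{lem:smallbranch} forces $B^*$ to be a simple star bag with center adjacent to a single-bag component $C_3$. The minimality of $B^*$, together with the observation that $\comp(B', B) = \comp(B', B^*)$ for any $B'$ strictly between $B_1$ and $B^*$, guarantees that no $(B_1, B^*)$-separator bag exists in $\btwn(B_1, B^*) \setminus \{B_1, B^*\}$. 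Hence Lemma~\ref{lem:simplifynearsattached} (with $B^*$ playing the role of $B$ and $\comp(B^*, B_1)$ playing the role of $D_1$) applies and tells us that $B_1$ is a star whose leaf attaches to $\comp(B_1, B^*)$ and whose center attaches to some leaf bag $B_2 \subseteq D_1$.

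I would then apply Lemma~\ref{rrule:bypassing1} to the simple star bags $B_1$ and $B^*$. All bags strictly between them are non-$S$-attached; any such bag with a third neighbor bag would, by Lemma~\ref{lem:smallbranch}, be a simple star whose center attaches to that branch, making it a $(B_1, B^*)$-separator bag and contradicting minimality of $B^*$. So all intermediate bags have exactly two neighbor bags, and Lemma~\ref{rrule:bypassing1} concludes that $B_1$ and $B^*$ are neighbor bags. A parallel application of Lemma~\ref{lem:smallbranch} to $B_1$ shows that $B_1$ itself has exactly two neighbor bags: $B_2$ via its (marked) center and $B^*$ via a single marked leaf.

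The main obstacle, and the concluding step, is to turn this structural description into a contradiction with the fact that $B_1$ is $S$-attached. Since the center of $B_1$ is marked and exactly one leaf of $B_1$ is marked, every unmarked vertex of $B_1$ is an unmarked leaf, and at least one such leaf $\ell$ exists because a star bag has at least three vertices. By Lemma~\ref{lem:splitadj}, every neighbor of $\ell$ in $G$ corresponds to an alternating path starting at $\ell$; the only unmarked edge from $\ell$ leads to the center, whose only marked edge leads to a marked vertex of $B_2$, from where an unmarked edge ends at some unmarked vertex of $B_2$ and cannot be extended further (such vertices have no marked edges). Hence all neighbors of $\ell$ in $G$ lie inside $B_2$, which is non-$S$-attached, so $\ell$ itself is non-$S$-attached. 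Applying this to every unmarked vertex of $B_1$ shows that $B_1$ is non-$S$-attached, contradicting the hypothesis. The delicate part is precisely verifying that no alternating path out of $\ell$ can reach any bag other than $B_2$; this relies crucially on the earlier structural conclusion that $B_1$ has exactly two neighbor bags and that the unique marked edge from its center points into $B_2$.
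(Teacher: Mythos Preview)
Your argument tracks the paper's proof closely through the choice of $B^*$, the verification that $B^*$ is a simple star bag, the application of Lemma~\ref{lem:simplifynearsattached}, and the use of Lemma~\ref{rrule:bypassing1} to conclude that $B_1$ and $B^*$ are neighbor bags. The divergence---and the genuine gap---is in the final contradiction.

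You claim that, via Lemma~\ref{lem:splitadj}, every neighbor of an unmarked leaf $\ell \in B_1$ \emph{in $G$} lies in $B_2$, and hence $\ell$ is non-$S$-attached. But $D$ is the canonical split decomposition of a connected component of $G-S$, not of $G$; Lemma~\ref{lem:splitadj} therefore only describes the neighbours of $\ell$ in $G-S$. Edges from $\ell$ to $S$ are not represented in $D$ at all, and the alternating-path argument says nothing about them. So your argument only shows $N_{G-S}(\ell) \subseteq B_2$, which does not contradict $B_1$ being $S$-attached---indeed, ``$S$-attached'' means precisely that some unmarked vertex of $B_1$ has a neighbour in $S$, an edge invisible to $D$.

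The paper closes the argument differently. After reaching the same structural picture (your $B_2, B_1, B^*$), it invokes Lemma~\ref{lem:starcenter} with $B_1$ in the role of the leaf bag and $B^*$ in the role of $B_2$: the hypotheses are met because $\btwn(B_1,B^*)\setminus\{B_1,B^*\}=\emptyset$ and $B^*$ has its center adjacent to $\comp(B^*,B_1)$. This forces $B_1$ to contain no non-$S$-attached twin class, so the set of unmarked vertices of $B_1$ is exactly the unique $S$-attached twin class $C_2$. Now the triple $(B_2, B_1, B^*)$ meets the preconditions of Reduction Rule~\ref{rrule:bypassing2}, contradicting the assumption that the instance is reduced. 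You should replace your alternating-path paragraph with this application of Lemma~\ref{lem:starcenter} and Reduction Rule~\ref{rrule:bypassing2}.
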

\begin{proof}
For contradiction, suppose that there is a $(B_1, B)$-separator bag. We choose such a bag  $B_2$ with minimum $\abs{\btwn (B_1, B_2)}$.
From the choice of $B_2$, there is no $(B_1, B_2)$-separator bag. 

We verify preconditions of Lemma~\ref{lem:simplifynearsattached} for $B_1$ and $B_2$.
Clearly, $\comp (B_2, B_1)$ has exactly one $S$-attached bag $B_1$, 
and there is no $(B_1, B_2)$-separator bag. 
To see that $B_2$ is a simple star bag, let us assume that there is a connected component $D_2$ of $D-V(B_2)$ where the center of $B_2$ is adjacent to $D_2$; if there is no such a component, it is clear by definition.
As $D_2$ contains no $S$-attached bag, by Lemma~\ref{lem:smallbranch}, $D_2$ consists of one bag and $B_2$ is a simple star bag.

By applying Lemma~\ref{lem:simplifynearsattached} for $B_1$ and $B_2$, 
we can observe that $B_1$ is a star whose leaf is adjacent to $\comp (B_1, B_2)$, and 
there is a leaf bag $B_l$ where the center of $B_1$ is adjacent to $B_l$.
We can also observe that $B_1$ is a simple star bag.
By Lemma~\ref{lem:smallbranch}, there is no connected component of $D-V(B_1)$ pending to leaves of $B_1$ other than the leaf adjacent to its parent.

Note that every bag $A$ in $\btwn (B_1, B_2)\setminus \{B_1, B_2\}$ has two neighbor bags, because it is not a $(B_1, B_2)$-separator bag and by Lemma~\ref{lem:smallbranch} there is no other component $D-V(A)$ pending to $A$.
Therefore  by Lemma~\ref{rrule:bypassing1}, $B_2$ is a neighbor bag of $B_1$.

Now, by Lemma~\ref{lem:starcenter}, there is no non-$S$-attached twin class in $B_1$, 
which means that the unmarked vertices of $B_1$ form a unique $S$-attached twin class. 
Then, we can apply Reduction Rule~\ref{rrule:bypassing2} to $B_{l},B_1, B_2$, a contradiction.

We conclude that there is no $(B_1, B)$-separator bag.
\end{proof}

\subsection{Connected components with two $S$-attached bags}\label{sec:twinclass}

This section is devoted to showing that if $D$ is reduced and contains two distinct $S$-attached classes, then we can apply a reduction rule.
Suppose $D$ is reduced and contains two distinct $S$-attached classes, and we choose a root bag of $D$. 
Let $B$ be a farthest bag from the root bag 
 such that there are two descendant bags $B_1$ and $B_2$ of $B$ having distinct $S$-attached twin classes $C_1$ and $C_2$, respectively.

First, we verify that the distance from $C_1$ to $C_2$ in $G-S$ is at most $2$.

\begin{lemma}
\label{lem:distancec1c2}
The distance from $C_1$ to $C_2$ in $G-S$ is at most $2$.
\end{lemma}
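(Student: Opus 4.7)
The plan is to invoke Observation~\ref{obser:separator}, which states that the distance from $C_1$ to $C_2$ in $G-S$ equals one plus the number of $(C_1, C_2)$-separator bags, so it suffices to show that at most one such separator bag exists.

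First I would exploit the maximality of $\abs{\btwn(B,R)}$ together with Lemma~\ref{lem:twinclassreduction}: every proper descendant subtree of $B$ contains at most one bag with an $S$-attached twin class. This yields Case A ($B_1 = B$, so $B_2$ lies in a child subtree of $B$ whose unique $S$-attached bag is $B_2$) and Case B ($B_1, B_2 \neq B$, lying in distinct child subtrees of $B$, each containing a unique $S$-attached bag). Next I would apply Lemma~\ref{lem:simplifynearsattached2} to $B$ with these child subtrees to rule out $(B_i, B)$-separator bags whenever $B_i \neq B$. For any bag $X$ strictly between $B_i$ and $B$ on $\btwn(B_1, B_2)$, one has $\comp(X, B_{3-i}) = \comp(X, B)$, so if $X$ were a $(C_1, C_2)$-separator bag it would also be a $(B_i, B)$-separator bag — impossible. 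Hence all $(C_1, C_2)$-separator bags lie in $\{B_1, B_2\}$ in Case A, and in $\{B_1, B, B_2\}$ in Case B.

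Finally I would show that at most one of these candidate bags is actually a $(C_1, C_2)$-separator, by contradiction. In the principal sub-case (for instance, Case A with both $B_1 = B$ and $B_2$ acting as separators), I would apply Lemma~\ref{lem:simplifynearsattached} to the simple star separator and the component containing the other candidate, yielding that the other candidate must be a star bag whose center is adjacent to a leaf bag $B^{\mathrm{leaf}}$ and one of whose leaves is adjacent to the component toward the first candidate. Then Lemma~\ref{lem:starcenter} applied to $B^{\mathrm{leaf}}$ and this star forces the star to contain no non-$S$-attached twin class, so Reduction Rule~\ref{rrule:bypassing2} applies to the triple $(B^{\mathrm{leaf}}, \text{this star}, \text{the simple star})$, contradicting the fact that the instance is reduced. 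The hard part will be handling sub-cases where a supposed separator fails to be a simple star — for example, in Case B when $B$ is claimed to be a separator yet its center is marked and adjacent to a third component containing a further $S$-attached bag, which blocks a direct application of Lemma~\ref{lem:simplifynearsattached}. Resolving these will likely require either invoking Lemma~\ref{lem:smallbranch} to force the third component to be a single non-$S$-attached bag (re-establishing simplicity of $B$), or a separate structural argument that exploits the further $S$-attached bag to derive the required contradiction.
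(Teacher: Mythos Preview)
Your first step---using Lemma~\ref{lem:simplifynearsattached2} to localize the $(C_1,C_2)$-separator bags to $\{B_1,B,B_2\}$---is correct and matches the paper. But from there the paper takes a much shorter route than your structural case analysis. Writing a shortest $C_1$--$C_2$ path in $G-S$ as $C_1=U_0,U_1,\dots,U_t,U_{t+1}=C_2$, every internal $U_i$ lives in a bag of $\btwn(B_1,B_2)\setminus\{B_1,B_2\}$, and by the maximality of $B$ at most one of those bags (namely $B$ itself) can be $S$-attached. Hence at most one of $u_1,\dots,u_t$ has a neighbour in $S$, while $u_0,u_{t+1}$ do. Now Lemma~\ref{lem:shortdistance}(3),(4) immediately forbids $t\in\{2,3\}$, and together with $t\le 3$ this gives $t\le 1$. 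No further structural work is needed.

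Your alternative plan to show directly that at most one of $B_1,B,B_2$ is a separator has a real gap, not just hard sub-cases. In the principal sub-case you want to invoke Lemma~\ref{lem:simplifynearsattached} with $B$ as the simple star bag, but $B$ need not be simple: its center may be adjacent to the component containing the parent of $B$, which can carry further $S$-attached bags, and Lemma~\ref{lem:smallbranch} does not help there. Even when Lemma~\ref{lem:simplifynearsattached} does apply, your application of Reduction Rule~\ref{rrule:bypassing2} requires the third bag (the neighbour of $B_2$ toward $B$) to be a simple star, and that neighbour need not be $B$---there may be non-separator bags in between, and Lemma~\ref{rrule:bypassing1} cannot collapse them without simplicity of $B$. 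Ruling $B$ out as a separator is exactly Proposition~\ref{prop:anticomplete1}, whose proof in the paper already \emph{uses} the present lemma, so you are reproving a downstream statement without the tool it rests on. The missing idea is simply to use Lemma~\ref{lem:shortdistance}(3),(4).
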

\begin{proof}
Let us take a shortest sequence of twin classes $(C_1=U_0)-U_1- \cdots - U_t-(C_2=U_{t+1})$ from $C_1$ to $C_2$ in $G-S$ such that for $i,j\in \{0, 1, \ldots, t+1\}$ with $i\neq j$, $U_i$ is complete to $U_j$ if $\abs{i-j}=1$ and they are anti-complete, otherwise. We note that each $U_i$ except $U_0$ and $U_{t+1}$ corresponds to a $(B_1, B_2)$-separator bag.
 Clearly, at most one of $U_1, \ldots, U_t$ possibly has a neighbor in $S$ because $C_i$ is the unique $S$-attached twin class in $\comp (B, B_i)$ if $B_i\neq B$.
 By (3) and (4) of Lemma~\ref{lem:shortdistance}, 
 the length from $C_1$ to $C_2$ in $G-S$ cannot be $3$ or $4$, and thus $t$ cannot be $2$ or $3$.
 Also, by Lemma~\ref{lem:simplifynearsattached2}, 
 we know that there is no $(B_i, B)$-separator bag when $B_i\neq B$.
 Thus, $t$ cannot be larger than $3$.
So, the distance from $C_1$ to $C_2$ in $G-S$ is at most  $2$. 
\end{proof}

\begin{PROP}
  \label{prop:anticomplete1}
 The bag $B$ is not a $(C_1, C_2)$-separator bag.
  \end{PROP}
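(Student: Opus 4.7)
I would proceed by contradiction: assume $B$ is a $(C_1, C_2)$-separator bag and derive a contradiction with the hypothesis that the instance is reduced. The argument splits into cases according to whether $B$ coincides with one of $B_1, B_2$ or is strictly between them. The case $B = B_1 = B_2$ is immediate: by the separator-bag definition, $B$ would be a star containing both $C_1$ and $C_2$ among the twin classes of its leaves, directly contradicting Lemma~\ref{lem:twinclassreduction}.

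For the remaining cases, combining Observation~\ref{obser:separator} with Lemma~\ref{lem:distancec1c2} forces the distance between $C_1$ and $C_2$ in $G - S$ to be exactly $2$, with $B$ the unique $(C_1, C_2)$-separator bag. Hence $C_1$ is anti-complete to $C_2$, and there exists an unmarked vertex $u \in V(G-S)$ adjacent to every vertex of $C_1 \cup C_2$. In the subcase where $B$'s center is unmarked, $u$ is that center; in the subcase where $B$'s center is marked toward a third component $D_3$ of $D - V(B)$, Lemma~\ref{lem:smallbranch} together with the farthest choice of $B$ forces $D_3$'s structure to be simple enough that $u$ lies in the single bag of $D_3$ adjacent to $B$'s center.

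I would then apply Reduction Rule~\ref{rrule:p5middle} to $u$, showing every induced path $P = p_1 p_2 u p_4 p_5$ admits a bypassing vertex in $S$. The case analysis on $(p_2, p_4)$ runs: both in $S$ is handled by Lemma~\ref{lem:twovertexinS}; $p_2 \in C_1, p_4 \in C_2$ (and symmetric) is handled by Lemma~\ref{lem:relationt1t2}(1), whose preconditions (anti-complete $S$-attached twin classes with a common neighbor $u \in V(G-S)$) are exactly what we have, yielding a bypassing vertex in $(N_G(C_1) \cap N_G(C_2)) \cap S$. Mixed cases, where one endpoint lies in $S$ or in a further twin class reached through another marked leaf of $B$, are handled by combining Lemma~\ref{lem:relationt1t2}(2)--(3), Lemma~\ref{lem:completerelationt1t2}, and Lemma~\ref{lem:shortdistance}(2), while Lemma~\ref{lem:simplifynearsattached2} and the farthest choice of $B$ pin down where the non-$(C_1 \cup C_2)$ neighbors of $u$ can sit.

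The main obstacle will be these mixed cases: $u$ may have neighbors reached via several marked leaves of $B$ (or through a third component in the center-marked subcase), and ruling out length-4 induced paths that escape the standard common-$S$-neighbor lemmas is delicate. The structural groundwork from Section~\ref{subsec:structure}, especially Lemma~\ref{lem:simplifynearsattached2} and the fact (from the farthest choice of $B$) that each child component of $B$ in the decomposition contains a unique $S$-attached bag, is precisely what controls these cases and yields the contradiction, since once we have verified the bypassing condition for every $P$, Reduction Rule~\ref{rrule:p5middle} removes $u$ and contradicts reducedness.
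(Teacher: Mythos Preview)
Your approach is genuinely different from the paper's, and it has a real gap. You target the common neighbor $u$ of $C_1$ and $C_2$ and try to show that Reduction Rule~\ref{rrule:p5middle} applies to $u$. The paper instead works on the opposite side: it locates a non-$S$-attached vertex $x\in (N_G(C_1)\setminus N_G(C_2))\cap V(G-S)$ lying in a complete bag of $\btwn(B_1,B)$ (using Lemma~\ref{lem:starcenter} and the canonicity of $D$ to guarantee such a bag exists), observes that $x$ then has a neighbor in $(N_G(C_1)\cap N_G(C_2))\cap V(G-S)$, and derives an immediate contradiction from Lemma~\ref{lem:relationt1t2}(1)--(2), since $x$ would have to be adjacent to every vertex of $(N_G(C_1)\cap N_G(C_2))\cap S$ while having no $S$-neighbors at all. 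No $P_5$ case analysis is needed.

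The gap in your plan is the treatment of the direction toward the \emph{parent} of $B$. When $B$ is a $(C_1,C_2)$-separator bag, one leaf of $B$ points to the root side, and nothing prevents the center of $B$ from being marked toward the parent component rather than toward a descendant $D_3$. In that situation your appeal to ``Lemma~\ref{lem:smallbranch} together with the farthest choice of $B$'' fails outright: the farthest choice of $B$ constrains only the descendant components of $B$, and Lemma~\ref{lem:smallbranch} says nothing about the parent side, which may contain many $S$-attached bags. Even if $u$ is the unmarked center of $B$ or sits in a simple descendant $D_3$, $u$ still acquires neighbors through the leaf of $B$ that leads to the parent, so an induced path $p_1p_2up_4p_5$ with $p_2\in C_1$ and $p_4$ coming from the parent direction has to be handled; none of Lemmas~\ref{lem:relationt1t2}, \ref{lem:completerelationt1t2}, \ref{lem:simplifynearsattached2} give you control over such a $p_4$. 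The paper's argument sidesteps all of this by never looking at $u$ and instead forcing a contradiction purely inside $\comp(B,B_1)$.
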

\begin{proof}
For each $i\in \{1,2\}$ let $T_i=N_G(C_i)$.
Since by Lemma~\ref{lem:distancec1c2} the distance from $C_1$ to $C_2$ is at most $2$, it follows from Observation~\ref{obser:separator} that there exists at most one $(C_1, C_2)$-separator bag. Suppose that $B$ is the $(C_1, C_2)$-separator bag.
Note that $B_i\neq B$ for some $i\in \{1,2\}$ because $B_1$ and $B_2$ are distinct. Without loss of generality, we assume that $B_1\neq B$.
We verify the proposition by a sequence of claims.

\begin{claim}
$B_1$ is not a star bag whose leaf is adjacent to $\comp (B_1, B)$.
\end{claim}
\begin{clproof}
Suppose $B_1$ is a star bag whose leaf is adjacent to $\comp (B_1, B)$. As $B_1$ is the unique $S$-attached bag in $\comp (B, B_1)$, 
by Lemma~\ref{lem:smallbranch}, there is no bag pending to a leaf of $B_1$.
Also, the center of $B_1$ is marked, otherwise, we can apply Reduction Rule~\ref{rrule:leaftoS}, and by Lemma~\ref{lem:smallbranch}, $B_1$ is a simple star bag.
Therefore, $C_1$ consists of leaves of $B_1$, and $B_1$ is a $(C_1, C_2)$-separator bag. But it contradicts the assumption that $B\neq B_1$ and $B$ is the only $(C_1, C_2)$-separator bag.
\end{clproof}

Note that $B_1$ is either a complete graph, or a star whose center is adjacent to $\comp (B_1, B)$.
We observe that $B_1$ contains a non-$S$-attached twin class.
\begin{claim}
$B_1$ contains a non-$S$-attached twin class.
\end{claim}
\begin{clproof}
Suppose for contradiction that $B_1$ contains no non-$S$-attached twin class, that is, $C_1$ is exactly the set of unmarked vertices of $B_1$. 
Let $B_3$ be the parent bag of $B_1$. 
If $B_3$ is not a star whose center is adjacent to $B_1$, then we can apply Reduction Rule~\ref{rrule:leaf}.
We may assume $B_3$ is a star whose center is adjacent to $B_1$. But in this case, $B\neq B_3$, 
and thus, $B_3$ has no $S$-attached twin classes. By Lemma~\ref{lem:smallbranch}, $B_3$ has exactly two neighbor bags, and by Lemma~\ref{lem:starcenter},  
it contains no non-$S$-attached twin class. But this is impossible.
We conclude that $B_1$ contains a non-$S$-attached twin class.
\end{clproof}

\begin{claim}
There is a vertex $x$ in $(T_1\setminus T_2)\cap V(G-S)$ contained in a complete bag such that $x$ has no neighbors in $S$.
\end{claim}
\begin{clproof}
If $B_1$ is a complete bag, then the non-$S$-attached twin class is contained in $(T_1\setminus T_2)\cap V(G-S)$.
Assume $B_1$ is a star. Since $B$ is a star whose leaf is adjacent to $\comp(B, B_1)$, 
there is at least one bag in $\btwn (B_1, B)\setminus \{B_1, B\}$. Moreover, there is no star bag $B'$ in $\btwn (B_1, B)\setminus \{B_1, B\}$ whose center is adjacent to $\comp (B', B_1)$ by Lemma~\ref{lem:starcenter}.
Therefore, there is at least one complete bag in $\btwn (B_1, B)\setminus \{B_1, B\}$, which contains a vertex in $(T_1\setminus T_2)\cap V(G-S)$.
We choose $x$ to be such a vertex. Then $x$ is a vertex in $(T_1\setminus T_2)\cap V(G-S)$ having no neighbors in $S$, and 
also contained in a complete bag.
\end{clproof}

Since $x$ is contained in a complete bag, $x$ has a neighbor in $(T_1\cap T_2)\cap V(G-S)$.
By (1) of Lemma~\ref{lem:relationt1t2}, we have $(T_1\cap T_2)\cap S\neq \emptyset$.
Since $x\in T_1\setminus T_2$ and $x$ is adjacent to a vertex in $T_1\cap T_2$, by (2) of Lemma~\ref{lem:relationt1t2}, $x$ should be adjacent to all vertices in $(T_1\cap T_2)\cap S$, which contradicts the fact that $x$ has no neighbors in $S$.
\end{proof}

The following lemma describes all possible cases.

\begin{lemma}\label{lem:twosattachedbags}
Let $B$ be a farthest bag from the root bag 
 such that there are two descendant bags $B_1$ and $B_2$ of $B$ having distinct $S$-attached twin classes $C_1$ and $C_2$, respectively.
Then $B_1\neq B_2$ and one of the following happens:
\begin{enumerate}
\item The distance from $C_1$ to $C_2$ in $G-S$ is $2$ and the unique $(C_1, C_2)$-separator bag is contained in $\comp (B, B_i)$ for some $i\in \{1,2\}$.
\item  $C_1$ is complete to $C_2$ and either
\begin{itemize}
\item $B$ is a star bag and $C_i$ is the set consisting of the center of $B$ for some $i\in \{1,2\}$, or
\item  $B$ is a star bag whose center is adjacent to $\comp (B, B_i)$ for some $i\in \{1,2\}$.
\end{itemize}
\item  $C_1$ is complete to $C_2$ and $B$ is a complete bag.
\end{enumerate}
\end{lemma}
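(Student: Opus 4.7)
The plan is to combine three ingredients from earlier in the paper---Lemmas~\ref{lem:twinclassreduction} and~\ref{lem:distancec1c2}, Observation~\ref{obser:separator}, and Proposition~\ref{prop:anticomplete1}---together with the alternating-path characterization of adjacency in a canonical split decomposition given by Lemma~\ref{lem:splitadj}. Lemma~\ref{lem:twinclassreduction} immediately forces $B_1 \neq B_2$, since no bag may host two distinct $S$-attached twin classes. Next, the choice of $B$ as a farthest bag from the root witnessing the property pins down $B$ as the least common ancestor of $B_1$ and $B_2$ in the tree of bags of $D$: otherwise some strict descendant of $B$ would witness the same property at greater depth. In particular $B \in \btwn(B_1, B_2)$.

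By Lemma~\ref{lem:distancec1c2}, the distance $d$ from $C_1$ to $C_2$ in $G-S$ satisfies $d \le 2$, and Observation~\ref{obser:separator} says that the number of $(C_1,C_2)$-separator bags is exactly $d-1$. For $d=2$ the unique separator bag $B^*$ lies on $\btwn(B_1, B_2)$, and Proposition~\ref{prop:anticomplete1} rules out $B^* = B$. Deleting $B$ from the path $\btwn(B_1,B_2)$ splits what remains between $\comp(B,B_1)$ and $\comp(B,B_2)$ (with the convention that $\comp(B,B_i) = \emptyset$ exactly when $B=B_i$, in which case $B^*$ is forced to the opposite side), so $B^* \in \comp(B,B_i)$ for some $i \in \{1,2\}$, which is case~1.

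For $d=1$, the existence of any edge between $C_1$ and $C_2$ propagates to all pairs because $C_1,C_2$ are twin classes (hence twins in $G$), so $C_1$ is complete to $C_2$. I would then invoke Lemma~\ref{lem:splitadj}: each adjacency $c_1c_2$ is realized by an alternating path in $D$, and this path must cross $B$ since $B \in \btwn(B_1,B_2)$. The portion of the path inside any bag consists of exactly one unmarked edge---between two marked endpoints when $B$ lies strictly between $B_1$ and $B_2$, and from the unmarked endpoint $c_j$ to a marked vertex when $B = B_j$. Since the unmarked edges of a star bag are exactly those incident to its center, a star $B$ admits only two possibilities: either its center is marked and incident to a marked edge leading to $\comp(B,B_i)$ for some $i$ (the second bullet of case~2), or $B = B_j$ and $c_j$ is the center of $B$, in which case Observation~\ref{obs:starrestriction} forces $C_j = \{c_j\}$ (the first bullet of case~2). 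A complete bag $B$ trivially accommodates any crossing and delivers case~3.

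The main technical work is the case analysis in the $d=1$ situation: the three sub-configurations $B=B_1$, $B=B_2$, and $B \notin \{B_1,B_2\}$ each need to be checked, and in each the alternating-path requirement together with the shape of unmarked edges in a star or complete bag must be pushed to force exactly one of the listed subcases. Observation~\ref{obs:starrestriction} is the pivot that rules out a non-trivial twin class of $G-S$ sitting at an unmarked center of a star bag, and it is what upgrades the first bullet of case~2 to the strong statement ``$C_i$ consists of the center''. Everything else is bookkeeping using Observation~\ref{obser:separator} and Proposition~\ref{prop:anticomplete1}.
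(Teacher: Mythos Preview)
Your proof is correct and follows essentially the same route as the paper: both deduce $B_1\neq B_2$ from Lemma~\ref{lem:twinclassreduction}, bound the distance via Lemma~\ref{lem:distancec1c2}, dispatch the distance-$2$ case using Proposition~\ref{prop:anticomplete1} together with Observation~\ref{obser:separator}, and finish the distance-$1$ case by a type check on $B$. The paper's proof is terser in the distance-$1$ case---it simply asserts that a star $B$ must have its center either equal to some $C_i$ or pointing toward some $\comp(B,B_i)$---whereas you spell this out via the alternating-path characterization of Lemma~\ref{lem:splitadj}; this is a welcome elaboration rather than a different argument. One minor remark: you do not actually need Observation~\ref{obs:starrestriction} to conclude $C_i=\{\text{center}\}$ in the first bullet of case~2, since in a star bag the center and any leaf have different neighborhoods in $G-S$ and hence cannot lie in the same twin class.
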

\begin{proof}
By Lemma~\ref{lem:twinclassreduction}, each bag contains at most one $S$-attached twin class and it follows that $B_1\neq B_2$.
By Lemma~\ref{lem:distancec1c2}
the distance from $C_1$ to $C_2$ in $G-S$ is at most $2$.
Suppose that the distance from $C_1$ to $C_2$ in $G-S$ is $2$. 
Then, there is a unique $(C_1, C_2)$-separator bag in $D$.
By Proposition~\ref{prop:anticomplete1}, $B$ cannot be the $(C_1, C_2)$-separator bag.
Thus, the unique $(C_1, C_2)$-separator bag in contained in $\comp (B, B_i)$ for some $i\in \{1,2\}$.
If the distance from $C_1$ to $C_2$ is $1$, then $C_1$ is complete to $C_2$, 
and in this case if $B$ is a star, then its center either consists of one class $C_i$ or is adjacent to one of $\comp (B, B_1)$ and $\comp (B, B_2)$.
\end{proof}

We show that in each of three cases in Lemma~\ref{lem:twosattachedbags}, 
we can apply a reduction rule.

\begin{PROP}
  \label{prop:anticomplete2}
  Suppose the distance from $C_1$ to $C_2$ in $G-S$ is $2$ and the unique $(C_1, C_2)$-separator bag is contained in $\comp (B, B_2)$.
  Then for every induced path $P=p_1p_2p_3p_4p_5$ with $p_3\in C_2$, 
  there is a bypassing vertex for $P$ and $p_3$.
  \end{PROP}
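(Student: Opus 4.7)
The plan is to establish the existence of a bypassing vertex by a case analysis on the positions of $p_2$ and $p_4$, leveraging Lemma~\ref{lem:relationt1t2} applied to the pair $(C_1, C_2)$. If successful, this will show that Reduction Rule~\ref{rrule:p5middle} applies to every vertex of $C_2$, yielding the desired contradiction with the assumption that $D$ is reduced (and ultimately allowing us to conclude that Case~1 of Lemma~\ref{lem:twosattachedbags} cannot occur in a reduced instance).

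Setup: Write $T_i := N_G(C_i)$ for $i\in\{1,2\}$. Since the distance from $C_1$ to $C_2$ in $G-S$ is $2$, the twin classes $C_1$ and $C_2$ are anti-complete and share a common neighbor in $V(G-S)$, so the hypotheses of Lemma~\ref{lem:relationt1t2} are satisfied both for $(C_1,C_2)$ and, by symmetry, for $(C_2,C_1)$. In particular, part (1) yields a vertex $y\in (T_1\cap T_2)\cap S$, which will be our principal candidate for a bypassing vertex. Fix now an arbitrary induced path $P=p_1p_2p_3p_4p_5$ with $p_3\in C_2$. Because $C_2$ is a twin class and induced paths contain no twins, $p_2$ and $p_4$ lie outside $C_2$; being neighbors of $p_3$, they lie in $T_2$. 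Also, $p_2p_4\notin E(G)$.

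Case analysis: If $p_2,p_4\in S$, then Lemma~\ref{lem:twovertexinS} immediately produces a bypassing vertex. Otherwise, WLOG $p_2\in T_2\cap V(G-S)$, and I branch on (i) whether $p_2\in T_1$ or $p_2\in T_2\setminus T_1$, and (ii) whether $p_4\in S$ or $p_4\in V(G-S)$, and further on $p_4\in T_1$ versus $p_4\in T_2\setminus T_1$. In each subcase I would argue as follows. When both $p_2$ and $p_4$ lie in the same region (either both in $T_1\cap T_2$ or both in $T_2\setminus T_1$), the vertex $y$ itself (respectively, a vertex obtained by linking $y$ to $p_2$ and $p_4$ via one intermediate vertex of $S$, whose existence follows from Branching Rule~\ref{brule:reducecomponent} and Lemma~\ref{lem:shortdistance}(2)) serves as a bypassing vertex after invoking Lemma~\ref{lem:relationt1t2}(2) or (3) to propagate adjacencies across $T_1\cap T_2$. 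When $p_2$ and $p_4$ lie in different regions (one in $T_1\cap T_2$, one in $T_2\setminus T_1$), the rigidity expressed by parts (2) and (3) of Lemma~\ref{lem:relationt1t2}, applied with $C_1,C_2$ in the appropriate roles, again forces the candidate $S$-vertex witnessing the short connection between $p_2$ and $p_4$ to be adjacent to both.

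The main obstacle will be the mixed subcases where exactly one of $p_2,p_4$ lies in $V(G-S)$ while the other lies in $S$, and the two have distinct relations to $T_1$. There one must bridge $y$ to the $S$-side via the connected component of $G[S]$ guaranteed by Branching Rule~\ref{brule:reducecomponent}, and then rule out the bad possibility by the standard maneuver used throughout Section~\ref{sec:rules}: exhibit an induced path of length $3$ or $4$ whose endpoints share an additional common neighbor, invoke Lemma~\ref{lem:dhobs} to extract a small DH obstruction, and contradict the fact that the instance is reduced under Branching Rule~\ref{brule:threevertices}. The argument is a direct analogue of the case analysis in the proof of Lemma~\ref{lem:twinclassreduction}, adapted to the asymmetric role of $C_2$ forced by the position of the unique $(C_1,C_2)$-separator bag inside $\comp(B,B_2)$.
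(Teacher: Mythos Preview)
Your plan has a genuine gap in the hardest subcase. The first missing step is structural: you never argue that $B_2$ itself must be the $(C_1,C_2)$-separator bag (this uses Lemma~\ref{lem:simplifynearsattached2}), and hence that $N_{G-S}(C_2)$ is a single non-$S$-attached twin class $C_3$ complete to $C_1$. This immediately collapses most of your case distinction: whenever $p_2\in V(G-S)$ one has $p_2\in C_3\subseteq T_1\cap T_2$ and $p_2$ has \emph{no} neighbors in $S$; moreover $p_4$ is then forced into $S$ because $C_3$ is a twin class. So the branches $p_2\in T_2\setminus T_1$ and $p_2,p_4\in V(G-S)$ are vacuous, and the only substantive cases are $p_4\in (T_2\setminus T_1)\cap S$ and $p_4\in (T_1\cap T_2)\cap S$.

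The former case is indeed handled by Lemma~\ref{lem:relationt1t2} as you sketch. The latter is where your plan breaks down. Parts (2) and (3) of Lemma~\ref{lem:relationt1t2} only propagate adjacencies from a vertex of $T_i\setminus T_{3-i}$ to all of $T_1\cap T_2$; they say nothing about adjacencies \emph{within} $T_1\cap T_2$. When $p_2,p_4\in T_1\cap T_2$ with $p_2$ non-$S$-attached, no vertex of $S$ (in particular your candidate $y$) can be adjacent to $p_2$, so a bypassing vertex simply does not exist and you must instead show the configuration itself is impossible. The paper accomplishes this by a genuinely different argument: it bounds the shape and length of $\btwn(B,B_1)$ via Lemmas~\ref{lem:smallbranch} and~\ref{lem:starcenter}, pins down the location of $p_1$ in the decomposition, and produces a small DH obstruction using $c_1$ together with an auxiliary vertex of $G-S$. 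This structural analysis of the $B_1$ side is essential here and is not a ``direct analogue'' of the proof of Lemma~\ref{lem:twinclassreduction}.
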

\begin{proof}
Let $T_i=N_G(C_i)$ for each $i\in \{1,2\}$. 
 We start by proving the following claim.
\begin{claim}
The bag $B_2$ is the $(C_1, C_2)$-separator bag.
\end{claim}
\begin{clproof}
For a contradiction, suppose that $B_2$ is not the $(C_1, C_2)$-separator bag and let $B'\neq B_2$ be the $(C_1, C_2)$-separator bag.
Then $B'$ is a $(B_2, B)$-separator bag.
However, since $\comp (B, B_2)$ has exactly one $S$-attached bag $B_2$, 
by Lemma~\ref{lem:simplifynearsattached2}, 
there is no $(B_2, B)$-separator bag, which is a contradiction.
\end{clproof}

By Lemma~\ref{lem:smallbranch}, $B_2$ has no child pending to a leaf of $B_2$.
If the center of $B_2$ is unmarked, then we can reduce it using Reduction Rule~\ref{rrule:leaftoS}.
Thus there is component attached to the center of $B_2$, and by Lemma~\ref{lem:smallbranch} this component is a single leaf bag.
We call the leaf bag $B_3$, and let $C_3$ be the set of unmarked vertices of $B_3$. Note that $C_3$ is a non-$S$-attached twin class.
Also, by Lemma~\ref{lem:starcenter}, $B_2$ contains no non-$S$-attached twin class.

Suppose  there is an induced path $P=p_1p_2p_3p_4p_5$ with $p_3\in C_2$.
We want to show that there is a bypassing vertex for $P$ and $p_3$.
Observe that every neighbor of $p_3$ in $G$ is either in $S$ or in $C_3$.
As $C_3$ is a twin class, it contains at most one of $p_2$ and $p_4$.
If $p_2$ and $p_4$ are contained in $S$, then by Lemma~\ref{lem:twovertexinS}, there is a bypassing vertex for $p_3$.
Thus, we may assume that one of $p_2$ and $p_4$ is contained in $S$ and the other is contained in $C_3$.

By symmetry, we may assume that $p_2\in C_3$ and $p_4\in S$.
Note that since $p_2\in C_3$, $p_2$ has no neighbors in $S$.
Furthermore, 
as the distance from $C_1$ to $C_2$ is exactly $2$, $C_3$ is complete to $C_1$.
It implies that $p_2\in (T_1\cap T_2)\cap V(G-S)$.

By (1) of Lemma~\ref{lem:relationt1t2}, we have $(T_1\cap T_2)\cap S\neq \emptyset$.
Let $t\in (T_1\cap T_2)\cap S$.
We divide cases depending on whether $p_4$ is in $T_2\setminus T_1$ or $T_1\cap T_2$.

\subparagraph{\textbf{Case 1.} $p_4\in (T_2\setminus T_1)\cap S$:}  

Note that $p_4\in T_2\setminus T_1$ and $p_2, t\in T_1\cap T_2$.
Since $p_4$ is not adjacent to $p_2$, by (2) of Lemma~\ref{lem:relationt1t2},
$p_4$ is not adjacent to $t$ as well. As $p_4$ and $t$ are neighbors of $p_3$ and $(G,S, k)$ is reduced under Branching Rule~\ref{brule:reducecomponent}, 
$p_4$ and $t$ are contained in the same connected component of $G[S]$. Moreover, since $(G,S,k)$ is reduced under Branching Rule~\ref{brule:threevertices}, 
there is no induced path of length at least $3$ from $p_4$ to $t$ in $G[S]$, 
and thus the distance from $p_4$ to $t$ in $G[S]$ is exactly $2$.
Let $p_4pt$ be an induced path for some $p\in S$. 

If $p$ is contained in $T_1\cap T_2$, then $p_4$ should be adjacent to $p_2$ by (2) of Lemma~\ref{lem:relationt1t2}.
Thus,  $p$ is not contained in $T_1\cap T_2$.
If $p\in T_2\setminus T_1$, then by (2) of Lemma~\ref{lem:relationt1t2}, $p$ is adjacent to $p_2$, but $p_2$ has no neighbors in $S$, a contradiction. %
Lastly, assume that $p\in S\setminus T_2$. In this case,  by (3) of Lemma~\ref{lem:relationt1t2} with $(p,x,y_1,y_2)=(p,p_4,t,p_2)$, 
$p$ is adjacent to $p_2$, again a contradiction. %

\subparagraph{\textbf{Case 2.} $p_4\in (T_1\cap T_2)\cap S$:}  
Let $c_1\in C_1$.
If $p_2$ and $c_1$ have a common neighbor $c$ in $G-S$ that is adjacent to neither $p_3$ nor $p_4$, then
$G[\{p_2,p_3, p_4,c,c_1\}]$ is isomorphic to the house.
So, there are no such vertices. 
This implies that for each $i\in \{1,2\}$, there is no complete bag in $\btwn (B_i, B)\setminus \{B, B_i\}$,
and if $B_1$ or $B$ is a complete bag, then it contains no non-$S$-attached twin class.

We claim that $\btwn (B, B_1)$ contains at most $3$ bags.

\begin{claim}\label{claim:restricted}
$\btwn (B, B_1)$ contains at most $3$ bags, and when it contains $3$ bags, the bag in $\btwn (B, B_1)$ is a star bag whose center is adjacent to $B$.
\end{claim}
\begin{clproof}
Suppose $\btwn (B, B_1)$ contains more than $3$ bags, and let $B_1'$ be the parent bag of $B_1$ and $B_1''$ be the parent of $B_1'$.
As  $\btwn (B, B_1)\setminus \{B\}$ contains no complete bags,
both $B_1'$ and $B_1''$ are star bags. 
Thus, $B_1''$ is a star bag whose center is adjacent to $B_1'$.
Such a bag $B_1''$ does not exist by Lemma~\ref{lem:starcenter}. It proves the claim.
\end{clproof}

In particular, Claim~\ref{claim:restricted} implies that every neighbor of $C_3$ is either in $C_2$ or not contained in the component of $D-V(B)$ containing $B_2$.

We divide into subcases depending on the shape of $B_1$.

\subparagraph{\textbf{Case 2-1.} $B_1$ is a complete bag:}

First assume that $B_1=B$. As $B_1$ contains no non-$S$-attached twin class and $p_1p_4\notin E(G)$,
$p_1$ is in the neighborhood of $C_1$ in $G-S$. Then $c_1$ is adjacent to end vertices of $p_1p_2p_3p_4$, and by Lemma~\ref{lem:dhobs}, 
$G$ contains a small DH obstruction. This is a contradiction.
We may assume $B_1\neq B$.

As $D$ is canonical, 
the parent bag $B_1'$ of $B_1$ is a star bag.
We claim that $B=B_1'$. Suppose $B\neq B_1'$, that is, $B_1'$ is contained in $\btwn (B, B_1)\setminus \{B, B_1\}$.
Since there is no $(B, B_1)$-separator bag, the center of $B_1'$ is adjacent to either $\comp(B_1', B)$ or $B_1$.
As $B_1$ is the unique $S$-attached bag in $\comp(B, B_1)$, 
by Lemma~\ref{lem:smallbranch}, $B_1'$ has exactly two neighbor bags.
Also, again by Lemma~\ref{lem:smallbranch}, $B_1$ is a leaf bag.
Therefore, by Lemma~\ref{lem:starcenter}, the center of $B_1'$ is not adjacent to $B_1$.
On the other hand, if the center of $B_1'$ is adjacent to $\comp(B_1', B_1)$, 
then we can apply Reduction Rule~\ref{rrule:leaf}, as $B_1$ contains no non-$S$-attached twin class, which is a contradiction.
Thus, we have $B=B_1'$, and the same argument using Reduction Rule~\ref{rrule:leaf} implies that the center of $B_1'$ is adjacent to $B$.
Then $p_1$ should be contained in $B_1$ and adjacent to $p_4$, which is impossible.

\subparagraph{\textbf{Case 2-2.} $B_1$ is a star bag:}
First assume that $B_1=B$. 
As $C_1$ is complete to $C_3$, the center of $B_1$ is adjacent to $\comp (B, B_2)$.
If $B$ and $B_2$ are neighbor bags, then the marked edge connecting them can be recomposable.
Thus, in this case, $\btwn (B, B_2)$ contains $3$ bags. Let $B_4$ be the bag in $\btwn (B, B_2)\setminus \{B, B_2\}$, and $b$ be an unmarked vertex in $B_4$.
It is not difficult to observe that $p_1$ should be adjacent to $b$, since $p_1$ cannot be in $C_2$.
Then $bp_1p_2p_3p_4$ is an induced path and $c_1$ is adjacent to its end vertices, and thus $G$ contains a small DH obstruction.
It is a contradiction. We may assume $B_1\neq B$.

Similar to the case when $B_1$ is a complete bag, we can show that the parent of $B_1$ is $B$ and $B$ is a star whose center is adjacent to $B_1$.
In this case, $p_1$ is contained in the non-$S$-attached twin class, as $p_1$ is not adjacent to $p_4$.
As $B$ has at least $3$ vertices, there is a vertex $x\in V(G-S)$ where $x$ is adjacent to $p_1$, but not adjacent to $p_2, p_3, p_4$.
If $x$ is adjacent to $p_4$, then we have a small DH obstruction.
Otherwise, $xp_1p_2p_3p_4$ is an induced path, and $c$ is adjacent to its end vertices. 
It contradicts the non-existence of a small DH obstruction.

\medskip
We conclude that for every induced path $P=p_1p_2p_3p_4p_5$ with $p_3\in C_2$, 
  there is a bypassing vertex for $P$ and $p_3$.
\end{proof}

Next, we deal with the case when $C_1$ is complete to $C_2$. We prove the case when $B$ is a star bag in Proposition~\ref{prop:complete2}, 
and the case when $B$ is a complete bag in Proposition~\ref{prop:complete1}.

\begin{PROP}
  \label{prop:complete2}
  Suppose $C_1$ is complete to $C_2$, and either 
 \begin{itemize}
 \item $B_2=B$ and $C_2$ consists of the center of $B_2$ or
 \item  $B\neq B_2$, and $B$ is a star bag   whose center is adjacent to $\comp (B, B_2)$.
 \end{itemize}
  Then for every induced path $P=p_1p_2p_3p_4p_5$ with $p_3\in C_1$, 
  there is a bypassing vertex for $P$ and $p_3$.  
  \end{PROP}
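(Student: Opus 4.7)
The plan is to show that for every induced path $P = p_1p_2p_3p_4p_5$ with $p_3 \in C_1$ there exists a bypassing vertex in $S$; Reduction Rule~\ref{rrule:p5middle} would then remove $p_3$, contradicting the fact that $(G,S,k)$ is reduced. Architecturally this is the same outline as Proposition~\ref{prop:anticomplete2}, with Lemma~\ref{lem:completerelationt1t2} taking the place of Lemma~\ref{lem:relationt1t2} because $C_1$ is complete to $C_2$ rather than anti-complete.

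I would begin by observing that $p_2, p_4 \notin C_1$: if $p_2 \in C_1$ then $p_2, p_3$ are twins in $G$ adjacent on the path, and the twin property would force the chord $p_2 p_4 \in E$, breaking induction of $P$; symmetrically for $p_4$. The trivial subcase $p_2, p_4 \in S$ is dispatched by Lemma~\ref{lem:twovertexinS}, so by symmetry I may assume $p_2 \in V(G-S)$. Splitting on whether $p_2 \in C_2$ and whether $p_4 \in S$ or $p_4 \in V(G-S)$ gives a handful of subcases, each of which I would attack with the same template: locate $p_2$ and (if applicable) $p_4$ in the canonical split decomposition $D$, use the hypothesis on $B$ to force them into $C_2 \cup N_G(C_2)$, then pick $s \in N_G(C_2) \cap S$ and look at a shortest path in $G[S]$ from $s$ to the relevant $S$-vertex near $p_4$. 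By Branching Rule~\ref{brule:reducecomponent} that path lies in a single connected component of $G[S]$, and by Lemma~\ref{lem:shortdistance}(2) it has length at most two (else Lemma~\ref{lem:dhobs} produces a small DH obstruction contradicting Branching Rule~\ref{brule:threevertices}); then Lemma~\ref{lem:completerelationt1t2}(1)--(2) forces the terminal or intermediate vertex of that short path to be adjacent to $p_2$ as well, yielding the bypassing vertex. The subcase where both $p_2, p_4 \in V(G-S)$ is finished similarly, using a common neighbor $c_2 \in C_2$ of $p_2, p_3, p_4$ and the same short-path argument in $G[S]$.

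The non-trivial ingredient of the plan is showing that every $V(G-S)$-neighbor of $p_3$ outside $C_1$ lies in $C_2 \cup N_G(C_2)$. I would establish this by using Lemma~\ref{lem:splitadj} to characterize such a neighbor as reachable from any $c_2 \in C_2$ via an alternating path in $D$, together with the restricted shape of $B$: in the first hypothesis ($B = B_2$ with $C_2$ its unmarked center), Observation~\ref{obs:starrestriction} forces all leaves of $B$ to be marked, so every entry-point into $B$ from outside passes through a leaf that is unmarked-adjacent to $C_2$; in the second hypothesis, the center of $B$ is the unique marked vertex pointing into $\comp(B,B_2)$, so any alternating path from an $G-S$ neighbor of $C_1$ to a vertex of $C_2$ through $B$ must traverse that center, which by Lemmas~\ref{lem:twinclassreduction}, \ref{lem:smallbranch} and \ref{lem:simplifynearsattached2} leaves only a few configurations to inspect.

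The main obstacle is therefore this structural verification — specifically ruling out the possibility that a neighbor of $p_3$ sits in a bag of $\btwn(B_1, B) \setminus \{B_1, B\}$ that is non-adjacent to $C_2$. For such a hypothetical bag one argues that it would violate one of the cited structural lemmas or give rise to an induced small DH obstruction (again via Lemma~\ref{lem:dhobs}), contradicting exhaustive application of Branching Rule~\ref{brule:threevertices}. Once the structural claim is in hand, the bypassing-vertex argument in each subcase is a short computation, and the proposition follows.
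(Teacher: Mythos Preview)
Your high-level architecture matches the paper: both aim to exhibit a bypassing vertex for every induced $P_5$ centered at a vertex of $C_1$, and both invoke Lemma~\ref{lem:completerelationt1t2} together with short paths in $G[S]$. However, your central structural claim --- that every $V(G{-}S)$-neighbor of $p_3$ outside $C_1$ lies in $C_2 \cup N_G(C_2)$ --- is false in general, and this is precisely where the paper's argument diverges from your plan.

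Concretely, take the second hypothesis ($B\neq B_2$) and consider a star bag $A \in \btwn(B,B_2)\setminus\{B,B_2\}$ whose center is marked toward $B$; since the center of $B$ is marked toward $\comp(B,B_2)$, this is the center-to-center configuration, which \emph{is} permitted in a canonical split decomposition. An unmarked leaf $w$ of $A$ reaches $C_1$ via the alternating path through the center of $A$ and the center of $B$, but it cannot reach $C_2 \subseteq B_2$; hence $w \in N_G(C_1)\setminus N_G(C_2)$. None of Lemmas~\ref{lem:twinclassreduction}, \ref{lem:smallbranch}, \ref{lem:simplifynearsattached2}, or~\ref{lem:starcenter} exclude such an $A$: Lemma~\ref{lem:starcenter} would need a leaf bag in the direction of $A$'s center, but $B$ need not be a leaf bag. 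Nor does $w$ by itself generate a small DH obstruction. Consequently your plan for the subcase $p_2,p_4\in V(G{-}S)$ --- which relies on $c_2\in C_2$ being a common neighbor of $p_2$ and $p_4$ --- does not go through.

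The paper handles this not by excluding such bags but by a different dichotomy: it splits on whether $(T_1\cap T_2)\cap V(G{-}S)$ is empty (where $T_i=N_G(C_i)\setminus(C_1\cup C_2)$). When it is nonempty, the paper first proves $((T_1\setminus T_2)\cup(T_2\setminus T_1))\cap S = \emptyset$, so any $S$-neighbor of $p_3$ already lies in $T_1\cap T_2$; and for $p_2,p_4\in V(G{-}S)$ it argues directly from the bag structure that $N_{G{-}S}(p_2)\subseteq N_{G{-}S}(p_4)$ (or vice versa), which forces the chord $p_1p_4$ and contradicts inducedness of $P$. When $(T_1\cap T_2)\cap V(G{-}S)=\emptyset$, the paper establishes the much stronger structural fact that $B=B_1$ and $B$ is the parent bag of $B_2$, which collapses the remaining analysis to two short subcases. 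Your proposal does not anticipate either of these key steps.
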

\begin{proof}
For each $i\in \{1,2\}$ and let $c_i\in C_i$ and $T_i=N_G(C_i)\setminus (C_1\cup C_2)$.
We first observe that there is no child bag $B_1'$ pending to $B_1$ except the possible child in $\btwn (B, B_2)$ when $B=B_1$.
Suppose there is such a bag, and let $D'$ be the connected component of $D-V(B_1)$ containing $B_1'$.
As $D'$ has no $S$-attached bags by the choice of $B, B_1, B_2$, by Lemma~\ref{lem:smallbranch}, $B_1$ is a star whose center is adjacent to $D'$. 
Then $B_1$ becomes a $(C_1, C_2)$-separator bag, contradicting the assumption that $C_1$ is complete to $C_2$.
We conclude the claim.

As $B_i$ is the unique $S$-attached bag in $\comp (B, B_i)$ when $B\neq B_i$, 
by Lemma~\ref{lem:smallbranch}, 
every bag in $\btwn (B, B_i)\setminus \{B, B_i\}$ has exactly two neighbor bags.
Since either 
\begin{itemize}
 \item $B_2=B$ and $C_2$ consists of the center of $B_2$ or
\item $B$ is a star bag whose center is adjacent to $\comp (B, B_2)$,
\end{itemize}
every neighbor of a vertex in $C_1$ is contained in $\comp(B, B_1)$ or $\comp (B, B_2)$.

Suppose there is an induced path $P=p_1p_2p_3p_4p_5$ with $p_3\in C_1$. We will show that there is a bypassing vertex for $p_3$.
If $p_2, p_4\in S$, then it follows from Lemma~\ref{lem:twovertexinS}. 
Without loss of generality, we assume $p_2\in V(G-S)$.
We distinguish cases depending on whether $(T_1\cap T_2)\cap V(G-S)=\emptyset$ or not.

\subparagraph{\textbf{Case 1.} $(T_1\cap T_2)\cap V(G-S)\neq \emptyset$ :}  

We choose a vertex $x\in (T_1\cap T_2)\cap V(G-S)$.
Since every neighbor of a vertex in $C_1$ is contained in $\comp(B, B_1)$ or $\comp (B, B_2)$, 
$x$ is contained in $\comp (B, B_1)$ or $\comp (B, B_2)$.
Since $x$ is not contained in $C_1\cup C_2$, $x$ has no neighbors in $S$.

We claim that $(T_1\cap T_2)\cap S\neq \emptyset$.

\begin{claim}
$(T_1\cap T_2)\cap S\neq \emptyset$.
\end{claim}
\begin{clproof}
Suppose $T_1\cap S$ and $T_2\cap S$ are disjoint.
As $C_1$ is complete to $C_2$ and $(G,S, k)$ is reduced under Branching Rule~\ref{brule:reducecomponent}, $T_1\cap S$ and $T_2\cap S$ are contained in the same conncected component of $G[S]$.
Let $P$ be a shortest path from $T_1\cap S$ to $T_2\cap S$ in $G[S]$.
Since $Pc_2x$ is an induced path of length at least $3$ and $c_1$ is adjacent to its end vertices, 
$G[V(P)\cup \{c_1,c_2,x\}]$ contains a DH obstruction, which contradicts the assumption that $G$ is reduced under Branching Rule~\ref{brule:threevertices}.
Therefore, $(T_1\cap T_2)\cap S\neq \emptyset$. 
\end{clproof}

Let $y\in (T_1\cap T_2)\cap S$.
Clearly, $x$ is not adjacent to $y$.
Next, we claim that $((T_1\setminus T_2)\cup (T_2\setminus T_1))\cap S=\emptyset$.

\begin{claim}
$((T_1\setminus T_2)\cup (T_2\setminus T_1))\cap S=\emptyset$.
\end{claim}
\begin{clproof}
Suppose there is a vertex $u$ in $((T_1\setminus T_2)\cup (T_2\setminus T_1))\cap S$. 
Since $xy\notin E(G)$, if $uy\in E(G)$, then $ux\in E(G)$ by (1) of Lemma~\ref{lem:completerelationt1t2}. 
But $x$ has no neighbors in $S$. Thus, we have $uy\notin E(G)$.
Since $\{u,y\}\subseteq T_1\cap S$ or $\{u,y\}\subseteq T_2\cap S$, there is an induced path $upy$ for some $p\in S$.
We assume $\{u,y\}\subseteq T_1\cap S$; the symmetric argument holds when $\{u,y\}\subseteq T_2\cap S$.
If $p\in T_1\cup T_2$, then by (1) of Lemma~\ref{lem:completerelationt1t2}, $u$ or $p$ should be adjacent to $x$, which is a contradiction.
On the other hand, by (2) of Lemma~\ref{lem:completerelationt1t2}, $p$ cannot be in $S\setminus (T_1\cup T_2)$.
We conclude that $((T_1\setminus T_2)\cup (T_2\setminus T_1))\cap S=\emptyset$.
\end{clproof}

Suppose that $p_4\in V(G-S)$. 
We know that $p_2$ and $p_4$ are contained in some bags in $\btwn (B_1, B_2)$.
By symmetry, we assume $\abs{\btwn (B_1, A_1)}\le \abs{\btwn (B_1, A_2)}$, where $A_1$ and $A_2$ are bags containing $p_2$ and $p_4$, respectively.

Since $p_2p_4\notin E(G)$, $A_1$ is not a complete bag, and thus it is a star whose center is adjacent to $\comp (A_1, B_1)$.
In case when $P_1=P_2=B_2$, we may assume that $p_2$ is contained in the non-$S$-attached twin class. 
Then $p_1$ should be adjacent to $p_4$, contradiction.

We may assume that $p_2\in V(G-S)$ and $p_4\in (T_1\cap T_2)\cap S$, because $((T_1\setminus T_2)\cup (T_2\setminus T_1))\cap S=\emptyset$. Since $p_4\in T_1\cap T_2$, we have $p_2\notin C_2$.
We can observe that $p_2$ has no neighbors in $S$ as $p_2$ is contained in some bag in $\btwn (B_1, B_2)\setminus \{B\}$, and it is not contained in $C_1\cup C_2$.

If $p_1$ is adjacent to $c_2$, 
then $c_2$ is adjacent to the end vertices of an induced path $p_1p_2p_3p_4$, implying that $G$ has a small DH obstruction, which is a contradiction.
We may assume that $p_1$ is not adjacent to $c_2$.
One can observe that in this case, $p_1$ is in some bag of $\btwn (B_1, B_2)$, and thus
$p_1$ is adjacent to $p_3$. It is a contradiction.

\subparagraph{\textbf{Case 2.} $(T_1\cap T_2)\cap V(G-S)=\emptyset$ :}  
This implies that there are no complete bags in $\btwn (B_1, B_2)\setminus \{B_1, B_2\}$, and especially, if $B_1$ or $B_2$ is a complete bag, then it has no non-$S$-attached twin class.
We first claim that $B=B_1$, $B\neq B_2$ and $B$ is the parent bag of $B_2$.

\begin{claim}
$B=B_1$, $B\neq B_2$ and $B$ is the parent bag of $B_2$.
\end{claim}
\begin{clproof}
Suppose $B\neq B_1$, and let $B_1'$ be the parent bag of $B_1$.
As $C_1$ is complete to $C_2$, 
$B_1$ is a star whose center is adjacent to $B_1'$ or a complete bag. 
By Lemma~\ref{lem:smallbranch}, there is no child of $B_1$, and thus $B_1$ is a leaf bag.
Also, by Lemma~\ref{lem:smallbranch}, $B_1'$ has exactly two neighbor bags unless $B_1'=B$.

We observe that $B_1'$ should be a star whose center is adjacent to $B_1$.
When $B_1$ is a star, $B_1'$ is a star whose center is adjacent to $B_1$, as there is no complete bag in $\btwn (B_1, B_2)\setminus \{B_1, B_2\}$.
When $B_1$ is a complete bag, if $B_1'$ is a star whose leaf is adjacent to $B_1$, 
then we can apply Reduction Rule~\ref{rrule:leaf} because $B_1$ contains no non-$S$-attached twin class.
Thus $B_1'$ is a star whose center is adjacent to $B_1$.
Due to Lemma~\ref{lem:starcenter}, such a bag $B_1'$ cannot exist. 
Therefore, we have $B=B_1$. 

Since $B_1\neq B_2$ by Lemma~\ref{lem:twosattachedbags}, we have $B\neq B_2$. 
In the same reason, there are no bags in $\btwn (B, B_2)\setminus \{B, B_2\}$. It implies that $B$ is the parent of $B_2$.
\end{clproof}

  \begin{figure}[t]
    \centering
      \includegraphics[scale=0.5]{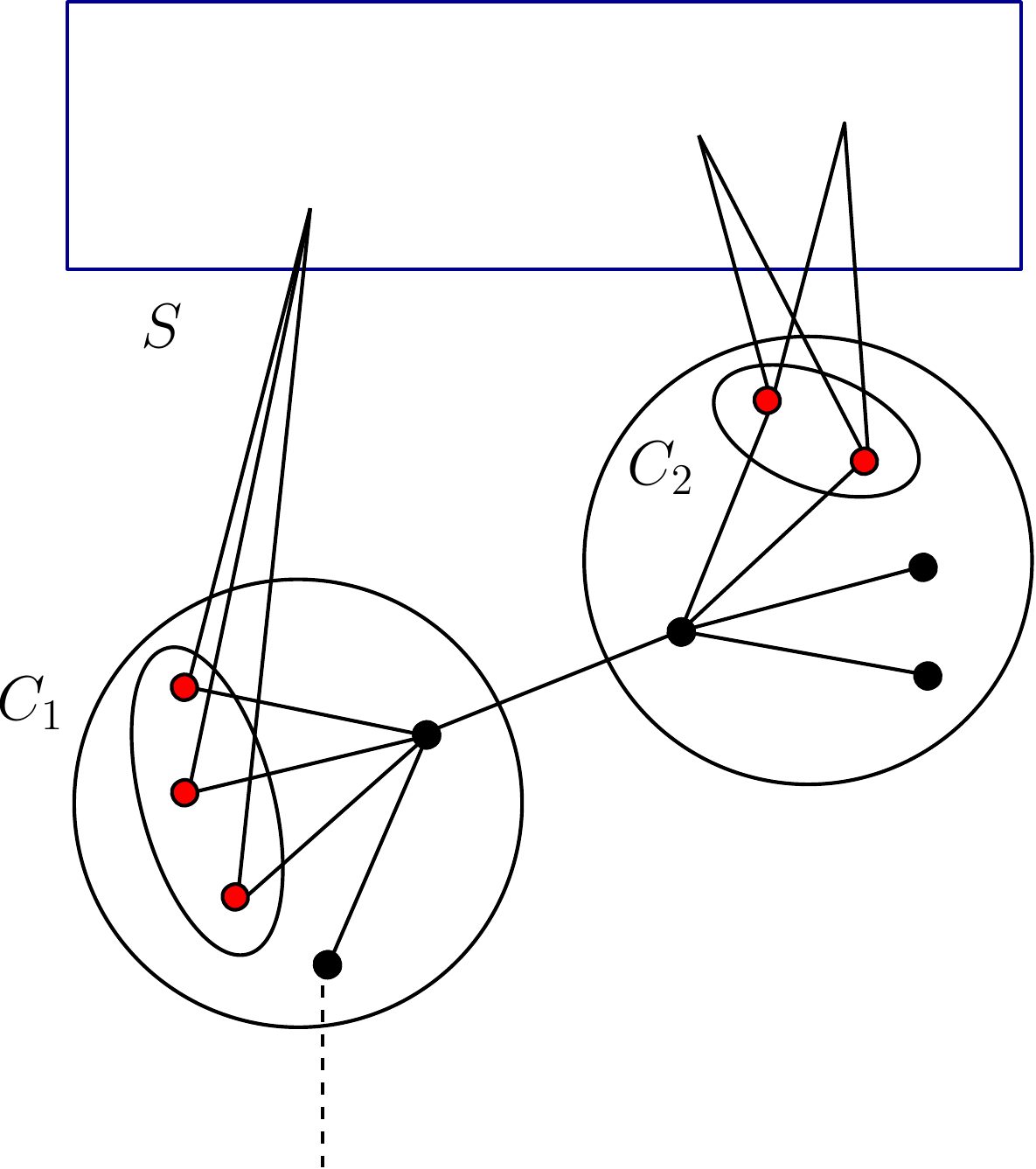}

    \caption{The case when $(T_1\cap T_2)\cap V(G-S)=\emptyset$ in Proposition~\ref{prop:complete2}.} \label{fig:prop422}
      
  \end{figure}

See Figure~\ref{fig:prop422} for an illustration.
Recall that $p_2$ is contained in $G-S$. Thus, $p_2$ should be contained in $C_2$ or
$B_2$ has the non-$S$-attached twin class and $p_2$ is contained in this class.

Suppose $p_2\in C_2$. Then $p_4\in (T_1\setminus T_2)\cap S$. 
If $p_4$ has a neighbor in $T_2$, then we have a bypassing vertex.
So, we may assume that $p_4$ has no neighbors in $T_2\cap S$.
As $C_1$ is complete to $C_2$ and Branching Rule~\ref{brule:reducecomponent} is exhaustively applied,
$p_4$ and $T_2\cap S$ are contained in the same connected component of $G[S]$.
Let $P$ be a shortest path from $p_4$ to $T_2\cap S$ in $G[S]$.
Then $Pp_2$ is an induced path of length at least $3$ and $p_3$ is adjacent to its end vertices, and therefore $G[S\cup \{p_2, p_3\}]$ contains a DH obstruction 
which contradicts the exhaustive application of Branching Rule~\ref{brule:threevertices}. 

Now, suppose $p_2\notin C_2$. It implies that $B_2$ is a star bag having a non-$S$-attached twin class, and $p_2$ is contained in the set.
Then $p_1$ should be a common neighbor of $c_2$ and $p_2$.  
Let $P$ be the shortest path from $p_4$ to $T_2\cap S$ in $G[S]$.
First assume that $p_1$ has a neighbor in $P$. Among neighbors of $p_1$ in $P$, we choose the vertex $q$ such that the distance between $p_4$ and $q$ in $P$ is shortest.
Let $Q$ be the subpath of $P$ from $p_4$ to $q$.
Then $p_2p_1Q$ is an induced path of length at least $3$ since $p_1$ is not adjacent to $p_4$, 
and $p_3$ is adjacent to its end vertices. 
Therefore, $G[S\cup \{p_1, p_2, p_3\}]$ contains a DH obstruction, contradicting our assumption that 
$G$ is reduced under Branching Rule~\ref{brule:threevertices}.
We may assume that $p_1$ has no neighbors in $P$.
In this case, 
$p_2p_1c_2P$ is an induced path of length at least $3$, and $p_3$ is adjacent to its end vertices.
Therefore, $G[S\cup \{p_1, p_2, p_3, c_2\}]$ contains a DH obstruction, contradicting our assumption that 
$G$ is reduced under Branching Rule~\ref{brule:threevertices}.

\medskip
We conclude that for every induced path $P=p_1p_2p_3p_4p_5$ with $p_3\in C_2$, 
  there is a bypassing vertex for $P$ and $p_3$.
\end{proof}

\begin{PROP}
\label{prop:complete1}
  Suppose $C_1$ is complete to $C_2$, $B\neq B_1$, and $B$ is a complete bag. 
  Then $B_1$ contains a non-$S$-attached twin class $C_1'$ and 
  for every induced path $P=p_1p_2p_3p_4p_5$ with $p_3\in C_1'$, 
  there is a bypassing vertex for $P$ and $p_3$. 
   \end{PROP}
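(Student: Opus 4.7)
The plan is to mirror the strategy used in Propositions~\ref{prop:anticomplete2} and \ref{prop:complete2}: first establish that $B_1$ must contain a non-$S$-attached twin class $C_1'$, and then verify that every induced path of length $4$ centered at a vertex of $C_1'$ admits a bypassing vertex, so that Reduction Rule~\ref{rrule:p5middle} would become applicable (which contradicts the reduced state after we derive a structural consequence elsewhere).

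For Part (1), I would argue by contradiction exactly as in the internal claim of Proposition~\ref{prop:anticomplete1}. Suppose that all unmarked vertices of $B_1$ lie in $C_1$. Let $B_1^p$ be the neighbor bag of $B_1$ on the path toward $B$; by Lemma~\ref{lem:simplifynearsattached2} there is no $(B_1,B)$-separator bag. If $B_1^p$ is not a star whose center is adjacent to $B_1$, then (using canonicity, which rules out a marked edge between two complete bags) one of the two cases of Reduction Rule~\ref{rrule:leaf} applies to $B_1$ and $B_1^p$, contradicting the reduced state. Otherwise $B_1^p$ is a star with center adjacent to $B_1$; since $B_1$ is the unique $S$-attached bag in $\comp(B,B_1)$, the bag $B_1^p$ is non-$S$-attached, so it has no $S$-attached twin class, and Lemma~\ref{lem:starcenter} (with the roles of $B_1,B_2$ played by $B_1$ and $B_1^p$) then forbids any non-$S$-attached twin class in $B_1^p$ as well. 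As in the proof of Proposition~\ref{prop:anticomplete1}, this is impossible, yielding the required contradiction.

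For Part (2), I would fix an induced path $P=p_1p_2p_3p_4p_5$ with $p_3\in C_1'$. Because $C_1'$ is non-$S$-attached, both $p_2$ and $p_4$ must lie in $V(G-S)$, and since $C_1'$ is a twin class, neither of them can belong to $C_1'$ itself (two twins of $p_3$ would be adjacent, breaking induced-ness). I would then case-split on the locations of $p_2$ and $p_4$ relative to $C_1$, $C_2$, and the intersection $(N_G(C_1)\cap N_G(C_2))\cap V(G-S)$, relying on the fact that $C_1$ is complete to $C_2$ and that $B$ is a complete bag so that propagation of adjacencies is governed by Lemma~\ref{lem:completerelationt1t2}. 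In each configuration, the goal is to produce a common $S$-neighbor of $p_2$ and $p_4$; when this is not immediate, I would take a shortest path in $G[S]$ between suitable $S$-neighbors of $p_2$ and $p_4$ (whose existence is ensured by Branching Rule~\ref{brule:reducecomponent}) and use Lemma~\ref{lem:dhobs} together with the absence of small DH obstructions (guaranteed by Branching Rule~\ref{brule:threevertices}) to force this path to have length exactly one.

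The main obstacle will lie in Part (2) when $B_1$ is a complete bag. In that situation $C_1$ and $C_1'$ are adjacent within $B_1$, so $p_2$ or $p_4$ may lie in $C_1$ itself, and the interaction between $p_3$'s neighborhood inside $B_1$ and the neighbors reachable through the markers of $B_1$ toward $B$ creates several new subcases not handled by Proposition~\ref{prop:complete2}. The careful bookkeeping will consist of showing that each such configuration either admits a bypassing vertex coming from $(N_G(C_1)\cap N_G(C_2))\cap S$ or else produces (via Lemma~\ref{lem:dhobs}) a small DH obstruction containing $p_3$, contradicting the exhaustive application of Branching Rule~\ref{brule:threevertices}.
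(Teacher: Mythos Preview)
Your Part~(1) is fine and matches the paper's argument essentially verbatim.

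For Part~(2), however, there is a genuine gap in the plan. You propose to ``take a shortest path in $G[S]$ between suitable $S$-neighbors of $p_2$ and $p_4$'', but $p_2$ and $p_4$ need not have any neighbors in $S$ at all: since $p_3\in C_1'$ is non-$S$-attached, both $p_2,p_4$ lie in $V(G-S)$, and they may sit in non-$S$-attached bags (for instance in the component of $D-V(B)$ containing the parent of $B$). Branching Rule~\ref{brule:reducecomponent} only guarantees that existing $S$-neighbors lie in one component; it does not manufacture them. So the path-in-$G[S]$ step, as you describe it, cannot be carried out.

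The paper avoids this by first proving a structural claim that replaces your entire case split: using that $B$ is a \emph{complete} bag and that there is no $(B_1,B_2)$-separator bag, it shows that necessarily $p_2,p_4\in (T_1\cap T_2)\cap V(G-S)$, where $T_i=N_G(C_i)$. In particular the ``main obstacle'' you anticipate, $p_2\in C_1$ when $B_1$ is complete, is ruled out by this claim (if $p_2\in C_1$ then $p_4$ would also have to lie in $C_1$, and then $p_1$ is adjacent to both). Once $p_2,p_4\in T_1\cap T_2$ is known, the bypassing vertex is found not from $S$-neighbors of $p_2,p_4$ but from $T_1\cap S$ and $T_2\cap S$ (which exist because $C_1,C_2$ are $S$-attached): if $(T_1\cap T_2)\cap S\neq\emptyset$, any such vertex works via Lemma~\ref{lem:completerelationt1t2}(1); otherwise a shortest $T_1\cap S$--$T_2\cap S$ path in $G[S]$ has length~$1$, and one of its endpoints is the bypassing vertex (again via Lemma~\ref{lem:completerelationt1t2}(1), or else a small DH obstruction on $\{p_2,c_1,q_1,q_2,c_2\}$ arises). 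You should reorganise Part~(2) around this claim and route through $N_G(C_1)\cap S$ and $N_G(C_2)\cap S$ rather than through hypothetical $S$-neighbors of $p_2,p_4$.
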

  \begin{proof}
For each $i\in \{1,2\}$ and let $c_i\in C_i$ and $T_i=N_G(C_i)$.
Let $B_3$ be the parent bag of $B_1$.
As $C_1$ is complete to $C_2$,
$B_1$ is either a complete bag or a star whose center is adjacent to $B_3$.
We observe that $B_1$ has a non-$S$-attached class, and $(T_2\setminus T_1)\cap V(G-S)\neq \emptyset$.

\begin{claim}
$B_1$ has a non-$S$-attached class.
\end{claim}
\begin{clproof}
Suppose for contradiction that $B_1$ has no non-$S$-attached class, that is, its unmarked vertices form one $S$-attached twin class.
We verify that there is no child bag of $B_1$.
Suppose for contradiction that there is a child $B_1'$ of $B_1$ and let $D'$ be the component $\comp (B_1, B_1')$.
Since $D'$ contains no $S$-attached bag, by Lemma~\ref{lem:smallbranch}, $B_1$ should be a star whose center is adjacent to $B_1'$, which is a contradiction.
Also, every bag in $\btwn (B, B_1)\setminus \{B, B_1\}$  is not a $(B, B_1)$-separator bag, and by Lemma~\ref{lem:smallbranch},  it has exactly two neighbor bags.

Assume $B_1$ is a complete bag. Then its parent $B_3$ is a star 
and thus $B\neq B_3$. 
By Lemma~\ref{lem:starcenter},
 the center of $B_3$ is not adjacent to $B_1$.
Thus, the center of $B_3$ is adjacent to its parent.
As $B_1$ has no non-$S$-attached class, 
we can apply Reduction Rule~\ref{rrule:leaf} to reduce the split decomposition, which is a contradiction.
Assume $B_1$ is a star whose center is adjacent to its parent.
Similarly, by Lemma~\ref{lem:starcenter}, 
$B_3$ is not a star whose center is adjacent to $B_1$.
 Thus, we may assume the parent of $B_1$ is a complete bag, but in this case, we can apply Reduction Rule~\ref{rrule:leaf}.

We conclude that $B_1$ contains a non-$S$-attached twin class.
\end{clproof}

Let $C_1'$ be the non-$S$-attached twin class in $B_1$. As $C_1$ and $C_1'$ have the same neighborhood in $G-S$, 
$C_1'$ is complete to $C_2$.
  
\begin{claim}
$(T_2\setminus T_1)\cap V(G-S)$ contains a vertex that has no neighbors in $S$.
\end{claim}
\begin{clproof}
If $B_1$ is a star bag, then $C_1'\subseteq (T_2\setminus T_1)\cap V(G-S)$ and $y\in C_1'$.
If $B_1$ is a complete bag, then since $B$ is a complete bag, $B_3\neq B$ and the unmarked vertices in $B_3$ are contained in $(T_2\setminus T_1)\cap V(G-S)$.
Let $y$ be an unmarked vertex in $B_3$.
By the choice of $y$, $y$ has no neighbors in $S$.
\end{clproof}

Let 
$y$ be a vertex in $(T_2\setminus T_1)\cap V(G-S)$ having no neighbors in $S$.

Suppose there is an induced path $P=p_1p_2p_3p_4p_5$ with $p_3\in C_1'$. 
We will prove that there is a bypassing vertex for $P$ and $p_3$.
Let $D'$ be the connected component of $D-V(B)$ containing the parent of $B$.

We claim that $p_2$ and $p_4$ are contained in $(T_1\cap T_2)\cap V(G-S)$.
\begin{claim}
$p_2$ and $p_4$ are contained in $(T_1\cap T_2)\cap V(G-S)$.
\end{claim}
\begin{clproof}
Note that $p_2$ or $p_4$ is contained in either $D'$ or $\btwn (B_1, B_2)$.
If both $p_2$ and $p_4$ are contained in $D'$, then this is clear.
If both $p_2$ and $p_4$ are contained in $\btwn (B_1, B_2)$, then without loss of generality, 
we assume that $\abs{\btwn (B_1, A_1)}\le \abs{\btwn (B_1, A_2)}$ where $A_1$ and $A_2$ are bags containing $p_2$ and $p_4$, respectively.
Since there is no $(B_1, B_2)$-separator bag, $p_1$ should be adjacent to $p_4$, a contradiction.
Lastly, we assume that one of $p_2$ and $p_4$ is in $D'$, but the other is in $\btwn(B_1, B_2)$.
By symmetry we assume $p_2\in V(D')$. 
If $p_4\in \btwn (B_1, B)$, then $p_4$ is contained in a complete bag, and thus $p_2$ is adjacent to $p_4$.
If $p_4\in \btwn (B, B_2)$, then $p_4$ is clearly adjacent to $p_2$, as $B$ is a complete bag.
Both cases are not possible.

We conclude that $p_2$ and $p_4$ are contained in $(T_1\cap T_2)\cap V(G-S)$.
\end{clproof}

Suppose $(T_1\cap T_2)\cap S\neq \emptyset$. Let $x\in (T_1\cap T_2)\cap S$.
Since we know that $p_3$ has no neighbors in $S$, we have $p_3x\notin E(G)$, 
and by (1) of Lemma~\ref{lem:completerelationt1t2},
$x$ should be adjacent to both $p_2$ and $p_4$. Thus, $x$ is a bypassing vertex, as required.
We may assume that $(T_1\cap T_2)\cap S=\emptyset$.

Since $C_1$ is complete to $C_2$, by Branching Rule~\ref{brule:reducecomponent}, 
we know that $T_1\cap S$ and $T_2\cap S$ are contained in the same connected component of $G[S]$.
Let $P$ be a shortest path from $T_1\cap S$ to $T_2\cap S$.
If $P$ has length at least $2$, then $G[V(P)\cup \{c_1, c_2\}]$ is an induced cycle of length at least $5$, contradicting our assumption that $G$ is reduced under Branching Rule~\ref{brule:threevertices}.
Thus, $P$ has length $1$. Let $q_1q_2$ be the path where $q_i$ is a neighbor of $c_i$ for each $i\in \{1,2\}$.
 
 Note that $q_1\in T_1\setminus T_2$, $q_2\in T_2\setminus T_1$ and $p_2, p_4\in T_1\cap T_2$.
 If $q_1$ or $q_2$ is adjacent to one of $p_2$ and $p_4$, the by (1) of Lemma~\ref{lem:completerelationt1t2}, 
 it is adjacent to both $p_2$ and $p_4$.
 This means that it becomes a bypassing vertex, as required. 
 Therefore, we may assume that for each $i\in \{1,2\}$, $q_i$ is adjacent to neither $p_2$ nor $p_4$.
 So, $p_2c_1q_1q_2$ is an induced path of length $3$, and $c_2$ is adjacent to its end vertices. 
 It implies that $G$ has a small DH obstruction, which is a contradiction.

\medskip
We conclude that for every induced path $P=p_1p_2p_3p_4p_5$ with $p_3\in C_2$, 
  there is a bypassing vertex for $P$ and $p_3$.
\end{proof}

\begin{PROP}\label{prop:finish}
If $D$ is a reduced canonical split decomposition of a connected component of $G-S$,
then $D$ is empty.
\end{PROP}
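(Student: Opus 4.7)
The plan is to argue by contradiction: suppose $D$ is a reduced canonical split decomposition of a connected component of $G-S$ and $D$ is non-empty, and show that some reduction rule still applies. First, by Lemma~\ref{lem:onesattached} (safeness of Reduction Rule~\ref{rrule:dhcomponent}), a reduced non-empty $D$ must contain at least two distinct $S$-attached twin classes; otherwise Reduction Rule~\ref{rrule:dhcomponent} would apply and remove all unmarked vertices of $D$.

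Next, I would fix an arbitrary bag of $D$ as the root and choose, as in Section~\ref{sec:twinclass}, a bag $B$ that is farthest from the root and that has two descendant bags $B_1, B_2$ containing distinct $S$-attached twin classes $C_1, C_2$. By Lemma~\ref{lem:twosattachedbags}, $B_1 \neq B_2$ and exactly one of the following three situations occurs: either (i) the distance between $C_1$ and $C_2$ in $G-S$ is $2$ and the unique $(C_1,C_2)$-separator bag lies in $\comp(B, B_i)$ for some $i \in \{1,2\}$, or (ii) $C_1$ is complete to $C_2$ and $B$ is a star bag of the restricted type described there, or (iii) $C_1$ is complete to $C_2$ and $B$ is a complete bag.

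In each of these three cases, one of Propositions~\ref{prop:anticomplete2}, \ref{prop:complete2}, \ref{prop:complete1} exhibits an unmarked vertex $v$ (lying in $C_i$ in cases~(i) and~(ii), and in the non-$S$-attached twin class $C_1'$ of $B_1$ in case~(iii)) with the property that every induced path $P = p_1p_2p_3p_4p_5$ in $G$ with $p_3 = v$ admits a bypassing vertex for $P$ and $v$. But then Reduction Rule~\ref{rrule:p5middle} is applicable to $v$, contradicting the hypothesis that $D$ is reduced. Hence no such $B, B_1, B_2, C_1, C_2$ can exist, and therefore $D$ cannot contain two distinct $S$-attached twin classes, which by the first paragraph forces $D$ to be empty.

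The main (and essentially only) obstacle is not in this final synthesis but in ensuring that the three cases of Lemma~\ref{lem:twosattachedbags} really are exhaustive and that the hypotheses of Propositions~\ref{prop:anticomplete2}--\ref{prop:complete1} are met by the chosen $B, B_1, B_2$; this has already been handled in the preceding development, in particular in Lemmas~\ref{lem:distancec1c2}, \ref{lem:twinclassreduction}, \ref{lem:smallbranch}, \ref{lem:simplifynearsattached2} and Proposition~\ref{prop:anticomplete1}, so the proof of Proposition~\ref{prop:finish} itself is essentially a short case-by-case invocation of these earlier statements.
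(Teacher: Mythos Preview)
Your proposal is correct and follows essentially the same approach as the paper: argue by contradiction that a non-empty reduced $D$ has at least two $S$-attached twin classes (else Reduction Rule~\ref{rrule:dhcomponent} applies), pick the farthest bag $B$ with two descendant $S$-attached classes, invoke Lemma~\ref{lem:twosattachedbags} for the case split, and then apply Propositions~\ref{prop:anticomplete2}, \ref{prop:complete2}, \ref{prop:complete1} to find a vertex to which Reduction Rule~\ref{rrule:p5middle} applies. One small imprecision: in case~(ii) the vertex with the bypassing property is in $C_{3-i}$ (not $C_i$), since Proposition~\ref{prop:complete2} is stated with the star condition on the $B_2$ side and yields the conclusion for $C_1$; this does not affect the argument.
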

\begin{proof}
Suppose a reduced canonical split decomposition $D$ of a connected component of $G-S$ contains a vertex.
If it contains at most one $S$-attached twin class, then Reduction Rule~\ref{rrule:dhcomponent} can be applied.
We may assume that $D$ contains at least two $S$-attached twin classes.

We choose a root bag, and let $B$ be a farthest bag from the root bag 
 such that there are two descendant bags $B_1$ and $B_2$ of $B$ having distinct $S$-attached twin classes $C_1$ and $C_2$, respectively.
By Lemma~\ref{lem:twosattachedbags},
$B_1\neq B_2$, and one of the following happens:
\begin{enumerate}[(1)]
\item The distance from $C_1$ to $C_2$ in $G-S$ is $2$ and the unique $(C_1, C_2)$-separator bag is contained in $\comp (B, B_i)$ for some $i\in \{1,2\}$.
\item  $C_1$ is complete to $C_2$ and either
\begin{itemize}
\item $B$ is a star bag and $C_i$ is the set consisting of the center of $B$ for some $i\in \{1,2\}$, or
\item $B$ is a star bag whose center is adjacent to $\comp (B, B_i)$ for some $i\in \{1,2\}$.
\end{itemize}
\item  $C_1$ is complete to $C_2$ and $B$ is a complete bag.
\end{enumerate}
If (1) happens, then by Proposition~\ref{prop:anticomplete2}, Reduction Rule~\ref{rrule:p5middle} is applied to remove $C_i$.
If (2) happens, then by Proposition~\ref{prop:complete2}, Reduction Rule~\ref{rrule:p5middle} is applied to remove $C_{3-i}$.
If (3) happens, then by Proposition~\ref{prop:complete1},  Reduction Rule~\ref{rrule:p5middle} is applied to remove the non-$S$-attached twin class in one of $B_1$ and $B_2$.
But this contradicts the assumption that $D$ is reduced.
\end{proof}

\section{The Algorithm, Lower Bounds and Applications}
\label{sec:completing}

Our goal in this section is to give a proof of our main result, Theorem~\ref{thm:main}, and obtain corresponding lower bounds. 

\subsection{The Algorithm}
Below, we use the presented reduction and branching rules to give an algorithm for \disjointDHVD. This is then followed by a proof of our main algorithmic result.

\begin{theorem}
\label{thm:main2}
\disjointDHVD\ can be solved in time $\bigoh(6^{k+\cc(G[S])}\cdot |V(G)|^{6}(|V(G)|+E(G)))$.
\end{theorem}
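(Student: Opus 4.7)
The plan is to assemble the branching and reduction rules developed in Sections~\ref{sec:general} and \ref{sec:rules} into a recursive algorithm and bound its running time using the parameter $\mu := k + \cc(G[S])$. On input $(G,S,k)$ the algorithm first rejects if $k<0$ or if $G[S]$ is not distance-hereditary; otherwise it computes the canonical split decompositions of the components of $G-S$ via Theorem~\ref{thm:dahlhaus}. It then tries Branching Rule~\ref{brule:threevertices} and Branching Rule~\ref{brule:reducecomponent}: if either applies, it branches into the resulting at most six subinstances and recurses. If no branching rule applies, it invokes Proposition~\ref{prop:applicationRR} to either fire one of Reduction Rules~\ref{rrule:dhcomponent}--\ref{rrule:bypassing2} (updating the split decomposition locally in the style of Gioan and Paul~\cite{GP2012}) and restart, or certify that the instance is reduced; in the latter case Proposition~\ref{prop:finish} forces $V(G-S)=\emptyset$, so we return YES.

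Correctness follows from the safeness of each individual rule, provided we maintain the invariant that $G[S]$ is distance-hereditary. This holds at the start by hypothesis, and whenever a vertex of $V(G-S)$ is moved into $S$ (via Branching Rule~\ref{brule:reducecomponent} or Reduction Rule~\ref{rrule:leaftoS}) the corresponding set $X$ of size at most five already witnesses an application of Branching Rule~\ref{brule:threevertices} whenever the resulting graph on $S$ is not distance-hereditary, so such a branch is never pursued. Proposition~\ref{prop:finish} then guarantees that the algorithm terminates at a trivial YES-instance whenever no rule fires.

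For the running time I would argue that $\mu$ is monotone non-increasing under reduction rules: Reduction Rule~\ref{rrule:leaftoS} merges the absorbed vertex into an existing component of $G[S]$ (since it only fires on vertices having a neighbor in $S$), and no other reduction rule touches $S$ or $k$. Every branching step strictly decreases $\mu$, since Branching Rule~\ref{brule:threevertices} always decreases $k$ by one, and Branching Rule~\ref{brule:reducecomponent} decreases either $k$ by one or $\cc(G[S])$ by at least one. Each branching step produces at most six children (five vertex-deletion choices plus, for Branching Rule~\ref{brule:reducecomponent}, one absorption choice), so the branching tree has at most $6^{\mu}$ leaves.

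Between consecutive branchings only reduction rules run, and each application either removes a vertex of $V(G-S)$ or strictly decreases the total number of bags of the canonical split decompositions, which is linear in $|V(G)|$; hence $\mathcal{O}(|V(G)|)$ reduction steps occur between branchings, each costing $\mathcal{O}(|V(G)|^6)$ by Proposition~\ref{prop:applicationRR}, plus an $\mathcal{O}(|V(G)|+|E(G)|)$ overhead for (re)computing the split decomposition. This yields $\mathcal{O}(|V(G)|^{6}(|V(G)|+|E(G)|))$ work per branching-tree node and therefore $\mathcal{O}(6^{k+\cc(G[S])}\cdot|V(G)|^{6}(|V(G)|+|E(G)|))$ in total. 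The structural heavy lifting—the safeness of each rule and Proposition~\ref{prop:finish} forcing a reduced decomposition to be empty—has already been done; the only mildly delicate point is the bookkeeping for how each rule affects $\mu$ and the amortized cost of maintaining the canonical split decompositions, which I would handle by explicitly tabulating the effect of each rule on $(k,\cc(G[S]))$ and reusing the incremental update primitives of~\cite{GP2012}.
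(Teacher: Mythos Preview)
Your proposal is correct and follows essentially the same approach as the paper: assemble the branching and reduction rules into a search tree bounded by $6^{k+\cc(G[S])}$ leaves, use Proposition~\ref{prop:finish} to certify that a reduced instance has $V(G-S)=\emptyset$, and bound the per-node work by $\mathcal{O}(|V(G)|^{6}(|V(G)|+|E(G)|))$ via Proposition~\ref{prop:applicationRR} together with the $\mathcal{O}(|V(G)|^{5}(|V(G)|+|E(G)|))$ cost of testing the branching rules. Your explicit treatment of the invariant that $G[S]$ stays distance-hereditary (arguing that Branching Rule~\ref{brule:threevertices} would pre-empt any absorption that breaks it) is a detail the paper leaves implicit, and your accounting groups the work per branching step rather than per tree node, but the argument and the resulting bound coincide with the paper's.
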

\begin{proof}
Let $(G, S, k)$ be an instance of \disjointDHVD. 
We exhaustively apply Branching Rules~\ref{brule:threevertices}--\ref{brule:reducecomponent} and Reduction Rules~\ref{rrule:dhcomponent}--\ref{rrule:bypassing2}.
We prove that one of rules can be applied until $G-S$ is empty or $k$ becomes $0$.
In both cases, we can test whether the resulting instance is distance-hereditary or not in polynomial time, and output an answer.
Suppose $k$ does not reach $0$. Then by Proposition~\ref{prop:finish}, $G-S$ contains no vertices.
Therefore, the resulting instance satisfies that $G-S$ is empty, as mentioned.

We argue that the runtime bounds hold. For convenience, we will denote $|V(G)|$ by $n$ and $|E(G)|$ by $m$.
First notice that each branching rule reduces either $k$ or the number of connected components in $G[S]$ and branches into at most $6$ subinstances. Moreover, none of the reduction rules change $k$ or the number of components in $S$. Hence a branching rule is applied at most $k+\cc(G[S])$ times. Similarly, every reduction rule reduces either the number of vertices in $G-S$ or the number of bags in canonical split decomposition of $G-S$. Therefore, it is not hard to observe that the branching tree of the algorithm will have at most $6^{k+\cc(G[S])}$ leaves and each leaf will be in depth at most $\bigoh(n)$ and hence the branching tree will have at most $\bigoh(6^{k+\cc(G[S])}\cdot n)$ nodes. In the following we will discuss that the runtime in every node will not exceed $\bigoh(n^{5}(n+m))$. In each node, we go through the branching and reduction rules, in the order they are introduced in the paper, and apply the first rule that can be applied. 
Let us start with detecting and applying  Branching Rule~\ref{brule:threevertices}.  
Our algorithm is going through all sets $X\subseteq G-S$ of size at most $5$ and checking, whether $G[S\cup X]$ is distance-hereditary. It follows from Theorems~\ref{thm:dahlhaus} and \ref{thm:bouchet} that we check whether a graph is distance-hereditary in time $\bigoh(n+m)$. If the graph is not distance-hereditary, application of the rule can be done in constant time. Hence, the Branching Rule~\ref{brule:threevertices} can be verified in time  $\bigoh(n^{5}(n+m))$. Similarly, for Branching Rule~\ref{brule:reducecomponent} for every set $X\subseteq V(G-S)$ of size at most $5$, we can in time $\bigoh(n+m)$, e.g. using breadth-first search, verify that the neighborhood of $X$ is in the same connected component and the same running bound follows.  
After verifying that the graph is actually reduced under Branching Rules~\ref{brule:threevertices} and \ref{brule:reducecomponent} it follows from Proposition~\ref{prop:applicationRR} that we can in time $\bigoh(n^{6})$ either apply one of Reduction Rules~\ref{rrule:dhcomponent}--\ref{rrule:bypassing2} or correctly deduce that the graph is reduced also under Reduction Rules~\ref{rrule:dhcomponent}--\ref{rrule:bypassing2}. Hence, the whole algorithm for \disjointDHVD\ can be implemented in time $\bigoh(6^{k+\cc(G[S])}\cdot n^{6}(n+m))$.
\end{proof}

\begin{theorem}
\label{thm:main}
\DHVD\ can be solved in time $\bigoh(37^k\cdot |V(G)|^7(|V(G)|+|E(G)|))$.
\end{theorem}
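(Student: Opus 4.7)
The plan is to reduce \DHVD\ to \disjointDHVD\ via iterative compression and then invoke Theorem~\ref{thm:main2}. I would fix an arbitrary ordering $v_1, \ldots, v_n$ of $V(G)$, let $G_i = G[\{v_1,\ldots,v_i\}]$, and maintain the invariant that after processing $G_i$ we have a set $W_i \subseteq V(G_i)$ with $|W_i|\le k$ such that $G_i - W_i$ is distance-hereditary, or else we correctly output \NO. The base case $W_0 = \emptyset$ is trivial. To advance from $W_{i-1}$ to $W_i$, set $W := W_{i-1}\cup\{v_i\}$; then $|W|\le k+1$ and $G_i - W$ is distance-hereditary, so it suffices to ``compress'' $W$ to a set of size at most $k$, or conclude no such set exists in $G_i$.

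For the compression step, I would enumerate all $2^{k+1}$ subsets $W'\subseteq W$, treating $W'$ as the guess for $Q\cap W$ where $Q$ is the sought $\le k$-sized solution in $G_i$. Under this guess, the remaining part $Q\setminus W'$ must lie in $V(G_i)\setminus W$, have size at most $k-|W'|$, and together with $S := W\setminus W'$ leave a distance-hereditary graph; note $(G_i-W')-S = G_i - W$ is already distance-hereditary by the invariant. Thus $(G_i - W', S, k-|W'|)$ is a valid instance of \disjointDHVD\ which I would solve by Theorem~\ref{thm:main2}. If some $W'$ yields a \YES-answer with deletion set $Q'$, then $W_i := W'\cup Q'$; otherwise output \NO. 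Correctness follows because the true solution $Q$ (if any) satisfies $Q\cap W = W'$ for exactly one choice of $W'$.

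For the runtime, one compression step calls Theorem~\ref{thm:main2} on each subset $W'$ of size $j$. Since removing vertices of $W'$ does not affect edges among $W\setminus W'$, we have $\cc((G_i-W')[W\setminus W']) = \cc(G_i[W\setminus W']) \le |W\setminus W'| = k+1-j$, so each call costs $\bigoh(6^{(k-j)+(k+1-j)}\cdot n^6(n+m))$. Summing by the binomial theorem,
\begin{align*}
\sum_{j=0}^{k+1}\binom{k+1}{j} 6^{2k+1-2j}
= 6^{2k+1}\Bigl(1+\tfrac{1}{36}\Bigr)^{k+1}
= \frac{37^{k+1}}{6} = \bigoh(37^k),
\end{align*}
so one compression step runs in $\bigoh(37^k\cdot n^6(n+m))$. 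With at most $n$ compression steps, the overall runtime is $\bigoh(37^k\cdot n^7(n+m))$, matching the claim.

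The main conceptual point to watch is the arithmetic trade-off at the heart of the summation: the $2^{k+1}$ factor from enumerating subsets must be absorbed against the $6^{\cc(G[S])}$ factor from Theorem~\ref{thm:main2}, and it is exactly the identity $6^2+1=37$ that makes the bound come out to $37^k$ rather than the naive $2\cdot 6^{2k}$. Everything else is standard bookkeeping of iterative compression, and no new structural work on the graph class is required.
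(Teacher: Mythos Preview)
Your proposal is correct and follows essentially the same route as the paper: iterative compression over the prefixes $G_i$, enumeration of the intersection $W'$ of the current modulator with the sought solution, and invocation of Theorem~\ref{thm:main2} on the resulting \disjointDHVD\ instance, using the bound $\cc(G[S])\le |S|\le k+1-j$ to obtain the $37^k$ via the binomial theorem. The arithmetic and bookkeeping match the paper's proof up to notation.
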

\begin{proof}
We apply the standard iterative compressing technique. The algorithm involves a two-step reduction of \DHVD: we first reduce \DHVD\ to the \textsc{Compression} problem, which reduces to
\disjointDHVD. 

For convenience, we denote for this proof $|V(G)|=n$ and $|E(G)|=m$.
Fix an arbitrary labeling $v_1,\dots, v_n$ of $V(G)$ and let $G_i$ be a the graph $G[\{v_1,\dots, v_i\}]$ for $1\le i\le n$. From $i=1$ up to $n$, we consider the following the \textsc{Compression} problem for \DHVD: given a graph $G_i$ and $S_i\subseteq V(G_i)$ such that $G_i-S_i$ is distance-hereditary and $|S_i|\le k+1$, we aim to find a set $S_i'\subseteq V(G_i)$ such that $G_i-S'_i$ is distance-hereditary and $|S'_i|\le k$, if one exists, and output \textsc{No} otherwise. Since distance-hereditary graphs are closed under taking induced subgraphs, $(G,k)$ is \textsc{Yes}-instance of \DHVD\ if and only if $(G_i,S_i)$ is a \textsc{Yes}-instance for 
\textsc{Compression} for all $i$, where $(G_i,S_i)$ is a legitimate instance. Hence we correctly output that $(G,k)$ is \textsc{No}-instance of \DHVD\ if $(G_i,S_i)$ is a  \textsc{No}-instance for some $i$. Moreover, if $S_i'$ is a solution to the $i$-th instance of  \textsc{Compression}, then $(G_{i+1}, S_i'\cup\{v_{i+1}\})$ is a legitimate instance for $(i+1)$-th
instance of  \textsc{Compression}. 

Given an instance $(G, S)$ of \textsc{Compression}, we enumerate all possible intersections $I$ of $S$ and a desired solution to $(G,S)$. For each guessed set $I$, we solve the instance $(G-I, S\setminus I, k-|I|)$ of \disjointDHVD\ using Theorem~\ref{thm:main2}. Note that  $(G, S)$ is \textsc{Yes}-instance if and only if  $(G-I, S\setminus I, k-|I|)$ is \textsc{Yes}-instance for some $I\subseteq S$. If $S'$ is a solution to  $(G-I, S\setminus I, k-|I|)$, then clearly $S'\cup I$ is a solution to the instance $(G,S)$ of \textsc{Compression}. Conversely, if $S'$ is a solution to  the instance $(G,S)$ of \textsc{Compression} then for the set $I=S\cap S'$ the instance $(G-I, S\setminus I, k-|I|)$ is  \textsc{Yes}-instance for \disjointDHVD. Therefore, using the algorithm from Theorem~\ref{thm:main2} we can correctly solve \DHVD.

It remains to prove the complexity of the algorithm. Given an instance $(G,S)$ we guess at most ${{k+1}\choose {i}}$ sets $I\subseteq S$ of size $i$ for each $1\le i\le k$. Note that $S\setminus I$ has size at most $k+1-i$, and in particular $G[S]$ has at most $k+1-i$ connected components. Therefore, we can solve the resulting instance $(G-I, S\setminus I, k-i)$ of \disjointDHVD\ in time $\bigoh(6^{2k-2i+1}\cdot n^{6}(n+m))=\bigoh(36^{k-i}\cdot n^{6}(n+m))$. Summing up, \DHVD\ can be solved by running an algorithm for \textsc{Compression} at most $n$ times, which yields the claimed running time $$n\cdot \sum_{i=0}^k{{k+1}\choose {i}} \cdot \bigoh(36^{k-i}\cdot n^{6}(n+m)) = \bigoh(37^{k}\cdot n^{7}(n+m)).$$

Note that the equality follows from the use of the binomial theorem, which states that $\sum_{i=0}^{n}\binom{n}{i}a^ib^{n-i}=(a+b)^n$ (see, e.g., Chapter 10 in Cygan~et~al.~\cite{CyganFKLMPPS15}).
\end{proof}

\subsection{Lower Bounds}
Here we will present our lower bound result, based on the well-established exponential time hypothesis~\cite{ImpagliazzoRF2001}.

\begin{definition}[Exponential Time Hypothesis (ETH)]
 There exists a constant $s>0$ such that \textsc{3-CNF-SAT} with $n$ variables and $m$ clauses cannot be solved in time $2^{sn}(n+m)^{\mathcal{O}(1)}$.
\end{definition}

Our result uses the fact that the classical \textsc{Vertex Cover} problem cannot be solved in subexponential time under ETH.
  
\begin{theorem}[Cai and Juedes \cite{CaiJuedes03}]\label{thm:ethVC}
There is no $2^{o(k)}\cdot |V(G)|^{\mathcal{O}(1)}$ algorithm for \textsc{Vertex Cover}, unless ETH fails.
\end{theorem}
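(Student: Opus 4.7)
The plan is to reduce from \textsc{3-CNF-SAT} to \textsc{Vertex Cover} in a way that preserves the parameter up to a linear factor, and then invoke the hypothesis directly. First, I would apply the Sparsification Lemma of Impagliazzo, Paturi, and Zane, which strengthens ETH to say that \textsc{3-CNF-SAT} cannot be solved in $2^{o(n+m)}(n+m)^{\mathcal{O}(1)}$ time even when one may assume $m = \mathcal{O}(n)$; this is standard and is typically cited in tandem with ETH whenever a linear-parameter lower bound is desired.

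Second, I would apply the textbook polynomial-time reduction from \textsc{3-CNF-SAT} to \textsc{Vertex Cover}. Given a formula $\phi$ with $n$ variables and $m$ clauses, one builds a graph $G$ that contains a variable gadget (an edge between two vertices labeled $x_i$ and $\bar{x}_i$) for each variable, a triangle clause gadget (one vertex per literal occurrence) for each clause, and a matching edge between each literal occurrence in a clause gadget and the corresponding literal vertex in its variable gadget. A direct argument shows that $\phi$ is satisfiable if and only if $G$ admits a vertex cover of size exactly $k := n + 2m$; moreover $|V(G)| = 2n + 3m$, which is $\mathcal{O}(n+m)$.

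Third, I would chain the two reductions to derive the contradiction. Suppose for contradiction that \textsc{Vertex Cover} could be solved in time $2^{o(k)} \cdot |V(G)|^{\mathcal{O}(1)}$. Given a sparsified 3-CNF formula with $m = \mathcal{O}(n)$, one constructs $G$ in polynomial time and runs the hypothetical algorithm, obtaining a total running time of $2^{o(n+m)} \cdot (n+m)^{\mathcal{O}(1)} = 2^{o(n)} \cdot n^{\mathcal{O}(1)}$ for \textsc{3-CNF-SAT}, contradicting ETH.

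The main obstacle here is really the Sparsification Lemma: without it, the reduction only gives $k = n + 2m$, which need not be $\mathcal{O}(n)$, so one would only rule out weaker lower bounds such as $2^{o(\sqrt{k})}$ instead of $2^{o(k)}$. Once sparsification is invoked, the remainder is a routine linear-parameter polynomial-time many-one reduction, so the argument reduces to carefully quoting the right two black-box results and composing them.
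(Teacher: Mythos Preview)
Your proof sketch is correct and is the standard textbook argument (Sparsification Lemma followed by the classical linear-size reduction from \textsc{3-SAT} to \textsc{Vertex Cover}). Note, however, that the paper does not prove this statement at all: it is simply quoted as a known result of Cai and Juedes~\cite{CaiJuedes03} and used as a black box in the subsequent lower bound for \textsc{Distance-Hereditary Vertex Deletion}. So there is no ``paper's own proof'' to compare against; your write-up supplies exactly the argument one would expect to find behind the citation.
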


\begin{theorem}
\label{thm:main3}
There is no $2^{o(k)}\cdot |V(G)|^{\mathcal{O}(1)}$ algorithm for \DHVD\, unless ETH fails.
\end{theorem}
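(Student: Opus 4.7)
The plan is to establish the claim via a polynomial-time reduction from \textsc{Vertex Cover} that preserves the parameter linearly, which combined with Theorem~\ref{thm:ethVC} yields the lower bound. Given an instance $(H, k)$ of \textsc{Vertex Cover} (without loss of generality with $k < |V(H)|$, otherwise the problem is trivial), I would construct an instance $(G, k)$ of \DHVD\ with $|V(G)| = \mathcal{O}(|V(H)|^3)$ such that $H$ has a vertex cover of size at most $k$ if and only if $G$ has a deletion set to distance-hereditary graphs of size at most $k$.

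The construction I would use is a standard gadget blow-up. Start from $H$, and for each edge $e = uv \in E(H)$ add $k+1$ pairwise internally vertex-disjoint paths of the form $u\text{-}a^i_e\text{-}b^i_e\text{-}c^i_e\text{-}v$ on fresh internal vertices $a^i_e, b^i_e, c^i_e$ for $i \in \{1,\dots,k+1\}$. Because $uv \in E(H) \subseteq E(G)$, each such path closes, together with $u$ and $v$, into an induced $C_5$ in $G$, which is a DH obstruction. The total number of vertices is at most $|V(H)| + 3(k+1)|E(H)| = \mathcal{O}(|V(H)|^3)$, and the construction is clearly polynomial-time.

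For the forward direction I would argue that if $C$ is a vertex cover of $H$ with $|C| \le k$, then $V(H) \setminus C$ is independent in $H$, and for each edge $e$ of $H$ at least one endpoint lies in $C$, so every gadget collapses into a pendant path of length at most three dangling off at most one surviving endpoint. Hence every connected component of $G - C$ is a tree and therefore distance-hereditary. For the reverse direction, if $Q \subseteq V(G)$ with $|Q| \le k$ makes $G - Q$ distance-hereditary, then a pigeonhole argument using $|Q| \le k$ shows that for every edge $e \in E(H)$ at least one of the $k+1$ gadgets for $e$ has all three of its internal vertices outside $Q$; the corresponding induced $C_5$ must then be hit by $Q$, which forces $Q \cap \{u,v\} \neq \emptyset$, so $Q \cap V(H)$ is a vertex cover of $H$ of size at most $k$. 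Combining this equivalence with Theorem~\ref{thm:ethVC} gives the claimed lower bound, since a $2^{o(k)}|V(G)|^{\mathcal{O}(1)}$ algorithm for \DHVD\ would yield a $2^{o(k)}|V(H)|^{\mathcal{O}(1)}$ algorithm for \textsc{Vertex Cover}. The only point that requires any care is the forward direction, i.e., confirming that the construction introduces no unintended DH obstructions in $G - C$; this is immediate from the vertex-disjointness of the gadgets, since each gadget contributes only a pendant path attached at a single vertex of $V(H) \setminus C$ and no gadget vertex sees anything outside its own gadget except the two endpoints of its edge.
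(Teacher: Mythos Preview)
Your proof is correct. Both your argument and the paper's reduce from \textsc{Vertex Cover} by attaching, to each edge $uv$, gadgets that create an induced short cycle through $u$ and $v$; the reductions differ only in the gadget and in how the backward direction is argued. The paper \emph{replaces} each edge $uv$ by two internally vertex-disjoint $u$--$v$ paths of length~$3$, producing a single induced $C_6$ per edge and only $4|E(H)|$ new vertices; its backward direction then shows that any internal gadget vertex in a deletion set can be shifted to one of $u,v$ without destroying feasibility, yielding a vertex cover directly. You instead \emph{keep} the edge $uv$ and add $k+1$ length-$4$ paths, creating $k+1$ induced $C_5$'s per edge, and use pigeonhole to guarantee one untouched gadget per edge. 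Your construction is larger (cubic rather than quadratic in $|V(H)|$), which is irrelevant for the ETH bound; in exchange, your backward direction is arguably more elementary, since it avoids the replacement argument (which in the paper relies on the observation that no DH obstruction has a pendant vertex). Both forward directions are immediate once one notes that $G-C$ is a forest of spiders.
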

\begin{proof}
For a graph $G$, we will denote $|V(G)|$ by $n$ and $|E(G)|$ by $m$.
For contradiction suppose there exists an algorithm for solving the \DHVD\ problem in time $2^{o(k)}\cdot n^{\mathcal{O}(1)}$. We show that we can solve \textsc{Vertex Cover} in time $2^{o(k)}\cdot n^{\mathcal{O}(1)}$. Let $(G, k)$ be an instance of \textsc{Vertex Cover} problem. We construct a graph $G'$ as follows. We replace every edge $uv$ of $G$ with two vertex disjoint paths of length $3$ between $u$ and $v$. Note that for every edge $uv$ in $G$ the two disjoint paths of length $3$ in $G'$ form an induced subgraph isomorphic to $C_6$. Moreover we have $|V(G')| = |V(G)|+4|E(G)|$. We claim that $G$ has a vertex set $S$ of size at most $k$ such that $G-S$ has no edges if and only if $G'$ has a vertex deletion set of size at most $k$ to a distance-hereditary graph. Suppose that $G$ has such a vertex cover $S$. It is easy to confirm that $G'-S$ is a  disjoint union of subdivisions of stars, which is distance hereditary. 

For the converse direction, suppose $G'$ has a distance-hereditary vertex deletion set $S$ of size at most $k$. 
Let us fix an arbitrary edge $uv$ in $G$. Note that no DH obstruction contains a pendant vertex. Hence we observe that if $H$ is a DH obstruction containing a vertex $t$ on a shortest $u-v$ path in $G'$, then $H$ contains both vertices $u$ and $v$ as well. Therefore, if $t\in S$, then also graphs $G'-(S\setminus\{t\}\cup \{u\})$ and $G'-(S\setminus\{t\}\cup \{v\})$ are distance-hereditary. Since the choice of the edge $uv$ was arbitrary, we can find a set $T$, such that $T\subseteq V(G)$, $|T|\le |S|$, and $G'-T$ is a distance-hereditary graph. Clearly for every edge $uv$ in $G$, $T$ contains $u$ or $v$, otherwise $G'-T$ contains an induced $C_6$. We conclude that $T$ is a vertex cover of $G$, which finishes the proof. 
\end{proof}

\subsection{Example Applications}
There is an established line of research studying the algorithmic applications of vertex deletion sets to specific graph classes~\cite{GajarskyHOORRVS13,EibenGanianSzeider15,EibenGanianSzeider15b,FellowsLMRS08}.
In this context, it is natural to ask whether Theorem~\ref{thm:main} allows the development of single-exponential algorithms for problems parameterized by the size of a vertex deletion set to distance-hereditary graphs. 

Clearly, any problem that is FPT when parameterized by clique-width (and rank-width) must also be FPT when parameterized by the size of a vertex deletion set to distance-hereditary graphs. However, the existence of a single-exponential FPT algorithm parameterized by clique-width does not immediately imply that the problem also admits a single-exponential FPT algorithm parameterized by our parameter, since the addition of $k$ vertices to a graph may increase clique-width by a factor of up to $2^k$~\cite{Gurski2016}. On the other hand, known FPT algorithms parameterized by rank-width usually do not have a single-exponential dependency on the parameter. As a consequence, one cannot obtain the following examples of single-exponential algorithms by simply solving these problems via known FPT algorithms parameterized by rank-width or clique-width.

\begin{lemma}
 \textsc{Vertex Cover} %
and \textsc{$3$-Coloring} admit a single-exponential FPT algorithm when parameteried by the size of a vertex deletion set to distance-hereditary graphs.
\end{lemma}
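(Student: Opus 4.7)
The plan is to reduce both problems to instances on distance-hereditary graphs by branching over the deletion set $S$. First I would invoke Theorem~\ref{thm:main} to compute a vertex deletion set $S$ of size at most $k$ to a distance-hereditary graph in time $\bigoh(37^k \cdot n^{\bigoh(1)})$; the goal is then, for each problem, to guess the behavior of an optimal solution on $S$ in $c^k$ ways for a small constant $c$, and to handle the distance-hereditary graph $G - S$ in polynomial time.

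For \textsc{Vertex Cover} I would enumerate all $2^k$ subsets $S_1 \subseteq S$ representing the intersection of the cover with $S$. A guess survives only if $S \setminus S_1$ is independent in $G[S]$, in which case every neighbor in $G - S$ of a vertex of $S \setminus S_1$ must also lie in the cover. The residual task is to compute a minimum vertex cover of the distance-hereditary graph $G - S$ subject to a prescribed subset of forced vertices, which is immediately equivalent to a standard vertex cover instance on an induced subgraph of $G - S$. Because distance-hereditary graphs are perfect, \textsc{Vertex Cover} is solvable in polynomial time on them (either via the Grötschel--Lovász--Schrijver machinery, or, more concretely, by dynamic programming on the canonical split decomposition of Theorem~\ref{thm:bouchet}). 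The total running time is $\bigoh((2^k + 37^k) \cdot n^{\bigoh(1)}) = \bigoh(37^k \cdot n^{\bigoh(1)})$.

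For \textsc{$3$-Coloring} I would enumerate all $3^k$ colorings of $S$. For each coloring that is proper on $G[S]$ and every $v \in V(G) \setminus S$, I define the list $L(v) \subseteq \{1,2,3\}$ of colors not used by any neighbor of $v$ in $S$; the question then reduces to \textsc{List $3$-Coloring} on the distance-hereditary graph $G - S$. This residual problem can be solved in polynomial time by dynamic programming on the canonical split decomposition from Theorem~\ref{thm:bouchet}: processing bags bottom-up, for each marked edge we store the constantly many patterns of which colors from $\{1,2,3\}$ could be realized at its two endpoints by a valid partial list coloring, combining adjacent bags using the alternating-path adjacency characterization of Lemma~\ref{lem:splitadj}. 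Combined with the cost of computing $S$, this gives a total running time of $\bigoh((3^k + 37^k) \cdot n^{\bigoh(1)}) = \bigoh(37^k \cdot n^{\bigoh(1)})$.

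The main technical step is the polynomial-time subroutine for \textsc{List $3$-Coloring} on distance-hereditary graphs. The crucial observation that makes the dynamic program go through is that each bag is either a star or a complete graph, so the number of distinct list-coloring patterns that need to be tracked at any marked vertex is bounded by a constant (in fact, at most $2^3 = 8$). Consequently the update at each bag takes constant time and the whole dynamic program runs in time linear in the size of the decomposition. Putting everything together yields the claimed single-exponential FPT algorithms for \textsc{Vertex Cover} and \textsc{$3$-Coloring} when parameterized by the size of a vertex deletion set to distance-hereditary graphs.
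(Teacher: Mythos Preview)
Your proposal is correct and follows essentially the same approach as the paper: compute the deletion set via Theorem~\ref{thm:main}, branch over the $2^k$ (resp.\ $3^k$) choices on $S$, and solve the residual problem on the distance-hereditary graph $G-S$ in polynomial time. The only difference is in the polynomial-time subroutine for \textsc{List $3$-Coloring}: the paper dispatches it as a fixed MSO$_1$ query via Courcelle's theorem on graphs of rank-width~$1$, whereas you sketch an explicit dynamic program on the canonical split decomposition. Your route is more concrete and self-contained (and yields an explicit polynomial bound), while the paper's route is a one-line black-box application; both are valid and lead to the same single-exponential bound.
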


\begin{proof}
For each of the presented problems, we always begin by invoking Theorem~\ref{thm:main} to compute a vertex deletion set $X$ to distance-hereditary graphs of size at most $k$.

In the case of \textsc{Vertex Cover}, we can apply standard branching algorithms to solve the problem. In particular, we begin by branching over the at most $2^k$ options of how $X$ intersects with a (hypothetical) solution; let $X_1$ be one such subset of $X$ and let $X_2=X\setminus X_1$. 
After branching we proceed by testing the validity of a branch (i.e., whether each edge with both endpoints in $X$ is covered by $X_1$). For each valid branch, we delete $X$ and the set $Z$ of all neighbors of $X_2$ in $G-X$. Next, we find a minimum vertex cover $C$ in the remaining distance-hereditary subgraph of $G$ in polynomial time. Finally, for each branch we compare the desired solution size with $|C\cup X_1\cup Z|$; clearly, a  graph is a YES-instance of \textsc{Vertex Cover} if and only if at least one selection of $X_1$ results in a value of $|C\cup X_1\cup Z|$ which is at most the desired solution size.

For \textsc{$3$-Coloring}, we also begin by branching over the at most $3^k$ $3$-colorings of $X$. For each such proper $3$-coloring of $X$, we construct an instance of \textsc{$3$-List Coloring} as follows: the input graph is $G-X$, and the list of admissible colors for each vertex $v$ contains all colors that are not used by a neighbor of $v$ in $X$. The \textsc{$3$-List Coloring} problem can be solved in polynomial time on distance-hereditary graphs: indeed, the problem can easily be reduced to the MSO$_1$ model checking problem over labeled graphs with (at most) $8$ labels. Since $G-X$ has rank-width at most $1$, the polynomial-time tractability of the problem follows for instance from Courcelle's Theorem~\cite{CourcelleMR00}. All that remains now is to test whether at least one of the considered $3^k$ branches give rise to a yes-instance of \textsc{$3$-List Coloring} on $G-X$.
\end{proof}

\section{Concluding Notes}
We conclude with a few remarks on why we believe that the presented algorithm is of high interest. First, it intrinsically exploits the properties guaranteed by distinct, seemingly unrelated characterizations of distance-hereditary graphs; this approach can likely be used to design or improve algorithms for other vertex deletion problems. Second, it uses highly nontrivial reduction rules which simplify canonical split decompositions, and an adaptation or extension of the presented rules could be highly relevant for other graph classes characterized by special canonical split decompositions, such as parity graphs~\cite{CiceroneS1999} or circle graphs~\cite{GSH1989}. Third, it is the first of its kind which targets a ``full'' class of graphs of bounded rank-width (contrasting previous results for specific subclasses of graphs of rank-width~$1$~\cite{HuffnerKMN2010, AgrawalKLS16,KimK2015, KanteKKP2015}).

It is worth noting that there remains a number of interesting open problems in this general area. Perhaps the most prominent one is the question of whether vertex deletion to graphs of rank-width $c$, for any constant $c$, admits a single-exponential fixed-parameter algorithm. Our algorithm represents the first steps in this general direction. 
Recently, Kim and the third author~\cite{KimK2016} announced a polynomial kernel for \textsc{Distance-Hereditary Vertex Deletion}.
The existence of a polynomial kernel or an approximation algorithm for such vertex deletion problems for $c>1$ remains open.

\section*{Acknowledgment}
The first and second authors were supported by the Austrian Science Fund (FWF, projects P26696 and W1255-N23). Robert Ganian is also affiliated with FI MU, Brno, Czech Republic. The third author was supported by ERC Starting Grant PARAMTIGHT (No. 280152) and also supported by the European Research Council (ERC) under the European Union's Horizon 2020 research and innovation programme (ERC consolidator grant DISTRUCT, agreement No. 648527). 

The authors would like to thank the anonymous reviewers for helpful suggestions. In particular, one reviewer suggested merging several reduction rules to one, and this makes it easier to understand the algorithm.
The third author would like to thank Eun Jung Kim and Sang-il Oum for initial discussions on this problem.

\section*{References}


\begin{thebibliography}{10}

\bibitem{AKK2014}
I.~Adler, M.~M. Kant\'e, and O.~Kwon.
\newblock Linear rank-width of distance-hereditary graphs {I}. {A}
  polynomial-time algorithm.
\newblock {\em Algorithmica}, 78(1):342--377, 2017.

\bibitem{AgrawalKLS16}
A.~Agrawal, S.~Kolay, D.~Lokshtanov, and S.~Saurabh.
\newblock A faster {FPT} algorithm and a smaller kernel for block graph vertex
  deletion.
\newblock In {\em {LATIN} 2016: Theoretical Informatics - 12th Latin American
  Symposium}, volume 9644 of {\em LNCS}, pages 1--13. Springer, 2016.

\bibitem{BM1986}
H.-J. Bandelt and H.~M. Mulder.
\newblock Distance-hereditary graphs.
\newblock {\em J. Comb. Theory Ser. {B}}, 41(2):182--208, 1986.

\bibitem{Bouchet1988a}
A.~Bouchet.
\newblock Transforming trees by successive local complementations.
\newblock {\em J. Graph Theory}, 12(2):195--207, 1988.

\bibitem{CaiJuedes03}
L.~Cai and D.~Juedes.
\newblock On the existence of subexponential parameterized algorithms.
\newblock {\em J. Comput. System Sci.}, 67(4):789--807, 2003.

\bibitem{CiceroneS1999}
S.~Cicerone and G.~Di~Stefano.
\newblock On the extension of bipartite to parity graphs.
\newblock {\em Discrete Appl. Math.}, 95(1-3):181--195, 1999.

\bibitem{CogisT2005}
O.~Cogis and E.~Thierry.
\newblock Computing maximum stable sets for distance-hereditary graphs.
\newblock {\em Discrete Optim.}, 2(2):185--188, 2005.

\bibitem{CourcelleMR2000}
B.~Courcelle, J.~A. Makowsky, and U.~Rotics.
\newblock Linear time solvable optimization problems on graphs of bounded
  clique-width.
\newblock {\em Theory Comput. Syst.}, 33(2):125--150, 2000.

\bibitem{CourcelleMR00}
B.~Courcelle, J.~A. Makowsky, and U.~Rotics.
\newblock Linear time solvable optimization problems on graphs of bounded
  clique-width.
\newblock {\em Theory Comput. Syst.}, 33(2):125--150, 2000.

\bibitem{Cunningham1982}
W.~H. Cunningham.
\newblock Decomposition of directed graphs.
\newblock {\em SIAM J. Algebraic Discrete Methods}, 3(2):214--228, 1982.

\bibitem{CunninghamE80}
W.~H. Cunningham and J.~Edmonds.
\newblock A combinatorial decomposition theory.
\newblock {\em Canad. J. Math.}, 32(3):734--765, 1980.

\bibitem{CyganFKLMPPS15}
M.~Cygan, F.~V. Fomin, L.~Kowalik, D.~Lokshtanov, D.~Marx, M.~Pilipczuk,
  M.~Pilipczuk, and S.~Saurabh.
\newblock {\em Parameterized Algorithms}.
\newblock Springer, 2015.

\bibitem{Dahlhaus00}
E.~Dahlhaus.
\newblock Parallel algorithms for hierarchical clustering and applications to
  split decomposition and parity graph recognition.
\newblock {\em J. Algorithms}, 36(2):205--240, 2000.

\bibitem{DowneyF13}
R.~G. Downey and M.~R. Fellows.
\newblock {\em Fundamentals of Parameterized Complexity}.
\newblock Texts in Computer Science. Springer, 2013.

\bibitem{EibenGanianSzeider15b}
E.~Eiben, R.~Ganian, and S.~Szeider.
\newblock Meta-kernelization using well-structured modulators.
\newblock In {\em 10th International Symposium on Parameterized and Exact
  Computation, {IPEC} 2015}, volume~43 of {\em LIPIcs}, pages 114--126. Schloss
  Dagstuhl - Leibniz-Zentrum fuer Informatik, 2015.

\bibitem{EibenGanianSzeider15}
E.~Eiben, R.~Ganian, and S.~Szeider.
\newblock Solving problems on graphs of high rank-width.
\newblock In {\em Algorithms and Data Structures - 14th International
  Symposium, {WADS} 2015}, volume 9214 of {\em LNCS}, pages 314--326. Springer,
  2015.

\bibitem{FellowsLMRS08}
M.~R. Fellows, D.~Lokshtanov, N.~Misra, F.~A. Rosamond, and S.~Saurabh.
\newblock Graph layout problems parameterized by vertex cover.
\newblock In {\em Algorithms and Computation, 19th International Symposium,
  {ISAAC} 2008}, volume 5369 of {\em LNCS}, pages 294--305. Springer, 2008.

\bibitem{FominLMS12}
F.~V. Fomin, D.~Lokshtanov, N.~Misra, and S.~Saurabh.
\newblock Planar {F}-{D}eletion: Approximation, {K}ernelization and {O}ptimal
  {FPT} {A}lgorithms.
\newblock In {\em 53rd Annual {IEEE} Symposium on Foundations of Computer
  Science, {FOCS} 2012}, pages 470--479. {IEEE} Computer Society, 2012.

\bibitem{GSH1989}
C.~P. Gabor, K.~J. Supowit, and W.~L. Hsu.
\newblock Recognizing circle graphs in polynomial time.
\newblock {\em J. Assoc. Comput. Mach.}, 36(3):435--473, 1989.

\bibitem{GajarskyHOORRVS13}
J.~Gajarsk{\'{y}}, P.~Hlin{\v{e}}n{\'y}, J.~Obdr\v{z}\'{a}lek, S.~Ordyniak,
  F.~Reidl, P.~Rossmanith, F.~S. Villaamil, and S.~Sikdar.
\newblock Kernelization using structural parameters on sparse graph classes.
\newblock In {\em Algorithms - {ESA} 2013 - 21st Annual European Symposium,
  Sophia Antipolis}, volume 8125 of {\em LNCS}, pages 529--540. Springer, 2013.

\bibitem{GanianH10}
R.~Ganian and P.~Hlin{\v{e}}n{\'y}.
\newblock On parse trees and {M}yhill-{N}erode-type tools for handling graphs
  of bounded rank-width.
\newblock {\em Discrete Appl. Math.}, 158(7):851--867, 2010.

\bibitem{GassnerH2008}
E.~Gassner and J.~Hatzl.
\newblock A parity domination problem in graphs with bounded treewidth and
  distance-hereditary graphs.
\newblock {\em Computing}, 82(2-3):171--187, 2008.

\bibitem{GP2012}
E.~Gioan and C.~Paul.
\newblock Split decomposition and graph-labelled trees: characterizations and
  fully dynamic algorithms for totally decomposable graphs.
\newblock {\em Discrete Appl. Math.}, 160(6):708--733, 2012.

\bibitem{Gurski2016}
F.~Gurski.
\newblock The behavior of clique-width under graph operations and graph
  transformations.
\newblock {\em Theory Comput. Syst.}, 60(2):346--376, 2017.

\bibitem{Heggernes2013}
P.~Heggernes, P.~van~'t Hof, B.~M.~P. Jansen, S.~Kratsch, and Y.~Villanger.
\newblock Parameterized complexity of vertex deletion into perfect graph
  classes.
\newblock {\em Theoret. Comput. Sci.}, 511:172--180, 2013.

\bibitem{howorka77}
E.~Howorka.
\newblock A characterization of distance-hereditary graphs.
\newblock {\em Quart. J. Math. Oxford Ser. (2)}, 28(112):417--420, 1977.

\bibitem{HsiehHH2006}
S.-Y. Hsieh, C.-W. Ho, T.-S. Hsu, and M.-T. Ko.
\newblock The {H}amiltonian problem on distance-hereditary graphs.
\newblock {\em Discrete Appl. Math.}, 154(3):508--524, 2006.

\bibitem{HuffnerKMN2010}
F.~H{\"u}ffner, C.~Komusiewicz, H.~Moser, and R.~Niedermeier.
\newblock Fixed-parameter algorithms for cluster vertex deletion.
\newblock {\em Theory Comput. Syst.}, 47(1):196--217, 2010.

\bibitem{HungC2005}
R.-W. Hung and M.-S. Chang.
\newblock Linear-time algorithms for the {H}amiltonian problems on
  distance-hereditary graphs.
\newblock {\em Theoret. Comput. Sci.}, 341(1-3):411--440, 2005.

\bibitem{RuoM2007}
R.-W. Hung and M.-S. Chang.
\newblock Finding a minimum path cover of a distance-hereditary graph in
  polynomial time.
\newblock {\em Discrete Appl. Math.}, 155(17):2242--2256, 2007.

\bibitem{ImpagliazzoRF2001}
R.~Impagliazzo, R.~Paturi, and F.~Zane.
\newblock Which problems have strongly exponential complexity?
\newblock {\em J. Comput. System Sci.}, 63(4):512--530, 2001.
\newblock Special issue on FOCS 98 (Palo Alto, CA).

\bibitem{KanteKKP2015}
M.~M. Kant\'e, E.~J. Kim, O.~Kwon, and C.~Paul.
\newblock An {FPT} {A}lgorithm and a {P}olynomial {K}ernel for {L}inear
  {R}ankwidth-1 {V}ertex {D}eletion.
\newblock {\em Algorithmica}, 79(1):66--95, 2017.

\bibitem{KimK2015}
E.~J. Kim and O.~Kwon.
\newblock A {P}olynomial {K}ernel for {B}lock {G}raph {D}eletion.
\newblock {\em Algorithmica}, 79(1):251--270, 2017.

\bibitem{KimK2016}
E.~J. Kim and O.~Kwon.
\newblock A polynomial kernel for distance-hereditary vertex deletion.
\newblock {\em Algorithms and Data Structures Symposium, {WADS} 2017, To
  Appear}, arxiv.org/abs/1610.07229, 2017.

\bibitem{KLPRRSS13}
E.~J. Kim, A.~Langer, C.~Paul, F.~Reidl, P.~Rossmanith, I.~Sau, and S.~Sikdar.
\newblock Linear kernels and single-exponential algorithms via protrusion
  decompositions.
\newblock {\em ACM Trans. Algorithms}, 12(2):Art. 21, 41, 2016.

\bibitem{NakanoUU2007}
S.-i. Nakano, R.~Uehara, and T.~Uno.
\newblock A new approach to graph recognition and applications to
  distance-hereditary graphs.
\newblock In {\em Theory and applications of models of computation}, volume
  4484 of {\em Lecture Notes in Comput. Sci.}, pages 115--127. Springer,
  Berlin, 2007.

\bibitem{Oum05}
S.~Oum.
\newblock Rank-width and vertex-minors.
\newblock {\em J. Comb. Theory, Ser. {B}}, 95(1):79--100, 2005.

\bibitem{OS2004}
S.~Oum and P.~Seymour.
\newblock Approximating clique-width and branch-width.
\newblock {\em J. Comb. Theory Ser. {B}}, 96(4):514--528, 2006.

\bibitem{ReedSV2004}
B.~Reed, K.~Smith, and A.~Vetta.
\newblock Finding odd cycle transversals.
\newblock {\em Oper. Res. Lett.}, 32(4):299--301, 2004.

\bibitem{RobertsonS1986}
N.~Robertson and P.~D. Seymour.
\newblock Graph minors. {V}. {E}xcluding a planar graph.
\newblock {\em J. Comb. Theory Ser. {B}}, 41(1):92--114, 1986.

\bibitem{MarcD2007}
M.~Tedder and D.~Corneil.
\newblock An optimal, edges-only fully dynamic algorithm for
  distance-hereditary graphs.
\newblock In {\em S{TACS} 2007}, volume 4393 of {\em Lecture Notes in Comput.
  Sci.}, pages 344--355. Springer, Berlin, 2007.

\end{thebibliography}
\end{document}